\documentclass{vldb}
\usepackage{amssymb}
\usepackage{amsmath}
\usepackage{color}
\usepackage{times}
\usepackage{cite}
\newtheorem{theorem}{Theorem}

\newtheorem{definition}{Definition}
\usepackage{algorithm}
\usepackage{algorithmic}
\usepackage{hyperref}
\usepackage{caption,subcaption} %for figures
\usepackage{enumitem} %for reducing space between items

%*********For comment******************************** 
\usepackage{mycomments}
  %% Comment out this line to hide comments 
%\def\editon{1}      %% Comment out this line to hide deletions and coloring of additions
%*********For comment******************************** 

\title{Walk, Not Wait: Faster Sampling Over Online Social Networks}
\numberofauthors{1}
\author
{
\alignauthor
Azade Nazi$^{\ddag}$,
Zhuojie Zhou$^{\dag\dag}$,
Saravanan Thirumuruganathan$^{\ddag}$,
Nan Zhang$^{\dag\dag}$, 
Gautam Das$^{\ddag}$
\\
\affaddr {
$^{\ddag}$University of Texas at Arlington;
$^{\dag\dag}$George Washington University
}
{\email
{
$^{\ddag}$\small{\{azade.nazi@mavs,~saravanan.thirumuruganathan@mavs,~gdas@cse\}.uta.edu},
$^{\dag\dag}$\small{\{rexzhou,nzhang10\}@gwu.edu}
}
}
}
\date{}

%%%%%%%%%%%%%%%%%%%%%%%%%%%%%%%%%%%%%%%%%%%%%%%%%%%%%%%%%
\makeatletter
\def\@copyrightspace{\relax} %removes copyright box
%reduces space between author and text
\def\@maketitle{\newpage
 \null
 \setbox\@acmtitlebox\vbox{%
\baselineskip 20pt
\vskip 2em                   % Vertical space above title.
   \begin{center}
    {\ttlfnt \@title\par}       % Title set in 18pt Helvetica (Arial) bold size.
    \vskip 1.5em                % Vertical space after title.
%This should be the subtitle.
{\subttlfnt \the\subtitletext\par}\vskip 1.25em%\fi
    {\baselineskip 16pt\aufnt   % each author set in \12 pt Arial, in a
     \lineskip .5em             % tabular environment
     \begin{tabular}[t]{c}\@author
     \end{tabular}\par}
    \vskip 1.5em               % Vertical space after author.
   \end{center}}
 \dimen0=\ht\@acmtitlebox
% \advance\dimen0 by -12.75pc\relax % comment by Marco Daniel
 \unvbox\@acmtitlebox
 \ifdim\dimen0<0.0pt\relax\vskip-\dimen0\fi}
\makeatother
%%%%%%%%%%%%%%%%%%%%%%%%%%%%%%%%%%%%%%%%%%%%%%%%%%%%%%%%%

\begin{document}
\maketitle

\abstract{
In this paper, we introduce a novel, general purpose, technique for faster sampling of nodes over an online social network. Specifically, unlike traditional random walk which wait for the convergence of sampling distribution to a predetermined target distribution - a waiting process that incurs a high query cost - we develop WALK-ESTIMATE, which starts with a much shorter random walk, and then proactively estimate the sampling probability for the node taken before using acceptance-rejection sampling to adjust the sampling probability to the predetermined target distribution. We present a novel backward random walk technique which provides provably unbiased estimations for the sampling probability, and demonstrate the superiority of WALK-ESTIMATE over traditional random walks through theoretical analysis and extensive experiments over real world online social networks.}

\section{Introduction}

Online social networks often feature a web interface that only allows local-neighborhood queries - i.e., given a user of the online social network as input, the system returns the immediate neighbors of the user. In this paper, we address the problem of enabling third-party analytics over an online social network through its restrictive web interface. As demonstrated by a wide range of existing services (e.g., Twitris, Toretter, AIDR), such third-party analytics applications benefit not only social network users, social scientists, but the entire society at large (e.g., through epidemic control).

\subsection{Problem of Existing Work}

The restrictive local-neighborhood-only access interface makes it extremely difficult for a third party to crawl all data from an online social network, as a complete crawl requires as many queries as the number of users in the social network. To address this challenge and enable analytics tasks such as aggregate estimation through the restrictive access interface, many existing studies resort to the {\em sampling} of users from the online social network. If we consider a social network as a graph, the idea here is to first draw a sample of nodes from the graph, and then generate (statistically accurate) aggregate estimations based on the sample nodes.

The nature of the interface limitation - i.e., allowing only local-neighborhood queries - makes {\em random walk} based Monte Carlo Markov Chain (MCMC) methods an ideal fit for the sampling of users from an online social network. Intuitively, a random walk starts from an arbitrary user, and then randomly moves to one of its neighbors selected randomly according to a pre-determined probability distribution (namely the {\em transition probability}). The movement continues for a number of steps, namely the ``{\em burn-in period}'' before the node being selected is taken as a sample.

A critical problem with the existing random walk techniques, however, is the long burn-in period it requires and, therefore, the significant query cost it incurs for drawing a sample. Since many online social networks limit the number of queries one can issue (e.g., from an IP address) within a period of time (e.g., Twitter allows , every 15 minutes, only 15 API requests to retrieve ids of a user's followers) the high query cost limits the sample size one can draw from the social network and, consequently, the accuracy of analytics tasks.

To understand why the burn-in period is required, an important observation is that, before a sample can be used for analytical purposes, we must know the {\em sampling distribution} - i.e., the probability for each node in the graph to be sampled - because without such knowledge, one might ``over-consider'' certain parts of the graph in the analytics tasks, leading to errors such as biased aggregate estimations. However, since a third party has no knowledge of even the global graph topology, it seems infeasible to compute the sampling distribution for a random walk. Fortunately, the property of MCMC methods ensures that, as a random walk grows longer, the sampling distribution becomes asymptotically close to a {\em stationary distribution} that can be computed from the design of transition probabilities alone. For example, with a simple random walk (featuring a uniform transition distribution - see Section~\ref{subsec:traditionalRW} for details), the stationary probability for a node to be sampled is always proportional to its degree, no matter how the global graph topology looks like.

It is also the availability of such a stationary distribution that leads to the mandate of a burn-in period. Note that, while the MCMC property ensures asymptotic convergence to the stationary distribution, the actual convergence process can be slow - and the length of burn-in required is essentially uncomputable without knowledge of the entire graph topology \cite{Levin:2008,Mohaisena}. Facing this problem, what the existing techniques can do is to either set a conservatively large burn-in period \cite{mislove2007measurement,Mohaisena}, or use one of the heuristic convergence monitors and ``wait'' for the sampling distribution to converge to its stationary value. In either case, the sampling process may require a large number of queries during the burn-in period.

\subsection{Our Idea: WALK-ESTIMATE}

In this paper, our objective is to significantly reduce the query cost of node sampling over an online social network by (nearly) eliminating the costly waiting process. Of course, as one can see from the above discussions, if we do not wait for the convergence to stationary distribution, we must somehow estimate the probability for our short walk to take a node as a sample (i.e., the node's {\em sampling probability}) before we can use the node as a sample. This is exactly what we do - i.e., we introduce a novel idea of having a (much) shorter, say $t$-step, walk, before taking a node $v$ as a sample candidate, but follow it up with a {\em proactive} process which estimates $v$'s sampling probability $p_t(v)$ - i.e., the probability for our walk to reach node $v$ at Step $t$, so that we can then use acceptance-rejection sampling to ``correct'' the sampling probability to the desired distribution. As we shall prove in the paper, even though the acceptance-rejection step introduces additional query cost, the savings from having a shorter walk in the first place far outweighs the additional cost, leading to a significantly more efficient sampling process.

Based on this idea, we develop Algorithm WALK-ESTIMATE. The algorithm takes as input a random walk based MCMC sampler, and produces samples according to the exact same distribution as the input sampler - i.e., the stationary distribution of the MCMC process. One can see that this design makes WALK-ESTIMATE transparent to the desired target distribution, making it a swap-in replacement for any random walk sampler being used (e.g., simple random walk \cite{Lovasz1993,Levin:2008}, Metropolis-Hastings random walk \cite{Lovasz1993,Levin:2008}.  As we shall demonstrate through theoretical analysis and experimental evaluation over real-world social networks, while the proactive probability-estimation process may consume a small number of queries, the significant savings from the shorter walk more than offset the additional consumption, and lead to a much more efficient sampling process over online social networks.

\subsection{Outline of Technical Results}

We now provide an overview of the main technical results in the design of WALK-ESTIMATE. The algorithm is enabled by two main components: WALK and ESTIMATE. The WALK component determines how many, say $t$, steps to (randomly) transit before taking a node $v$ as a candidate (for sampling), and then calls upon the ESTIMATE component for an estimation of the probability for the walk to reach $v$ after $t$ steps. Based on the estimated probability, WALK then performs acceptance-rejection sampling on $v$ to determine if it should be included in the final (output) sample.

For the WALK component, we start by developing IDEAL-WALK, an impractical sampler which makes two ideal-case assumptions: One is access to an oracle that precisely compute the $p_t(v)$, i.e., the probability for the walk to reach a node $v$ at Step $t$. The other is access to certain {\em global} topological information about the underlying graph - e.g., $|E|$, the total number of edges in the graph, $\mathcal{D}(G)$, the graph diameter, $\lambda$, the spectral gap of the transition matrix, and $d_{\max}$, the maximum degree of a node in the graph - so that IDEAL-WALK can determine the optimal number of steps $t$ to walk. We rigidly prove that no matter what the target distribution is (barring certain extreme cases, e.g., when the distribution is 1 on the starting node and 0 on all others), IDEAL-WALK {\em always} outperforms its corresponding traditional random walk algorithm. Further, it also produce samples with absolutely zero bias (while random walks often cannot, depending on the graph topology). We also demonstrate through analysis of multiple theoretical graph models the significance of such efficiency enhancements for the sampling process.

Of course, IDEAL-WALK makes two unrealistic assumptions, which we remove through the design of Algorithms WALK and ESTIMATE, respectively. Algorithm WALK removes the assumption of access to global parameters by requiring access to only one parameters (besides the local neighborhood of the current node): $\bar{\mathcal{D}}(G)$, i.e., an upper bound on the diameter of the graph - which is often easy to obtain (e.g., it is commonly believed that 8 to 10 is a safe bet for real-world online social networks \cite{backstrom2012four,mislove2007measurement}). 

Algorithm ESTIMATE, on the other hand, estimates $p_t(v)$, i.e., the probability for Algorithm WALK to sample node $v$ at Step $t$. To illustrate our main idea here, we start by developing UNBIASED-ESTIMATE, a simple algorithm which takes a {\em backward} random walk from Node $v$ for estimating $p_t(v)$. We rigidly prove the unbiasedness of the estimation produced by UNBIASED-ESTIMATE. Nonetheless, we also note its problem: a high estimation variance which grows rapidly with the number of backward steps one has to take for producing the estimation. Since the error of estimation is determined by both bias and variance, the high variance produced by UNBIASED-ESTIMATE introduces significant error in the estimation of $p_t(v)$.

To address the problem of UNBIASED-ESTIMATE, we introduce two main ideas for variance reduction in developing Algorithm ESTIMATE, our final algorithm for estimating $p_t(u)$: One is {\em initial crawling}, i.e., the crawl of the $h$-hop neighborhood (where $h$ is a small number like 2 or 3) of the starting node to reduce the number of backward steps required by Algorithm ESTIMATE, and the second is {\em weighted sampling}, i.e., to carefully design the transition matrix of the backward random walk process to reduce the variance of estimation. We shall demonstrate through experimental evaluation that Algorithm ESTIMATE significantly reduces the estimation variance for $p_t(u)$. Finally, we combine Algorithms WALK and ESTIMATE to produce Algorithm WALK-ESTIMATE.

\subsection{Summary of Contributions}

We make the following main contributions in this paper:

\begin{itemize}[noitemsep,nolistsep]
\item We propose a novel idea of WALK-ESTIMATE, a swap-in replacement for any random walk sampler which forgoes the long burn-in period and instead uses a proactive sampling probability estimation step to produce samples of a desired target distribution.
\item To demonstrate the superiority of our WALK step - i.e., performing a short random walk followed by acceptance-rejection sampling to reach the target distribution - we rigidly prove that, given a reasonable sample-bias requirement, no matter what the graph topology or target distribution is (barring certain extreme cases, e.g., when the distribution is 1 on the starting node and 0 on all others), a short random walk followed by acceptance-rejection sampling always outperforms its corresponding traditional random walk process.
\item For the ESTIMATE step, we introduce a novel UNBIASED-ESTIMATE algorithm which uses a small number of queries to produce a provably unbiased estimation of the sampling probability of a given node. In addition, we also propose two heuristics, initial crawling and weighted sampling, to reduce the variance (and consequently error) of an estimation.
\item Our contributions also include extensive experiments over real-world online social networks such as Google Plus, which confirm the significant improvement offer by our WALK-ESTIMATE algorithm over traditional random walks such as simple random walk and Metropolis-Hastings random walk.
\end{itemize}

\subsection{Paper Organization}

The rest of the paper is organized as follows. We discuss preliminaries in Section 2. Then, in Section 3, we present an overview of our WALK-ESTIMATE algorithm, and outline the key technical challenges for this design. In Sections 4 and 5, we develop the two main steps, WALK and ESTIMATE, respectively. We discuss in Section 6 two related issues: one is the application of our idea to another way of performing random walks, i.e., the ``one-long-run'' scheme. The other is the limitation of our techniques. We present the experimental results in Section 7, followed by a discussion of related work in Section 8 and the final remarks in Section 9.

\section{Preliminaries}
\label{sec:prelims}

\subsection{Graph Model}
\label{subsec:graphModel}

In this paper, we consider online social networks with the underlying topology of an undirected graphs $G \langle V, E\rangle$, where $V$ and $E$ are the sets of vertices and edges, respectively. Note that for online social networks which feature directed connections (e.g., Twitter), a common practice in the literature (e.g., \cite{kwak2010twitter}) is to reduce it to an undirected graph by defining two vertices $v_1, v_2 \in V$ to be connected in the undirected graph if and only if both $v_1 \to v_2$ and $v_2 \to v_1$ exist as directed connections. We use $|E|$ to denote the number of edges in the graph. For a given node $v \in V$, let $N(v)$ be the set of neighbors of $v$, and $d(v) = |N(v)|$ be its degree.

The web interface of the online social network exposes a restricted access interface which only allows {\em local neighborhood queries}. That is, the interface takes as input a node $v \in V$, and outputs $N(v)$, the set of $v$'s neighbors. The objective of sampling, as mentioned in the introduction, is to generate a sample of $V$ (according to a pre-determined sampling distribution) by issuing as few queries through the restrictive access interface as possible.

\subsection{Traditional Random Walks}
\label{subsec:traditionalRW}

\subsubsection{Overview}
A random walk is a Markov Chain Monte Carlo (MCMC) method on the above-described graph $G$. Intuitively, all random walks share a common scheme: it starts from a {\em starting node} $v_0 \in V$. At each step, given the current node it resides on, say $v_i$ for the $i$-th step, the random walk randomly chooses its next step from $v_i$'s immediate neighbors and $v_i$ itself (i.e., self-loop might be allowed) according to a pre-determined distribution, and then transits to the chosen node (one can see that $v_{i+1} \in \{v_i\} \cup N(v_i)$). We refer to this distribution over $N(v_i)$ as the {\em transit design}. As discussed below in examples of random walks, existing random walk designs often choose either a fixed distribution (e.g., uniform distribution), or a distribution determined by certain measurable attributes (e.g., degree) for nodes in $N(v_i)$.  One can see that the transition design can be captured by a $|V| \times |V|$ transition matrix $T$, in which $T_{ij}$ is the probability for the random walk to transit to node $v_j$ if its current state is $v_i$. 

Let $p_t(u)$ be the {\em sampling probability} for a node $u \in V$ to be taken at Step $t$ of the random walk (i.e., $p_t(u) = \Pr\{u = v_t\}$). A special property of random walk, which makes it suitable for our purpose of node sampling, is that as long as the graph $G$ is irreducible \cite{Gilks99}, $p_t(u)$ always converges to a fixed distribution when $t \to \infty$, no matter what the starting node $v_0$ is for the random walk. 

\subsubsection{Examples}

There are many different types of random walks  according to different designs of transition matrix $T$. We use {\em Simple Random Walk (SRW)} and {\em Metropolis-Hastings Random Walk (MHRW)} because of their popularity in the study of sampling online social networks.

\begin{definition} {\bf (Simple Random Walk (SRW))}. Given graph $G \langle V, E\rangle$, and a current node $u\in V$, a random walk is called Simple Random Walk if it uniformly at random chooses a neighboring node $v$ from $u$'s neighbors as the next step. The transition matrix $T$ is
\begin{equation}
\label{SRW-transition}
    T(u, v)=\left\{
    \begin{array}{ll}
        1/|N(u)|\,\,\,&\text{if $v \in N(u)$,}\\
        0\,\,\,&\text{otherwise.}
    \end{array}
    \right.
\end{equation} 
\end{definition}

\begin{definition} {\bf (Metropolis-Hastings Random Walk (MHRW))}. Given graph $G \langle V, E\rangle$, and a current node $u\in V$, a random walk is called Metropolis-Hastings Random Walk if it chooses a neighboring node $v$ according to the following transition matrix $T$:
\begin{equation}
\label{MH-transition}
T(u, v) =
 \begin{cases}
  \frac{1}{|N(u)|}.\min\{1,\frac{|N(u)|}{|N(v)|}\} & \text{if } v \in N(u) \\
   1 - \sum _{w \in N(u)}T(u,w)        & \text{if } u=v\\
    0 & \text{otherwise }
  \end{cases}
\end{equation}
We note that we set the target stationary distribution as uniform distribution for MHRW.
\end{definition}

\subsubsection{Burn-In Period}

With the traditional design of a random walk, its performance is determined by how fast the random walk converges to its stationary distribution, because only after so can the random walk algorithm takes a node as a sample. To capture this performance measure, {\em burn-in period} is defined as the number of steps it takes for a random walk to converge to its stationary distribution, as shown in the following definition.

\begin{definition} {\bf (Relative Point-wise Distance)}. Given graph $G \langle V, E\rangle$, \madd{and positive number of steps t}\del{and after $t$ steps}, Relative Point-wise Distance is defined as the following distance between the stationary distribution and the probability distribution for nodes to be taken at Step $t$:
    \begin{align}
    \bigtriangleup(t)=\mathop{\max}_{u, v\in V, v\in N(u)}\left\{\frac{|T_{uv}^{t}-\pi(v)|}{\pi(v)}\right\} 
    \end{align}
where \del{ $T^t(u,v)$} \madd{$T_{uv}^{t}$} is the element of $T^t$ (transition matrix $T$ to the power of $t$) with indices $u$ and $v$\madd{, and $\pi$ is the stationary distribution of the random walk}~\cite{Jerrum:1988:CRM:62212.62234}. The {\bf burn-in period} of a random walk is the minimum value of $t$ such that $\bigtriangleup(t) \leq \epsilon$ where $\epsilon$ is a pre-determined threshold on relative point-wise distance.
\end{definition} 

In practice, a popular technique for checking (on-the-fly) whether a random walk has reached its stationary distribution is called the {\em convergence monitors} (i.e. MCMC convergence diagnostics) \cite{Levin:2008}. For example, the Geweke method (summarized in \cite{Cowles96markovchain}) considers two ``windows'' of a random walk with length $l$: Window A is formed by the first 10\% steps, and Window B is formed by the last 50\%. According to \cite{Das:2013:FRW:2510649.2511334}, if the random walk indeed converges to the stationary distribution after burn-in, then the two windows should be statistically indistinguishable. Let
\begin{align}
Z = \left|\frac{\bar{\theta}_A -\bar{\theta}_B }{\sqrt{\hat{S}_\theta^A + \hat{S}_\theta^B }}\right|,
\end{align}
where $\theta$ is the attribute that can be retrieved from nodes (a typical one is the degree of a node), and $\bar{\theta}_A, \bar{\theta}_B$ are means of $\theta$ for all nodes in Windows $A$ and $B$, respectively, and $S_\theta^A$ and $S_\theta^B$ are their corresponding variances. One can see that $Z \to 0$ when the random walk converges to the stationary distribution. We use Geweke method as the convergence monitor in the experiments, and we set the threshold to be $Z \leq 0.1$ by default, while also performing tests with the threshold $Z \leq 0.01$.

A property of the graph which has been proven to be strongly correlated with the burn-in period length is the {\em spectral gap} of the transition matrix $T$. We denote the spectral gap as $\lambda = 1 - s_2$ where $s_2$ is the second largest eigenvalue of $T$.

\subsection{Acceptance-Rejection Sampling}
\label{subsec:acceptanceRejectionSampling}

Acceptance-rejection sampling (hereafter referred to as rejection sampling) is a technique we use to ``correct'' a sampling distribution to the desired target distribution. To understand how, consider the case where our algorithm samples a node with probability $p(u)$, while the desired distribution assigns probability $q(u)$ to node $u$. In order to make the correction, we take as input a node $u$ sampled by our algorithm, and ``accept'' it as a real sample with probability
\begin{align}
\beta(u) = \frac{q(u)}{p(u)} \cdot \min_{v \in V} \frac{p(v)}{q(v)},
\end{align}
because after such correction, the probability distribution of node $u$ in the final sample is
\begin{align}
\frac{p(u) \cdot \beta(u)}{\sum_{u \in V} (p(u) \cdot \beta(u))} = q(u),
\end{align}
which conforms to the desired target distribution.

A practical challenge one often faces when applying rejection sampling is the difficulty of computing $\min_v p(v)/q(v)$, especially when the graph topology is not known beforehand. \mdel{As we shall further elaborate in Section~\ref{subsec:idealWalk}, e}Even when a theoretical lower bound on $\min_v p(v)/q(v)$ can be computed, its value is often too small to support the practical usage of rejection sampling. A common practice to address this challenge is to replace $\min_v p(v)/q(v)$ with a manual threshold (e.g., \cite{Dasgupta:2007,Geyer1992}). Note that a large threshold might introduce bias to the sample - e.g., a threshold greater than $\min_v p(v)/q(v)$ would make the computed $\beta(u) > 1$ for certain nodes, essentially under-sampling them in the final sample. Nonetheless, such a large threshold also improves the efficiency of sampling, as the rejection probability will be lower. \madd{Of course, such an approximation can be made more conservatively (i.e., lower) to reduce bias, or more aggressively (i.e., higher) to make the sampling process more efficient.}
\del{As we shall further discuss in Section~\ref{subsec:idealWalk}, the threshold should be set to make a proper tradeoff between efficiency and sample bias according to the desired application.}

\subsection{Performance Measures}
\label{subsec:perfMeasures}

There are two important performance measures for a sampling algorithm over an online social network. One is its query cost - i.e., the number of nodes it has to access in order to obtain a predetermined number of samples. Note that query cost is the key efficiency measures here because many website enforce a query rate limit on the number of nodes one can access from an IP address or API account for a given time period (e.g., a day). As such, a random walk based sampling algorithm has to minimize the number of steps it takes to generate samples.

The other key performance measures here is the sample bias - i.e., the distance between the actual sampling distribution (i.e., the probability distribution according to which each node is drawn as a sample) and a predetermined target distribution. Note that while the uniform distribution is often used as the target distribution to ensure equal chance for all nodes, the target distribution can also have other values - e.g., proportional to the node degree (when simple random walk is used).

Another important issue with the definition of sample bias is the distance measure being used. Traditionally (e.g., in the studies of burn-in period and convergence monitoring for bounding the sampling bias), a popular measure is the vector norm for the difference between the two probability distribution vectors. For example, the {\em variation distance} measure the $\ell_\infty$-norm of the difference vector - i.e., the maximum absolute difference for the sampling probability of a node. While this is a reasonable measure for theoretical analysis, it can be difficult to use for experimental evaluations, as obtain the actual sampling probability for every node requires running the sampling algorithm repeatedly for extremely large number of times, especially when the underlying graph is large. To address the problem, in this paper, we use the vector norm measure for theoretical analysis, while using a different measure for experiments: specifically, we measure the error while using the obtained sample to estimate AVG aggregates such as the average degree of all nodes in the graph. We shall further elaborate the design of this experimental measure and the various AVG aggregates we use in the experimental evaluation section.

\section{Overview of WALK-ESTIMATE}

In this section, we provide an overview of WALK-ESTIMATE, our main contribution of the paper. Specifically, we first describe the input and output of the algorithm, followed by a brief description of our key ideas and an outline of the main technical challenges for WALK and ESTIMATE, respectively.

\vspace{1mm}
\noindent{\bf Input \& Output:} The design objective of WALK-ESTIMATE is to achieve universal speed-up for MCMC sampling (random walks) over online social networks regardless of their target sampling distribution (and correspondingly, transition design). To achieve this goal, WALK-ESTIMATE takes as input (1) the transition design of an MCMC sampling algorithm, and (2) the desired sample size $h$. The output consists of $h$ samples taken according to the exact same target distribution as the input MCMC algorithm (subject to minimal sampling bias, as we shall further elaborate in latter sections). As discussed in Section 2, during this sampling process, WALK-ESTIMATE aims to minimize the query cost.

\vspace{1mm}
\noindent{\bf Key Ideas:} Recall from the introduction that our main novelty here is to forgo the long ``wait'' (i.e., burn-in period) required by traditional random walks, and instead WALK an optimal (much smaller) number of steps (often only a few steps longer than the graph diameter - see below for details). Of course, having a drastically shorter walk also makes our sampling distribution different from the target one we have to achieve at the end. To address this problem, our WALK calls upon the ESTIMATE component to estimate the probability for a node to be sampled by a (now much shorter) walk. Not that such estimated probability allows us to perform acceptance-rejection sampling \cite{Lovasz1993} over the nodes sampled in WALK, which eventually leads to samples taken according to the target distribution.

\vspace{1mm}
\noindent{\bf Technical Challenges for WALK:} One can see from the above description that the design of the two components face different challenges: For ``WALK'', the main challenge is how to properly determine the number of steps to walk. Clearly, the walk length must be at least the diameter of the graph in order to ensure a positive sampling probability for each node. On the other hand, an overly long walk not only diminishes the saving of queries, but might indeed cost even more queries than traditional random walks when the cost of ESTIMATE is taken into account. We shall address this challenge in Section~\ref{sec:walk} - and as we shall further elaborate there, fortunately, for real-world social networks, there is usually a wide range of walk lengths with which the WALK step can have a significant saving of query cost even after rejection sampling.

\vspace{1mm}
\noindent{\bf Technical Challenges for ESTIMATE:} For ESTIMATE, the key challenge is how to enable an accurate estimation for the sampling probability of a node without incurring a large query cost. Note that, after we repeatedly run WALK to generate (say 100) samples, there may be nodes sampled multiple times by WALK for which we can directly estimate their sampling probability (as their relative frequency within the collected sample). Nonetheless, for the vast majority of nodes which are sampled only once (almost always the case when the graph being sampled is large), it is unclear how one can estimate their sampling probabilities. We shall address this challenge in Section~\ref{sec:estimate} and show that (1) there is a surprisingly simple algorithm which enables a completely unbiased estimation of the sampling probability and consumes only a few extra queries, and (2) there are two effective heuristics which reduce the estimation variance even further, leading to more accurate estimations.

In the next two sections, we shall develop or techniques for the two components, WALK and ESTIMATE, respectively. The combination of them forms Algorithm WALK-ESTIMATE which, as we demonstrate in the extensive experimental results in Section~\ref{sec:exp}, produces higher quality (i.e., lower bias) samples than the traditional random walk algorithms while consuming fewer queries.

\section{WALK}
\label{sec:walk} 
We start with developing Algorithm WALK which significantly improves the efficiency of sampling by having a much shorter random walk followed by a rejection sampling process. Note that, for the ease of discussions, we separate out the discussion of sampling-probability estimation to the ESTIMATE component discussed in the next section - i.e., Algorithm WALK calls upon Algorithm ESTIMATE as a subroutine.

In this section, we first illustrate the key rationale behind our design with an ideal-case algorithm, IDEAL-WALK, and then present theoretical analysis which shows that a shorter walk followed by an acceptance-rejection procedure can almost always outperform the traditional random walk, no matter what the starting point is or the graph topology looks like. To study how much improvement a short walk can offer, we describe case studies with the underlying graph generated from various theoretical graph models. Finally, we conclude this section with the practical design of Algorithm WALK.

\subsection{IDEAL-WALK: Main Idea and Analysis}
\label{subsec:idealWalk}
The key rationale behind our idea of performing a short walk followed by acceptance-rejection sampling can be stated as follows. Recall from Section 2 that the long walk is required by traditional random walks to reduce the ``distance'' between its sampling distribution and the target, stationary, distribution - a distance often measured by the difference (e.g., $\ell_\infty$-norm) between the two probability vectors.

Consider how such a difference changes as the walk becomes longer. When the walk first starts, the sampling distribution is extremely skewed - i.e., $p_1(v) = 1$ on one node (the starting one) and 0 on all others - leading to an extremely large distance. As the walk proceeds, the distance decreases quickly - for example, as long as the walk length exceeds the graph diameter, all values in the sampling probability vector become positive\footnote{Note that here we assume each node has a nonzero (can be arbitrarily small) probability to transit to itself, to eliminate trivial cases where the graph is not irreducible.}, while the maximum value in the vector tends to decrease exponentially with the (initial few) steps taken - leading to a sharp decrease of the distance.

Nonetheless, it is important to note that the speed of reduction on the distance becomes {\em much slower} as the random walk grows longer. A simple evidence is the asymptotic nature of burn-in as discussed in Section 2 - which shows that, for some graphs, the ultimate reduction to zero distance never completes with a finite number of steps. Figure~\ref{fig:barabasi_max_min} demonstrates a concrete example for a random scale free network with $31$ nodes generated by the Barabasi-Albert model~\cite{Barabasi:1999}, where number of edges to attach from a new node to existing nodes is $3$. One can see from the figure that the speed of reduction declines sharply once the random walk grows longer than the graph diameter. In summary, one can observe the following ``behavior pattern'' of traditional random walks: to achieve a preset goal of shrinking the distance measure below a threshold, the random walk makes significant progress in the first few steps. Nonetheless, the ``benefit-cost ratio'' diminishes quickly as the random walk continues. As a result, a random walk might require a very long burn-in period to achieve the preset distance threshold.

\begin{figure}[ht]
	\centering
	\includegraphics[scale=0.3]{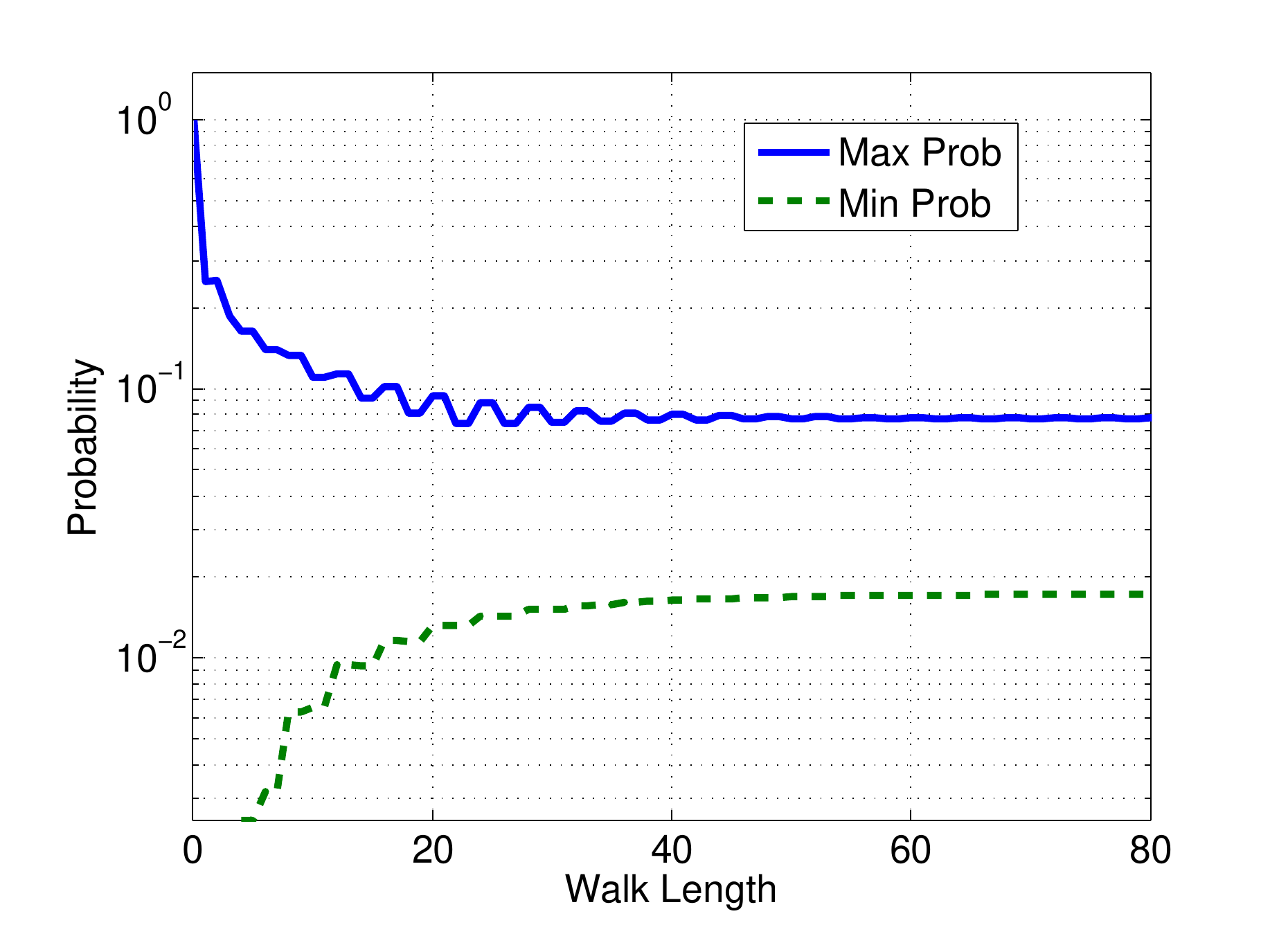}
	\vspace{-0.2in}
	\caption{Minimum and maximum probabilities vs walk length}
	\vspace{-0.1in}
	\label{fig:barabasi_max_min}
\end{figure}

Standing in sharp contrast to the above described behavior pattern is the performance of using acceptance-rejection sampling to achieve the pre-determined target distribution (instead of waiting for convergence). Interestingly, applying rejection sampling at the beginning of a random walk is often extremely costly - or even outright infeasible. For example, no rejection sampling can correct to a uniform target distribution before the walk is at least as long as the graph diameter. On the other hand, as the walk becomes longer, the cost of applying rejection sampling to reach the target distribution becomes much smaller - as we shall demonstrate as follows.

Consider an example where the target distribution is uniform. Note from acceptance-rejection sampling in Section~\ref{subsec:acceptanceRejectionSampling} that, in this case, the cost of rejection sampling is simply determined by (to be exact, inversely proportional to) the minimum value in the input sampling distribution - as the probability of accepting a sample is exactly the minimum probability multiplied by the number of nodes in the graph. As discussed above, this minimum probability grows from 0 at the very beginning to a positive value when the walk reaches the diameter, and often increases rapidly at the initial stage of random walk (see again Figure~\ref{fig:barabasi_max_min}). Correspondingly, the cost of rejection sampling drops significantly with a longer random walk.

One can observe from the above discussion an interesting distinction between two (competing) methods, (a) wait for the sampling distribution to converge to the stationary one, and (b) taking the current sampling distribution and directly ``correct'' it through rejection sampling: These two methods are better applied at different stages of a random walk process. Specifically, at the very beginning, method (b) is extremely costly or outright infeasible - so we should follow method (a) and walk longer for the sampling distribution to grow closer to the target vector. Nonetheless, after a certain number of steps, the direct correction (i.e., method (b)) becomes the better option because of the slower and slower convergence speed. Therefore, we should stop waiting for further convergence, and instead use rejection sampling to directly reach the desired distribution.

Of course, the above discussions leaves an important question unanswered: Given a reasonable threshold on the distance (between achieved sampling distribution and the desired stationary one), is there always a tipping point where we switch for waiting to correction? Note that the reason why the threshold value is important here can be observed from the extreme cases: when the threshold is extremely large, there is no need to switch because even the initial (one 1 and all other 0) distribution already satisfies the threshold. On the other hand, when the threshold trends to 0, as we discussed above, there are graphs for which the convergence length tends to infinity - i.e., it is always better to switch to rejection sampling as long as it has a finite cost. One can see from the extreme cases that whether switching to rejection sampling is effective in practice depends on whether the switch is necessary for {\em reasonable} thresholds that are just small enough to support real-world applications over the samples taken from online social networks. To this end, we have the following theorem.

\begin{theorem} \label{thm:mth}
Given an input random walk which has a transition design with spectral gap $\lambda$, to guarantee an $\ell_\infty$-variation distance of $\Delta$ between the sampling and target distributions, the expected query cost per sample of IDEAL-WALK which performs the random walk for $t_\mathrm{opt}$ steps followed by rejection sampling, where
\begin{align}
t_\mathrm{opt} = \frac{- \log (-\frac{1}{\Gamma} \cdot W(-\frac{\Gamma}{e d_{\max}}) \cdot d_{\max})}{\log (1-\lambda)}, \label{equ:gentva1}
\end{align}
where $d_{\max}$ is the maximum degree of all nodes in the graph and $W$ is the Lambert $W$-function, is always smaller than that of the the input random walk as long as $0 < \Delta < \Gamma$. Specifically, the ratio between the query cost per sample of IDEAL-WALK and the input random walk is at most
\begin{align}
& \frac{\Gamma \cdot t_\mathrm{opt} - t_\mathrm{opt} \cdot \Delta}{\Gamma - (1-\lambda)^{t_\mathrm{opt}} \cdot d_{\max}} \nonumber\\
\leq & \frac{- \log (-\frac{1}{\Gamma} \cdot W(-\frac{\Gamma}{e d_{\max}}) \cdot d_{\max})}{\log(\Delta/d_{\max})} \cdot \frac{\Gamma - \Delta}{\Gamma + \frac{\Gamma}{W(-\frac{\Gamma}{e d_{\max}})}}.
\end{align}
\end{theorem}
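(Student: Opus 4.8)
The plan is to write down the per-sample query cost of each of the two samplers (an exact value for the input walk, a tight upper bound for IDEAL-WALK), and then divide. For the input random walk the per-sample cost is its burn-in length $t_{\mathrm{burn}}$, i.e., the smallest $t$ with $\bigtriangleup(t)\le\Delta$; feeding in the standard spectral-gap bound on the relative point-wise distance (which for this graph takes the form $\bigtriangleup(t)\le(1-\lambda)^{t}d_{\max}/\Gamma$) gives $t_{\mathrm{burn}}$ in closed form, of order $\log(\cdot)/\log(1-\lambda)$. For IDEAL-WALK, which walks $t$ steps and then runs rejection sampling, the per-sample cost is $t$ divided by the acceptance probability; by the analysis in Section~\ref{subsec:acceptanceRejectionSampling} that acceptance probability is exactly $\min_{v}p_{t}(v)/\pi(v)$, and from $p_{t}(v)/\pi(v)\ge 1-\bigtriangleup(t)\ge 1-(1-\lambda)^{t}d_{\max}/\Gamma$ one gets
\begin{align}
\mathrm{cost}_{\mathrm{IW}}(t)\;\le\;\frac{\Gamma\,t}{\Gamma-(1-\lambda)^{t}d_{\max}}.
\end{align}
The same inequality shows that whenever the right-hand side is finite and positive, every node has nonzero sampling probability (equivalently, $t$ exceeds the graph diameter), so the rejection step, and hence IDEAL-WALK itself, is well defined.

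Next I would optimize the walk length. Differentiating $t\mapsto t/(\Gamma-(1-\lambda)^{t}d_{\max})$ and equating to zero yields a transcendental equation $e^{u}(1-u)=\Gamma/d_{\max}$ in $u=t\log(1-\lambda)$; the substitution $w=u-1$ turns it into $we^{w}=-\Gamma/(e\,d_{\max})$, so $w=W(-\Gamma/(e\,d_{\max}))$. Because the minimizer must be a positive number of steps, the relevant root lies on the $W_{-1}$ branch, where $w<-1$; unwinding the substitutions and using the identity $1+W(x)=\log\!\big(ex/W(x)\big)$ recovers precisely the stated expression for $t_{\mathrm{opt}}$ and, as a by-product, the identity $(1-\lambda)^{t_{\mathrm{opt}}}d_{\max}=-\Gamma/W(-\Gamma/(e\,d_{\max}))$, which is exactly what makes $\Gamma-(1-\lambda)^{t_{\mathrm{opt}}}d_{\max}$ collapse to $\Gamma+\Gamma/W(\cdot)$ in the final bound.

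Finally I assemble the ratio $\mathrm{cost}_{\mathrm{IW}}(t_{\mathrm{opt}})/t_{\mathrm{burn}}$. Substituting the upper bound on the numerator and the estimate of $t_{\mathrm{burn}}$, and using $\Delta<\Gamma$ to keep the quotient well behaved, gives the first displayed bound $\big(\Gamma t_{\mathrm{opt}}-t_{\mathrm{opt}}\Delta\big)\big/\big(\Gamma-(1-\lambda)^{t_{\mathrm{opt}}}d_{\max}\big)$; then plugging in the closed form of $t_{\mathrm{opt}}$ and the $(1-\lambda)^{t_{\mathrm{opt}}}d_{\max}$ identity and simplifying the logarithms yields the second, fully explicit, bound. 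The ``always cheaper'' assertion then reduces to checking that this explicit quantity is strictly below $1$ exactly on $0<\Delta<\Gamma$ — as $\Delta\to0^{+}$ the (charged) burn-in of the input walk diverges while $\mathrm{cost}_{\mathrm{IW}}$ stays bounded, so any finite-cost scheme wins, and it is the constraint $\Delta<\Gamma$ that guarantees the burn-in is still long enough for the short walk to beat the rejection overhead.

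I expect the crux to be the optimization together with the Lambert-$W$ bookkeeping: recasting $\frac{d}{dt}\big[t/(\Gamma-(1-\lambda)^{t}d_{\max})\big]=0$ into the canonical $we^{w}=\text{const}$ form, justifying that the root sits on the $W_{-1}$ branch (so that $t_{\mathrm{opt}}>0$ and $(1-\lambda)^{t_{\mathrm{opt}}}d_{\max}<\Gamma$, i.e., the denominator stays positive), and then carrying the signs through the log-manipulations correctly — $\log(1-\lambda)<0$, $W_{-1}(\cdot)<-1$ and $\log(\Delta/d_{\max})<0$ all appear, and they must combine to leave a positive ratio below $1$. A secondary but genuinely load-bearing step is pinning down the exact constant in the relative point-wise distance bound $\bigtriangleup(t)\le(1-\lambda)^{t}d_{\max}/\Gamma$ and the matching estimate of $t_{\mathrm{burn}}$, since the clean cancellations in the final ratio depend on applying the very same bound on both sides of the comparison.
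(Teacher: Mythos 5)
Your overall architecture matches the paper's proof: bound the input walk's cost by $c_{\mathrm{RW}}=\log(\Delta/d_{\max})/\log(1-\lambda)$ via the spectral-gap mixing bound, bound IDEAL-WALK's cost by walk length over acceptance probability, minimize over $t$ via Lambert $W$, and take the ratio. The gap is in the acceptance-probability step. The paper grants IDEAL-WALK the \emph{same} $\ell_\infty$ tolerance $\Delta$ that the input walk is allowed, so its rejection step only needs to correct to within $\Delta$ of the target; this inflates the acceptance probability to $\omega \ge \bigl(\Gamma-(1-\lambda)^t d_{\max}\bigr)/(\Gamma-\Delta)$ and gives the cost bound $f(t)=t(\Gamma-\Delta)/\bigl(\Gamma-(1-\lambda)^t d_{\max}\bigr)$. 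Your version demands exact correction, giving acceptance $\ge 1-(1-\lambda)^t d_{\max}/\Gamma$ and cost $\Gamma t/\bigl(\Gamma-(1-\lambda)^t d_{\max}\bigr)$, which is larger by the factor $\Gamma/(\Gamma-\Delta)$. Your assertion that assembling the ratio ``gives the first displayed bound'' $t_{\mathrm{opt}}(\Gamma-\Delta)/\bigl(\Gamma-(1-\lambda)^{t_{\mathrm{opt}}}d_{\max}\bigr)$ therefore does not follow from the cost formula you derived; the $(\Gamma-\Delta)$ factor has to enter at the acceptance-probability stage, not be conjured during the final substitution.

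This is not merely a constant: it is what makes the ``always smaller'' claim work on the whole range $0<\Delta<\Gamma$. The paper's argument is that at $t=c_{\mathrm{RW}}$ one has $(1-\lambda)^t d_{\max}=\Delta$, so $f(c_{\mathrm{RW}})=c_{\mathrm{RW}}$ \emph{exactly}, and hence $\min_t f(t)\le c_{\mathrm{RW}}$ for every admissible $\Delta$ (IDEAL-WALK run for $c_{\mathrm{RW}}$ steps degenerates to the input walk). With your cost function the value at $t=c_{\mathrm{RW}}$ is $\Gamma c_{\mathrm{RW}}/(\Gamma-\Delta)>c_{\mathrm{RW}}$, so this comparison fails, and your substitute --- a $\Delta\to 0^+$ limiting argument --- only establishes superiority for sufficiently small $\Delta$, not on all of $(0,\Gamma)$. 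Relatedly, your two mixing-time estimates are not applied consistently (the burn-in estimate uses $\bigtriangleup(t)\le(1-\lambda)^t d_{\max}/\Gamma$, the acceptance bound effectively uses $(1-\lambda)^t d_{\max}$ without the $\Gamma$), which breaks exactly the cancellation you correctly identify as load-bearing. The Lambert-$W$ optimization itself is unaffected, since the two cost functions differ by a $t$-independent factor and share the minimizer $t_{\mathrm{opt}}$; indeed your identification of the $W_{-1}$ branch and the identity $(1-\lambda)^{t_{\mathrm{opt}}}d_{\max}=-\Gamma/W(-\Gamma/(e\,d_{\max}))$ is more careful than the paper's own treatment.
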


\begin{proof}
According to the $\ell_\infty$-norm mixing time of a Markov chain, if the random walk starts from $v$, we have a tight bound (tight in the worst-case scenario \cite{Zhu:2013})
\begin{align}
|p_t(u) - p(u)| \leq (1 - \lambda)^t \cdot \deg(v)
\end{align}

As such, to guarantee an $\ell_\infty$ variation distance of $\Delta$, the probability for rejection sampling to accept a sample taken after a walk of $t$ steps is at least
\begin{align}
\omega &\geq \frac{\Gamma - (1 - \lambda)^t \cdot \deg(v)}{\Gamma- \Delta}\\
&= \frac{\Gamma - (1 - \lambda)^t \cdot d_{\max}}{\Gamma- \Delta}.
\end{align}
Thus, in order to guarantee an $\ell_\infty$ variation distance of $\Delta$ in the worst-case scenario, the expected query cost per sample achieved by IDEAL-WALK is at most
\begin{align}
c \leq \min_{t: t > 0} \frac{t \cdot (\Gamma- \Delta)}{\Gamma - (1 - \lambda)^t \cdot d_{\max}}; \label{equ:genopt}
\end{align}
while the expected query cost per sample achieved by the input random walk is
\begin{align}
c_{\mathrm{RW}} = \frac{\log(\Delta/d_{\max})}{\log(1 - \lambda)}. \label{equ:genrw}
\end{align}

Note that for any $\Delta$, we always have $c \leq c_\mathrm{RW}$ because when $t = c_\mathrm{RW}$, we have
\begin{align}
\frac{t \cdot (\Gamma- \Delta)}{\Gamma - (1 - \lambda)^t \cdot d_{\max}} = c_\mathrm{RW}.
\end{align}
Intuitively, this simply means that by running IDEAL-WALK for as long as the input random walk (and therefore being able to skip rejection sampling), IDEAL-WALK is essentially reduced to the input random walk. Our task now is to determine whether $c < c_\mathrm{RW}$. We shall start with an intuitive discussion of why $c$ tends to be much smaller than $c_\mathrm{RW}$ for almost all realistic requirements of $\Delta$, and then present the formal analysis. Intuitively, consider the case where $\Delta$ is reduced by half, to $\Delta/2$. Note from (\ref{equ:genrw}) that the change of $c_\mathrm{RW}$ is always constant no matter how large (or small) $\Delta$ is - i.e.,$c_\mathrm{RW}$ will increase by $-\log 2 / \log(1 - \lambda)$. On the other hand, note from (\ref{equ:genopt}) that the change of $c$ is not always constant. Instead, the increase of $c$ is at most $t \cdot \Delta / (\Gamma - (1-\lambda)^t \cdot d_{\max})$ for $t$ which minimizes (\ref{equ:genopt}) for the original value of $\Delta$. Note that as $\Delta$ becomes smaller, $t$ either stays the same or becomes larger. As such, the smaller $\Delta$ is, the smaller the increase of $c$ will be. This is the intuitive explanation of why $c$ tends to be much smaller than $c_\mathrm{RW}$ when $\Delta$ is reasonably small.

Formally, we start with determining the optimal value of $t$ that minimizes (\ref{equ:genopt}). Let
\begin{align}
f(t) = \frac{t \cdot (\Gamma- \Delta)}{\Gamma - (1 - \lambda)^t \cdot d_{\max}}.
\end{align}

To satisfy $df(t)/dt = 0$, we have
\begin{align}
t &= \frac{(1-\lambda)^t \cdot d_{\max} - \Gamma}{\log(1-\lambda) \cdot (1-\lambda)^t \cdot d_{\max}}\\
&= \frac{1}{\log(1-\lambda)} \left(1- \frac{\Gamma}{(1-\lambda)^t \cdot d_{\max}}\right)
\end{align}

Thus, the optimal $t$ which minimizes $f(t)$ is
\begin{align}
t_\mathrm{opt} = \frac{- \log (-\frac{1}{\Gamma} \cdot W(-\frac{\Gamma}{e d_{\max}}) \cdot d_{\max})}{\log (1-\lambda)}, \label{equ:gentva2}
\end{align}
where $W$ is the Lambert $W$-function.

An interesting observation from (\ref{equ:gentva2}) is that $t_\mathrm{opt}$ is indeed irrelevant to $\Delta$. In other words, no matter how stringent (or loose) the requirement on $\Delta$ is, as long as $\Delta$ is smaller than $\Gamma$, IDEAL-WALK always outperforms the input random walk.

Finally, note that
\begin{align}
c \leq \frac{- \log (-\frac{1}{\Gamma} \cdot W(-\frac{\Gamma}{e d_{\max}}) \cdot d_{\max})}{\log (1-\lambda)} \cdot \frac{\Gamma - \Delta}{\Gamma + \frac{\Gamma}{W(-\frac{\Gamma}{e d_{\max}})}}.
\end{align}
Hence the query-cost ratio upper bound in the theorem.
\end{proof}

One can make an interesting observation from the proof of the theorem on how the performance of IDEAL-WALK changes when the walk length it takes grows larger. Initially, a larger $t$ leads to a smaller $c$, i.e., the expected query cost per sample for IDEAL-WALK, until $t$ reaches the optimal value $t_\mathrm{opt}$. Afterwards, a larger $t$ will lead to a larger $c$, until $c = c_\mathrm{RW}$ and IDEAL-WALK becomes equivalent with the input random walk. To understand the concrete values of $t_\mathrm{opt}$ and $c/c_\mathrm{RW}$, we consider a number of case studies in the following subsection.

\subsection{Case Study}
\label{subsec:caseStudy}
\begin{figure*}[t]
\begin{minipage}[t]{0.5\linewidth}
\centering
\includegraphics[scale=0.4]{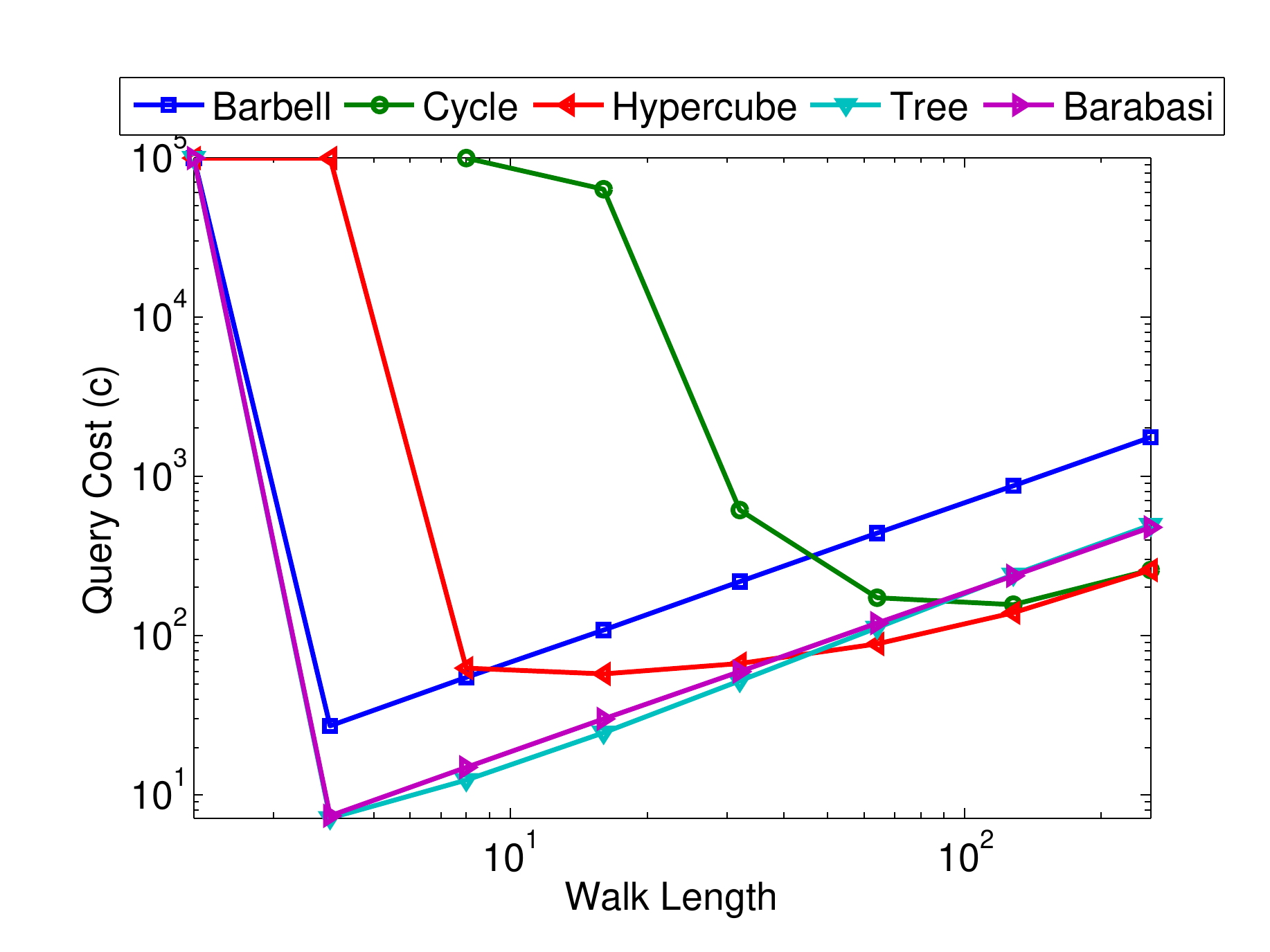}
\vspace{-0.15in}
\caption{Query cost per sample achieved by IDEAL-AR-SAMPLER.}
\label{fig:allCosts}
\end{minipage}
\hspace{0.5mm}
\begin{minipage}[t]{0.5\linewidth}
\centering
\includegraphics[scale=0.4]{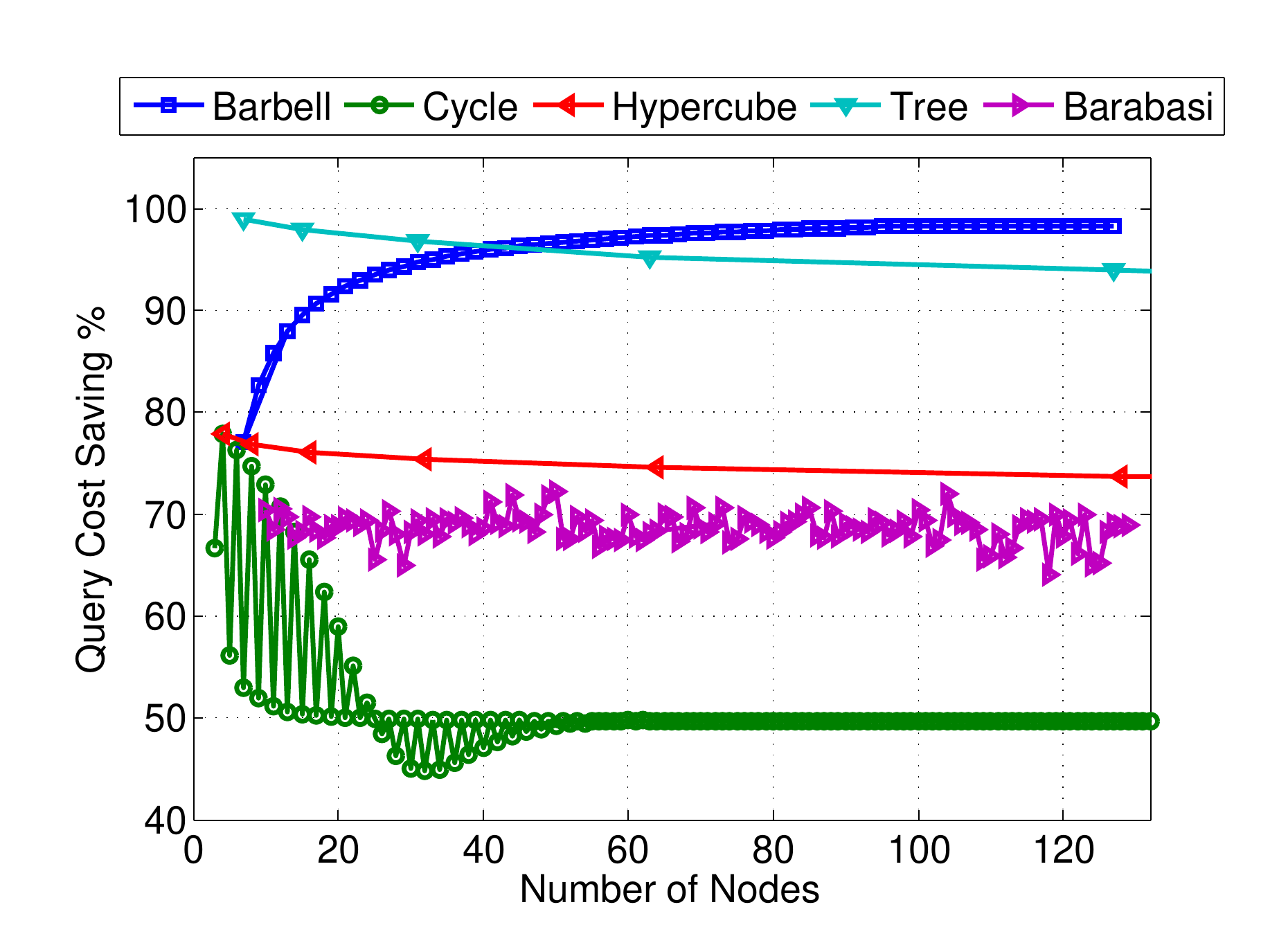}
\vspace{-0.15in}
\caption{ Query cost saving of the IDEAL-AR-SAMPLER.}
\label{fig:trend}
\end{minipage}
\end{figure*}

In this subsection, we compute numerically the values of $t_\mathrm{opt}$ and $c/c_\mathrm{RW}$ over a number of theoretical graph models, specifically Cycle, Hypercube, Barbell, Tree, and Barbasi-Albert (scale free) models: A cycle graph consists of a single circle of $n$ nodes - i.e., the graph has a diameter of $\lfloor \frac{n}{2} \rfloor$. A $k$-hypercube consists of $2^k$ nodes and $2^{k-1}k$ edges. If we represent each node as a (unique) $k$-bit binary sequence, and two nodes are connected if and only if their representations differ in exactly one bit. One can see that the hypercube has a diameter of $k$. A barbell graph of $n$ nodes is a graph obtained by connecting two copies of a complete graph of size $\frac{n-1}{2}$ by a central node, i.e, the diameter is $3$. A tree of height $h$ is a cycle free graph with at most $2^{h+1}-1$ nodes and diameter of $2h$. We considered a balanced binary tree, where the leaves are $h$ hops away
from the root. Finally, to simulate a scale-free network (with node degrees following a power-law distribution), we use the Barabasi-Albert model~\cite{Barabasi:1999}, where number of edges to attach from a new node to existing nodes is $3$. 

Figure~\ref{fig:allCosts} depicts how the expected query cost per sample changes when the length of walk taken by IDEAL-WALK varies from 1 to 128. We considered graphs with fix number of nodes $31$. Since hypercube should have $2^k$ nodes, we generate the one with $32$ nodes. In all cases, the target distribution is the uniform distribution. Note that if the walk length is smaller than the corresponding graph diameter, then we cost $c$ to be infinity. One can see from the figure that, for all graph models, the trend we observe from Theorem~\ref{thm:mth} holds - i.e., the query cost per sample $c$ drops dramatically at the beginning, reaches a minimal value, and then increases slowly. Another observation from the figure is that, in general, the larger diameter a graph has, the greater the optimal walk length for IDEAL-WALK will be. For example, compared with a Barbell graph with diameter of $3$, the cycle graph with diameter of $\lfloor \frac{31}{2} \rfloor = 15$ has a much longer optimal walk length - and consequently requires a larger query cost per sample.

Next, we examine the degree of improvement offered by IDEAL-WALK over the input random walk, again over the various graph models described above.  Figure~\ref{fig:trend} depicts how the ratio of improvement - i.e., $1 - c/c_\mathrm{RW}$ - changes when the graph size varies from 4 to 128. There are two interesting observations from the figure: One is that, while IDEAL-WALK offers over 50\% savings in almost all cases, the amount of savings does depend on the underlying graph topology - e.g., the improvement ratio is far smaller on cycle graphs than others, mainly caused by its large diameter and small spectral gap of $O(n^{-2})$~\cite{Levin:2008}.

The other observation is on how the improvement ratio changes with graph size: Interesting, when the graph becomes larger, the ratio increases for some models (e.g., Barbell), remains virtually constant for some others (e.g., hypercube, Barabasi-Albert), and declines for the ones left (e.g., cycle). An intuitive explanation here is that how the improvement ratio changes, as predicted in Theorem~\ref{thm:mth}, depends on a joint function of the graph size (e.g., $|E|$) and the spectral gap (i.e., $\lambda$). Since the spectral gap is difficult to directly observe, and there is a common understanding that the spectral gap is negatively correlated with the graph diameter \cite{Shyu:2013}, we illustrate the issue here by considering how the graph diameter changes with a linearly increasing node count for the various graph models: For the cycle graph, the diameter increases as fast as the node count - leading to a (generally) decreasing improvement ratio. For hypercube, tree and Barabasi-Albert models, the diameter increases at the log scale\footnote{To be exact, the diameter for Barabasi-Albert model is proportional to $\log n/\log\log n$~\cite{Cohen:2003}~\cite{bollobas2004diameter}} of node count - correspondingly, the improvement ratio is almost unaffected by the graph size. For Barbell graph, on the other hand, the diameter remains unchanged (i.e., 3) no matter how large the graph is. As a result, we observe a rapidly increasing improvement ratio from Figure~\ref{fig:trend}. Note that this is indeed a promising sign for the performance of IDEAL-WALK over real-world social networks, because it is widely believed that the diameter of such a network remains virtually constant (e.g.,~\cite{mislove2007measurement}~\cite{Kleinberg:2000}) no matter how large the graph size is - in other words, the improvement ratio offered by IDEAL-WALK is likely to increase as the graph becomes larger - a phenomenon we shall verify in the experiments section over the synthetic social networks by changing the size of the graph.

\subsection{Algorithm WALK}

While the above theoretical analysis demonstrates the significant potential of query-cost savings by our WALK-ESTIMATE scheme, there is one key issue remaining before one can instantiate our idea into a practical WALK algorithm: in practice when the graph topology is not known beforehand, how can we determine the number of steps to walk before calling the ESTIMATE algorithm and performing the rejection sampling process? As one can see from the above discussions, an overly small length would lead to most samples being rejected, while an overly large one would incur unnecessary query cost for the WALK step.

Fortunately, we found through studies over real-world data (more details in the experimental evaluation section) that the setting of walk length is usually easy in practice as long as we set the walk length conservatively rather than aggressively. To understand why, note from the above case study, specifically the change of query cost per sample with walk length, that the query cost drops sharply before reaching the optimal walk length, the increase afterwards is much slower. As such, a reasonable strategy for setting the walk length is to be conservative rather than aggressive - i.e., giving preference to a longer, more conservative walk length. As we shall further elaborate in the experiments section, we use a default walk length of two times the graph diameter, which is conservatively estimated to be 10 for real world online social networks.

It is important to note that, while our experiments demonstrate that the above described heuristic strategy for setting the walk length works well over real world social networks, it is not a theoretically proven technique that works for all graphs. A simple counterexample here is the above-discussed Barbell graph - i.e., two complete graphs connected by one node, with one edge connected to each half. One can see that, while the graph has a very short diameter (i.e., 3), a random walk of length 6 is highly unlikely to cross to the other half of the graph, unless it starts from one of the three nodes that connect the two halves together. As such, the above heuristics for setting the walk length would yield an extremely small sample-acceptance probability and, therefore, a high query cost.

\section{ESTIMATE}
\label{sec:estimate}

Algorithm WALK leaves as an open problem of the estimation of sampling probability for a given node, so as rejection sampling can be applied to reach the input target distribution. In this section, we address this problem with Algorithm ESTIMATE. Specifically, we shall first describe a simple algorithm which, somewhat surprisingly, provides a completely unbiased estimation for the sampling probability with just a few queries. Unfortunately, we also point out a problem of this simple method: its high estimation variance which, despite the unbiasedness, still leads to a large error. To address this problem, we develop two heuristics, initial crawling and weighted sampling, to significantly reduce the estimation variance while requiring only a small number of additional queries.

\subsection{UNBIASED-ESTIMATE}
\label{sec:unbiasedEstimate}
\vspace{1mm}
\noindent{\bf Unbiased Estimation of Sampling Probability:} Recall that we use $p_t(u)$ to denote the probability for a node $u$ to be visited at Step $t$ of a random walk conducted by WALK, and $N(u)$ is the set of neighbors of $u$. To illustrate the key idea of UNBIASED-ESTIMATE, we start by considering the case where the input random walk is the simple random walk. One can see that
\begin{align}
p_t(u) = \sum_{u^\prime \in N(u)} \frac{p_{t-1}(u^\prime)}{|N(u^\prime)|}.
\end{align}

Thus, given $u$, a straightforward method of estimating $p_t(u)$ is to select uniformly at random a neighbor of $u$ (i.e., $u^\prime \in N(u)$), so as to reduce the problem of estimating $p_t(u)$ to estimating $p_{t-1}(u^\prime)$, because an unbiased estimation of $p_t(u)$ is simply
\begin{align}
\label{equ:bw}
\tilde{p}_t(u) = \frac{|N(u)|}{|N(u^\prime)|} \cdot p_{t-1}(u^\prime).
\end{align}

An important property of such an estimation is that as long as we can obtain an unbiased estimation of $p_{t-1}(u^\prime)$, say $\tilde{p}_{t-1}(u^\prime)$, then the estimation for $p_t(u)$ will also be unbiased. The reason for the unbiasedness can be stated as follows: Note that due to the conditional independence of the estimation of $\tilde{p}_{t-1}(u^\prime)$ with the selection of $u^\prime$ from $N(u)$, we have
\begin{align}
E(\tilde{p}_t(u)) &= \sum_{u^\prime \in N(u)} \frac{1}{|N(u)|} \cdot \frac{|N(u)|}{|N(u^\prime)|} \cdot E(\tilde{p}_{t-1}(u^\prime))\\
&= \sum_{u^\prime \in N(u)} \frac{1}{|N(u^\prime)|} \cdot E(\tilde{p}_{t-1}(u^\prime))\\
&= \sum_{u^\prime \in N(u)} \frac{1}{|N(u^\prime)|} \cdot p_{t-1}(u^\prime) = p_t(u),
\end{align}
where $E(\cdot)$ represents the expected value taken over the randomness of the estimation process. 

Given the unbiasedness property, we can run a recursive process for estimating $p_t(u)$ (with a decreasing subscript $t$) until reaching $p_0(w)$. Now we have $p_0(w) = 1$ if $w$ is the starting node of the random walk and 0 otherwise. One can see that this recursive process leads to an unbiased estimation of $p_t(u)$. We refer to this estimation method as UNBIASED-ESTIMATOR. The generic design of UNBIASED-ESTIMATOR (for any input MCMC random walk) is depicted in Algorithm~\ref{alg:unbiasedEstimate}, \madd{where $p_{uu'}$ is the element of the transition probability matrix in MCMC.}

\begin{algorithm}[!htb]
\caption{{\bf UNBIASED-ESTIMATE}}
\begin{algorithmic}[1]
\label{alg:unbiasedEstimate} 
\STATE {\bf Input: } Node $u$, Starting node $w$, Length of walk $t$ 
\STATE {\bf If} {$t = 0$ and $u$ == $w$} then {\bf return } 1 
\STATE {\bf If} {$t = 0$ and $u$ != $w$} then {\bf return } 0 
\STATE Choose a node $u'$ uniformly at random from $N(u)$
\RETURN \del{$\frac{|N(u)|}{|N(u')|}$} $|N(u)|$ $\cdot$ $p_{uu'}$ $\cdot$ UNBIASED-ESTIMATE($u'$, $w$, $t-1$)
\end{algorithmic}   
\end{algorithm}

\vspace{1mm}
\noindent{\bf Analysis of Estimation Variance:}  While the above UNBIASED-ESTIMATOR produces an unbiased estimations of the sampling probability, it also has an important problem: a high estimation variance which leads to a high estimation error (unless the estimator is repeatedly executed to reduce variance - which would lead to a large query cost nonetheless). Specifically, the estimation variance on the last few steps of the recursive process (i.e., with the smallest subscript in $p_t(u)$) are amplified significantly in the final estimation. To see this, consider a simple example of a $k$-regular graph. With UNBIASED-ESTIMATOR, the estimation of $p_t(u)$ is either $\tilde{p}_t(u) = 1$ (when the node $w$ encountered at $p_0(w)$ is the staring node) or 0 otherwise. As a result, the relative standard error for the estimation of $p_t(u)$ is exactly $\sqrt{(1-p_t(u))/p_t(u)}$. Since $p_t(u)$ is usually extremely small for a large graph, the relative standard error can be very high for UNBIASED-ESTIMATOR.

Our main idea for reducing the estimation variance is two-fold: {\em initial crawling} and {\em weighted sampling} - which we discuss in the next two subsections, respectively, before combing UNBIASED-ESTIMATE and the two heuristics to produce the practical ESTIMATE algorithm. 

\subsection{Variance Reduction: Initial Crawling}
\label{subsubsec:crawling}

Our first idea is to {\em crawl} the $h$-hop neighborhood of the starting point, so for each node $v$ in the neighborhood, we can precisely compute its sampling probability $p_t(v)$ for $t \leq h$. For example, if simple random walk is used in WALK, then all nodes $v$ in the immediate 1-hop neighborhood of starting node $s$ have $p_1(v) = \frac{1}{|N(s)|}$. In practice, $h$ should be set to a small number like $2$ or $3$ to minimize the query cost caused by the crawling process - note that the query cost is likely small because many nodes in the neighborhood may already be accessed by the WALK part, especially when multiple walks are performed to obtain multiple samples.

One can see that, with this initial crawling step, we effectively reduce the number of backward steps required by ESTIMATE because the backward estimation process can terminate as soon as it hits a crawled node. This shortened process, in turn, leads to a lower estimation variance and error.

\subsection{Variance Reduction: Weighted Sampling}
\label{subsubsec:weightedSampling}

Our second idea for variance reduction is weighted sampling - i.e., instead of picking $u^\prime$ uniformly at random from $N(u)$ as stated above (for estimating $p_t(u)$ from $p_{t-1}(u^\prime)$), we design the probability distribution based on the knowledge we already have about the underlying graph (e.g., through the random walks and backward estimations already performed). The key motivation behind this idea is the following observation on UNBIASED-ESTIMATE: When estimating $p_t(u)$, the values of $p_{t-1}(u^\prime)$ for all neighbors of $u$ (i.e., $u^\prime \in N(u)$) tend to vary widely - i.e., some neighbors might have much higher sampling probability than others. This phenomenon is evident from the fact that, even after reaching the stationary distribution of say the simple random walk, the sampling probability can vary by $d_{\max}/d_{\min}$ times, where $d_{\max}$ and $d_{\min}$ are the maximum and minimum degree of a node, respectively. On the other hand, without the initial crawling process, when $t = 1$, all but one neighbors of $u$ have $p_0(u^\prime) = 0$, while the other one has $p_0(u^\prime) = 1$ - also a significant variation.

Given this observation, one can see that we should allocate the queries we spend according to the value of $p_{t-1}(u^\prime)$ rather than simply at a uniform basis - specifically, we should spend more queries estimating a larger $p_{t-1}(u^\prime)$, simply because its estimation accuracy bears more weight on the final estimation error of $p_t(u)$. To this end, we adjust the random selection process of $u^\prime$ from $N(u)$ to the following {\em weighted sampling} process: First, we assign a minimum sampling probability $\epsilon$ to all nodes in $N(u)$ - to maintain the unbiasedness of the estimation algorithm. For the remaining $1 - \epsilon$ probability, we assign them proportionally to the total number of historic random walks which hit node $u^\prime$ at Step $t - 1$. \madd{More specifically, during the estimation process, all our random walks start from the same starting node. Suppose we have performed $n_{hw}$ random walks and currently performing the next one. Also suppose that we are at node $u$ at step $t$.  Let $u'$ be a neighbor of $u$ (i.e $u' \in N(u)$). Among the $n_{hw}$ random walks, we compute the number of times $u'$ is reached at step $t-1$. Let it be $n_{u',t-1}$. i.e. $0 \leq n_{u',t-1} \leq n_{hw}$. 
The ratio $\frac{n_{u',t-1}}{n_{hw}}$ has some impact on how often node $u'$ is picked as part of the random walk.}
\del{We refer the total number of historic walks as $n_{hw}$ and the number of historic random walks that hit node $u^\prime$ at Step $t-1$ as $n_{u', t-1}$.} Algorithm~\ref{alg:ws-bw} depicts the pseudocode of this weighted sampling scheme.

\begin{algorithm}[!htb]
\caption{{\bf WeightedSamplingBackward (WS-BW)}}
\begin{algorithmic}[1]
\label{alg:ws-bw} 
\STATE {\bf Input: } Node $u$, starting node $w$, Length of walk $t$, $\epsilon$
\STATE {\bf if } $t = 0$ and $u = w$ then {\bf return } $1$
\STATE {\bf if } $t = 0$ and $u \neq w$ then {\bf return } $0$
\STATE $\forall u' \in N(u)$, $\pi_{u'} = \epsilon / |N(u)|$
\STATE $\forall u' \in N(u)$, $\pi_{u'} = \pi_{u'} + (1 - \epsilon) (n_{u',t-1}/{n_{hw}})$ 
\STATE Choose node $v$ from $N(u)$ according to distribution $\pi$
\RETURN { $\frac{|N(u)|}{|N(v)|} \times$ WS-BW($v,w,t-1$)}
\end{algorithmic}   
\end{algorithm}

\subsection{Algorithm ESTIMATE}

We now combine UNBIASED-ESTIMATE with our two heuristics for variance reduction, initial crawling and weighted sampling, to produce Algorithm ESTIMATE. Note that there is one additional design in ESTIMATE which aims to further reduce the estimation error: For each $p_t(u)$ we need to estimate, we can repeatedly execute ESTIMATE (and take the average of estimations) to reduce the estimation error. The number of executions we take, of course, depends on the overall query budget. In addition, instead of running the same number of executions for all $u$, we next allocate the budget, once again, according to the estimations we have obtained so far for all nodes to be estimated. Specifically, we assign the number of executions in proportion to the estimation variance for each node. Figure~\ref{alg:estimate} depicts the pseudo code of Algorithm ESTIMATE.
%Terminated the walk if it hits one of the crawled node at step $i\leq (t-h)$.

%\begin{algorithm}[!htb]
%\caption{{\bf ESTIMATE}}
%\begin{algorithmic}[1]
%\label{alg:estimate} 
%\STATE {\bf Input: } Set of nodes $V$, Starting node $w$, length of walk $t$, number of crawling steps $h$
%\STATE Crawl $h$-hop neighborhood of $w$ and compute their exact landing probability of visited nodes by Equation~\ref{equ:exactProb}.
%\STATE Run \textcolor{red}{$n$} different random walks of length $t$ starting from $w$ to estimate the hitting probability of the nodes at different steps ($FW$).
%\FOR {$u \in V$} 
%\STATE For fix number of times run the Algorithm~\ref{alg:ws-bw}, WS-BW($u,w,t,FW$)(terminated the walk if it hits one of the crawled node at step $i\leq (t-h)$). 
%\STATE Find the variance of those estimations.
%\ENDFOR
%\WHILE {Remaining query budget $> 0$}
%\STATE For those with high varience run the Algorithm~\ref{alg:ws-bw}, WS-BW($u,w,t,FW$).
%\ENDWHILE   
%\end{algorithmic}   
%\end{algorithm}

\begin{algorithm}[!htb]
\caption{{\bf ESTIMATE}}
\begin{algorithmic}[1]
\label{alg:estimate} 
\STATE {\bf Input: } Starting node $w$, length of walk $t$, number of crawling steps $h$, forward random walks issued $F$%, Budget $B$, minimum budget $b$ for each node
\STATE Crawl $h$-hop neighborhood of $w$ and compute their exact sampling probability 
\STATE Let $V_F$ be the set of nodes hit by random walks in $F$
\FOR {$u \in V_F$} 
%\WHILE{$b \neq 0$}
\STATE $p_t(u)$ = WS-BW($u,w,t$)
%\STATE estimations($u$).add($p_t(u)$)
%\ENDWHILE
\STATE Compute estimation variance of estimations of $p_t(u)$%)%for $u$ as $\sqrt{(1-p_t(u)) / p_t(u) }$ 
\ENDFOR
\STATE Use remaining budget to reduce variance by invoking Algorithm~\ref{alg:ws-bw}. Choose nodes randomly proportional to their variance.
\end{algorithmic}   
\end{algorithm}

%\begin{align}
%(1 - 2|E| \cdot (1-\lambda)^t \cdot d_{\max}) = t (-2|E| d_{\max} (1-\lambda)^t \log(1-\lambda)).
%\end{align}

%Note that
%\begin{align}
%(1-\lambda)^t &= e^{1 - \frac{1}{2 |E| \cdot (1-\lambda)^t \cdot d_{\max}}}\\
%y &= e^{1 - \frac{1}{2 |E| \cdot y \cdot d_{\max}}}\\
%1/(1-z)/(2|E| d_{\max}) &= e^{z}\\
%1/h/(2|E| d_{\max}) &= e/e^{h}\\
%-(1/(2|E| d_{\max}))/e &= -he^{-h}\\
%h = -W(-1/(2|E| d_{\max}e))
%\end{align}

%\newpage
%$\mbox{ }$
%\newpage

\section{Discussions}

\subsection{Many Short Runs v.s. One Long Run}

\madd{Broadly speaking, there are two major ways in which random walk has been used in existing literature for sampling purposes, i.e., ``many short runs'', and ``one long run''. Figure~\ref{fig:onemany} illustrates the difference between the two ways.
In ``many short runs'', the random walk repeatedly starts from a specific node until burn-in occurs 
and take a single sample from each walk. This is by far the most common way of using random walk as it produces i.i.d samples which produce superior estimates. In addition, many short runs can be easily embedded into parallel computing applications, and we can use multiple starting points in practice. According to \cite{Gilks99}, by taking a number of parallel replications and actively monitoring samples generated from multiple runs, we can guard against a single chain leaving a ``significant proportion'' of the sample space unexplored. 
In this paper, we compare our algorithm against this common variant. However, note that there is no chance of amortization here as we perform a new walk for each sample. 

The other way is to perform ``one long run'' where it first goes through the burn-in period for convergence to stationary distribution.
Once the burn-in period is over, the long run continues the walk and collects {\em every single node} 
encountered after the burn-in period into the sample pool.
This approach does indeed amortize the cost of burn-in as multiple samples are obtained after burn-in period.
However, this approach is not as commonly used as it produces dependent (correlated) samples.
When one uses the sample pool generated by one long run for purposes such as aggregate estimations (e.g., for AVG degree), it may be significantly less effective (resp.~less accurate) than a much smaller sample set produced by many short runs, especially when there is a strong correlation between the attribute values being aggregated on adjacent nodes (e.g., when nodes with larger degrees tend to be connected with each other). Indeed, a key concept capturing such a difference is the {\em effective sample size} \cite{301511} of one long run:
\begin{equation}
M = \frac{h}{1+2\sum_{k=1}^\infty\rho_k},
\end{equation}
where $h$ is the original sample size, and $\rho_k$ is the autocorrelation at lag $k$ (i.e., between the values of attribute being aggregated on nodes taken $k$ hops apart).

One can see from the above discussions that one long run is not a silver bullet solving the challenge of burn-in cost. Instead, it {\em might} be applied on cases when we know the intended application - specifically, the attribute to be aggregated - {\em and} such an attribute features a small autocorrelation. Indeed, our central contribution in the paper is a novel mechanism
to speed up the ``many short runs'' variant so that it could obtain independent samples at a much lower query cost. While we do observe the potential of applying our WALK-ESTIMATE idea to one long run - e.g., by estimating the sampling probability for not only the last node (taken as a candidate) but every node on the walk path, we leave the detailed investigation to further work.

\begin{figure}[t]
\centering
\includegraphics[scale=0.4]{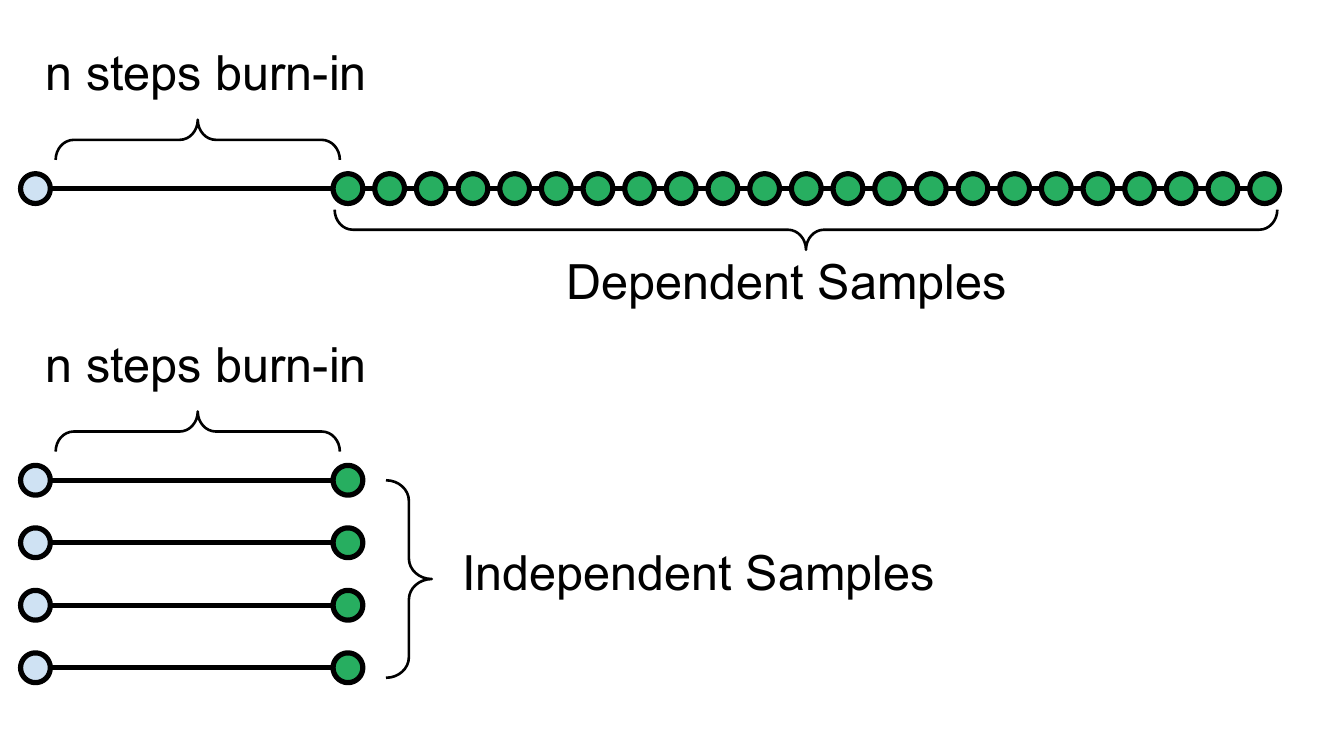}
\vspace{-0.2in}
\caption{Illustration of ``one long run'' v.s. ``many short runs''}
\vspace{-0.2in}
\label{fig:onemany}
\end{figure}
}

\subsection{Limitations of WALK-ESTIMATE - Graph Diameter}

\begin{figure}
\centering
\includegraphics[scale=0.28]{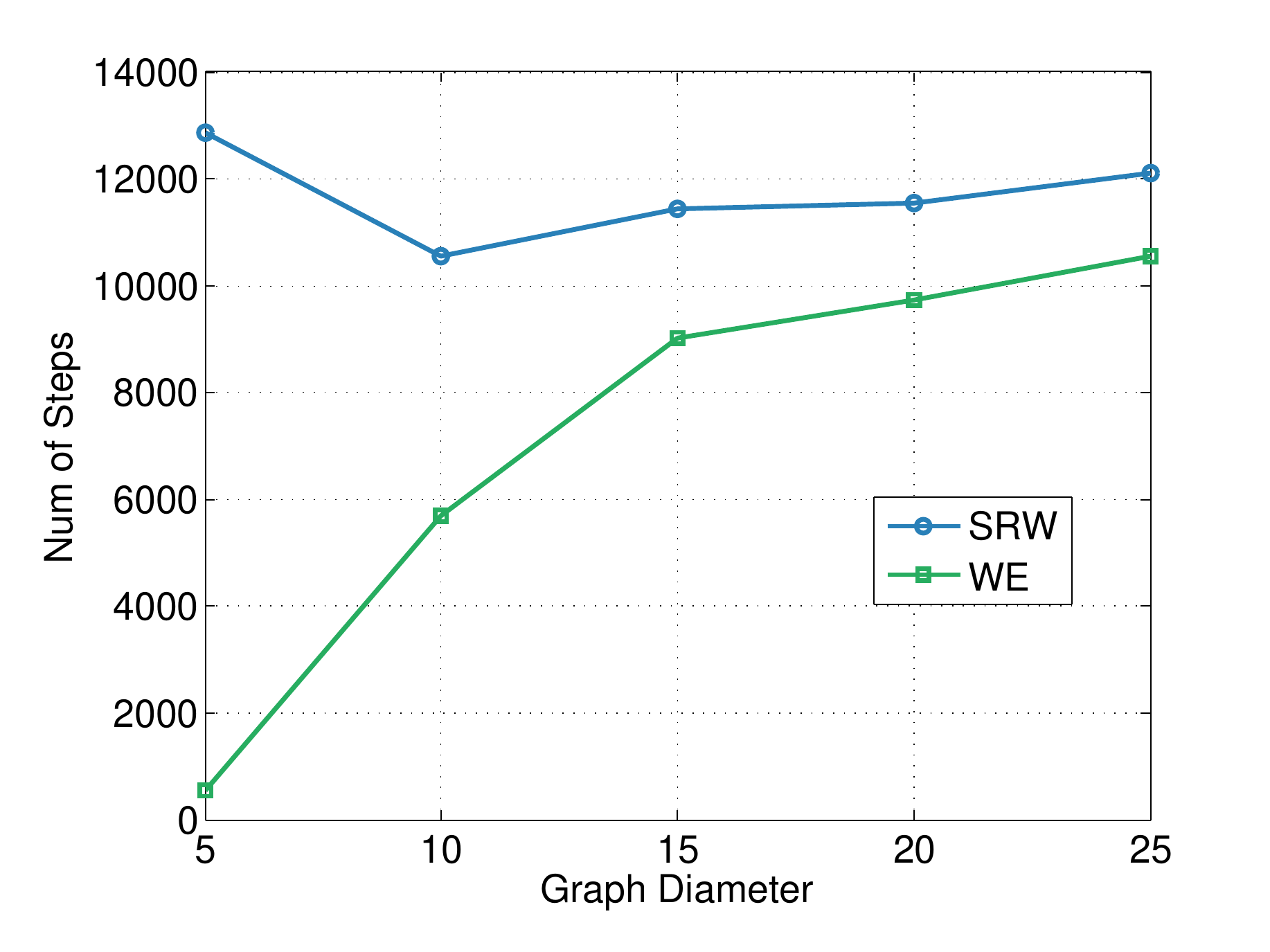}
\vspace{-0.2in}
\caption{Cycle graphs with long diameter}
\vspace{-0.2in}
\label{fig:cycle-diameter}
\end{figure}

Before concluding the technical discussion of WALK-ESTIMATE, we would like to point out its limitations - i.e., when it should {\em not} be applied for sampling a graph with local neighborhood access limitations. Specifically, we note that WALK-ESTIMATE should not be applied over graphs with long diameters. Note that while our results for IDEAL-WALK (Theorem~\ref{thm:mth}) appear to demonstrate efficiency enhancements regardless of the graph diameter, the performance of our ESTIMATE step is significantly worse when the graph diameter is large. The reason behind this is straightforward - in our backward walk for ESTIMATE, the probability of hitting the starting node (or the starting neighborhood crawled by the initial crawling process) decreases rapidly when the graph diameter becomes larger. This in turn leads to more backward walks being required for the estimations of sampling probability and, as a result, worst sampling efficiency. Figure~\ref{fig:cycle-diameter} demonstrates an example of how the average number of walk steps taken (for WALK-ESTIMATE, both forward and backward) per sample changes on cycle graphs for simple random walk (SRW) and our WALK-ESTIMATE algorithm (WE, with SRW as input) when the graph diameter grows from 5 to 25. The cycle graphs' sizes are $11, 21, \dots, 51$. One can see that unlike SRW which is barely affected by the growing diameter, the expected cost of WALK-ESTIMATE increases dramatically as the diameter becomes longer. Fortunately, it is important to understand that graphs with long diameters are {\em not} the intended target of this paper, because it is well known that online social networks, even the very large ones, have small diameters ranging from 3 to 8 \cite{Liben-nowell05analgorithmic}.

\subsection{Practical WALK-ESTIMATOR Challenges}
Indeed there are a number of practical issues that must be taken into account before a practical WALK-ESTIMATOR could be built.
The two core challenges are 1) access restrictions of the real world social networks such as rate limits, restricted access to neighbors etc, and 2) estimate of the scaling factor, i.e.,  $\min_{v \in V} (p(v)/q(v))$ in the acceptance-rejection sampling step.

\madd{
\subsubsection{Restrictions in Real-world Social Networks}
Real world social networks could place a number of access restrictions (such as rate limits, restricted access to neighbors etc).

{\bf Impact of Rate Limits:} 
However, most of those restrictions such as rate limits do not have a major impact on the estimation accuracy.
They could instead be treated as engineering challenges in building a practical system which is a well studied orthogonal issue.
There has been extensive literature 
\cite{catanese2011crawling,wondracek2010practical,chau2007parallel,bonneau2009prying,mislove2007measurement,nazir2008unveiling,ye2010crawling} 
that provide number of guidelines for crawling/sampling online social networks.
Since our experiments were conducted on well known benchmark datasets, these issues did not have any practical impact.
However, we plan to discuss the practical issues as part of future work on building a demonstration of our paper.

{\bf Impact of Access Restrictions to Neighbors:}
Access restrictions that limit how the neighbors of a node are obtained have some limited impact over algorithm.
However, it must be pointed out that, under some mild but realistic assumptions that hold for most real world social networks,
they do not have any significant impact over the accuracy of our algorithms.
Broadly speaking, access restrictions over neighbors take one of the following forms:
\begin{enumerate}
    \item The social network returns $k$ neighbors randomly during each invocation 
            (i.e. different invocation might see different $k$ neighbors)
    \item The social network returns $k$ fixed neighbors picked randomly (i.e. different invocation returns same $k$ neighbors)
    \item The social network bounds the maximum number of neighbors (say $l$) returned. 
\end{enumerate}

Twitter is one of the few social networks that provides a maximum of 5000 neighbors for an user - access restriction of type (3).
However, to the best of our knowledge, we have not seen any major social network with access restrictions (1) and (2).
Before describing the impact on our algorithm, 
we would like to note that, statistically, there is no distinction between access restrictions (2) and (3).
By setting the value of parameters $k$ and $l$, we can see that 
they provide identical interface to accessing the social network to a third party.

{\bf Impact of Restrictions of Type (1):}
Consider the scenario where, given a node $u$ in the graph, the API call $N(u)$ provides a random list of $k$ neighbors.
At each step, Simple Random Walk (SRW) seeks to choose one of $u$'s neighbor randomly. 
This is typically achieved by obtaining all neighbors $N(u)$ and choosing a node uniformly at random.
If the list of neighbors provided were already chosen uniformly at random, 
we could instead choose, say the first neighbor to traverse next.

There is a subtle issue in this setup. 
For example, if our objective is to estimate the average degree, we cannot directly use the number of neighbors returned as an estimate.
However, this issue can easily be circumvented using known techniques (such as mark-and-recapture \cite{robson1964sample,Katzir2011})
to estimate the degree of a graph by repeated invocation of neighbors API.
No other changes to our algorithm is required.

{\bf Impact of Restrictions of Type (2) and (3):}
If the neighbors API returns a fixed set of result, we cannot distinguish 
whether the API returns random or arbitrary subset of neighbors.
If the neighbors API returns a truncated list of neighbors, the key implication is that it limits the ``visible'' graph -
i.e. the partial subgraph that our algorithm could construct locally.
Consider the scenario where the neighbors API had returned all neighbors of $u$.
If $v$ was a neighbor of $u$, then we also know that $u$ is a neighbor of $v$ (and hence will be returned by $N(v)$).
However, under the truncated list of neighbors, there is no guarantee that 
if $v$ is returned via $N(u)$, then $u$ is returned in $N(v)$.

However, this issue could easily be handled by defining an alternate semantics for graph connectivity.
Specifically, when deciding if we can traverse an edge as part of random walk, we first perform a {\em bidirectional check}.
In other words, before traversing an edge $(u,v)$, we ensure that $u \in N(v) \wedge v \in N(u)$.
We do not use the edge otherwise.
While this might seem like a significant restriction, the actual impact on graph connectivity is limited
as long as the maximum size of neighbors returned by $N(\cdot)$ is not too small.
Even a value as small as 100 ensures graph connectivity and have negligible impact on the algorithms.
If the value is small, our algorithms will still provide unbiased estimates -
however, the graph may not be connected.

In summary, access restrictions such as rate limits are primarily an engineering issue that has been extensively studied.
Restrictions to neighbors do have some impact on our algorithm.
However, if the list of neighbors returned is not small, then the impact is negligible.
Finally, these restrictions affect both SRW and MHRW - and hence all the efficiency improvements that we propose 
are still effective even under this scenario.

\subsubsection{Estimating the scaling factor}
As we discussed in~\ref{subsec:acceptanceRejectionSampling}, the optimal scaling factor to maintain an unbiased sample is exactly $\min_{v \in V} (p(v)/q(v))$. However, in practice because of the lack of global topological knowledge of the real-world social networks, one may not be able to precisely compute $\min_{v \in V} (p(v)/q(v))$. A common technique used by acceptance-rejection sampling in statistics is to bootstrap an approximation of the $\min_{v \in V} (p(v)/q(v))$ based on the samples already observed, and then use such an approximation in acceptance-rejection sampling. Of course, such an approximation can be made more conservatively (i.e., lower) to reduce bias, or more aggressively (i.e., higher) to make the sampling process more efficient. In our experiments we derive the sampling probabilities of the visited nodes using the Algorithm~\ref{alg:unbiasedEstimate} and we consider the 10th percentile of the estimation of sampling probabilities as the $\min_{v \in V} (p(v)/q(v))$.
}
\section{Experimental Evaluation}
\label{sec:exp}

\subsection{Experimental Setup}
\label{subsec:expSetup}

\begin{figure*}[ht]
	\centering
	\subcaptionbox{Average Degree (SRW)\label{fig:gplus_srw_relE_vs_QueryCost}}{
		\vspace{-0.05in}
		\includegraphics[scale=0.26]{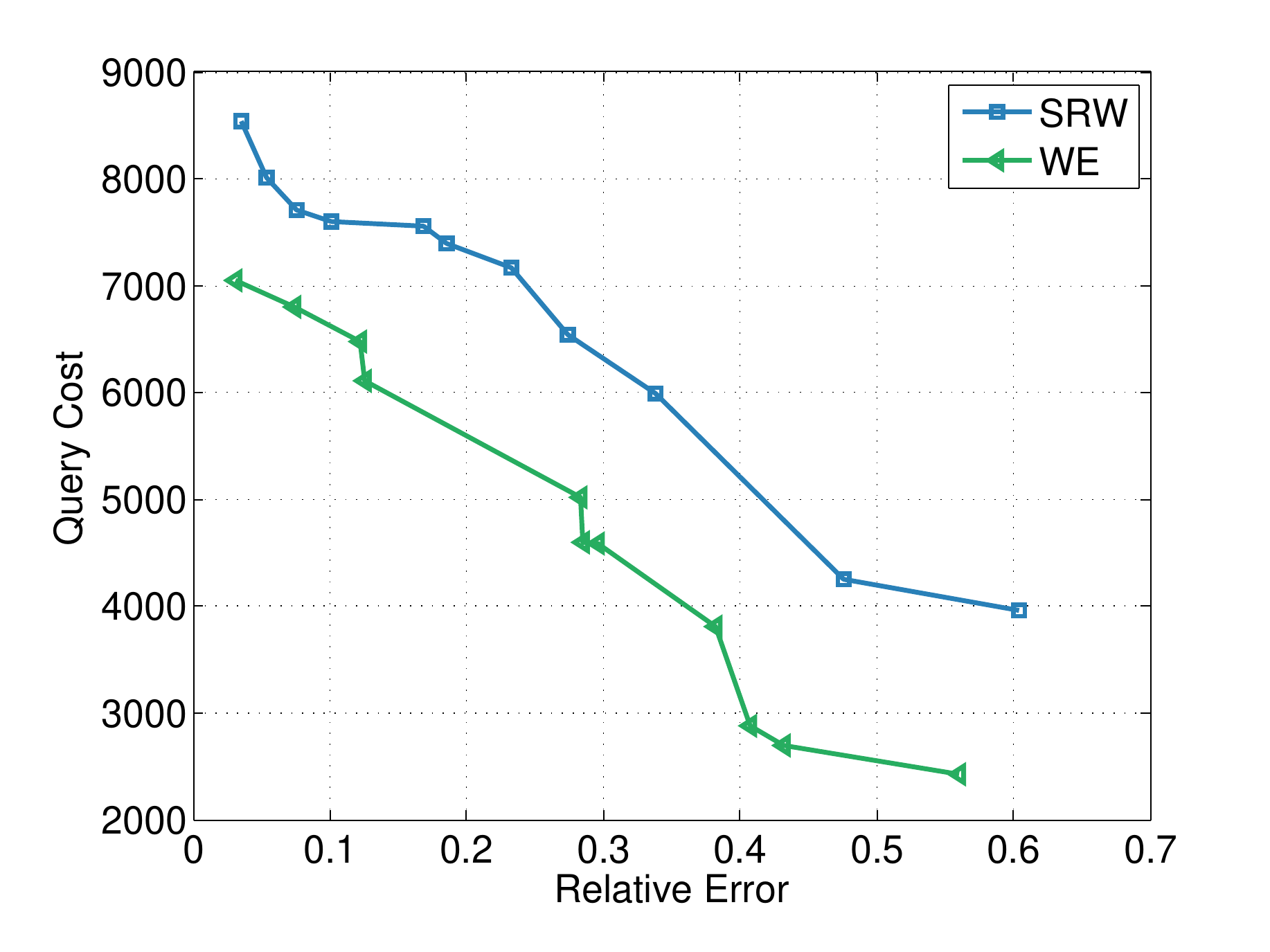}
		\vspace{-0.2in}
		\hspace{-0.4in}
	}
	\quad 
	\subcaptionbox{Average Self-description Length (SRW)\label{fig:gplus_srw_relE_vs_QueryCost_DL}}{
		\vspace{-0.05in}
		\includegraphics[scale=0.26]{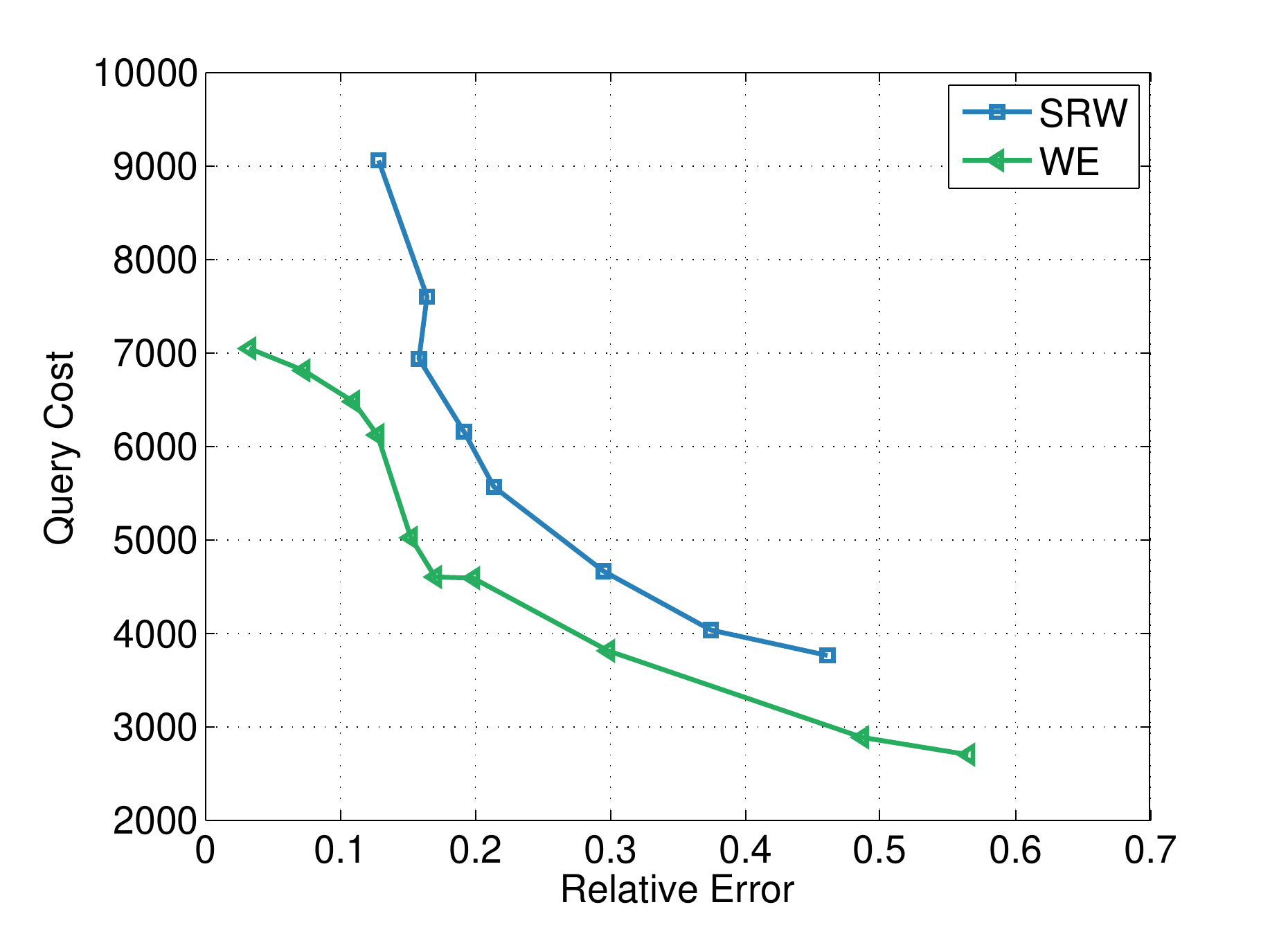}
		\vspace{-0.2in}
		\hspace{-0.4in}
	}
	\quad
	\subcaptionbox{Average Degree (MHRW)\label{fig:gplus_mhrw_relE_vs_QueryCost}}{
	\vspace{-0.05in}
		\includegraphics[scale=0.26]{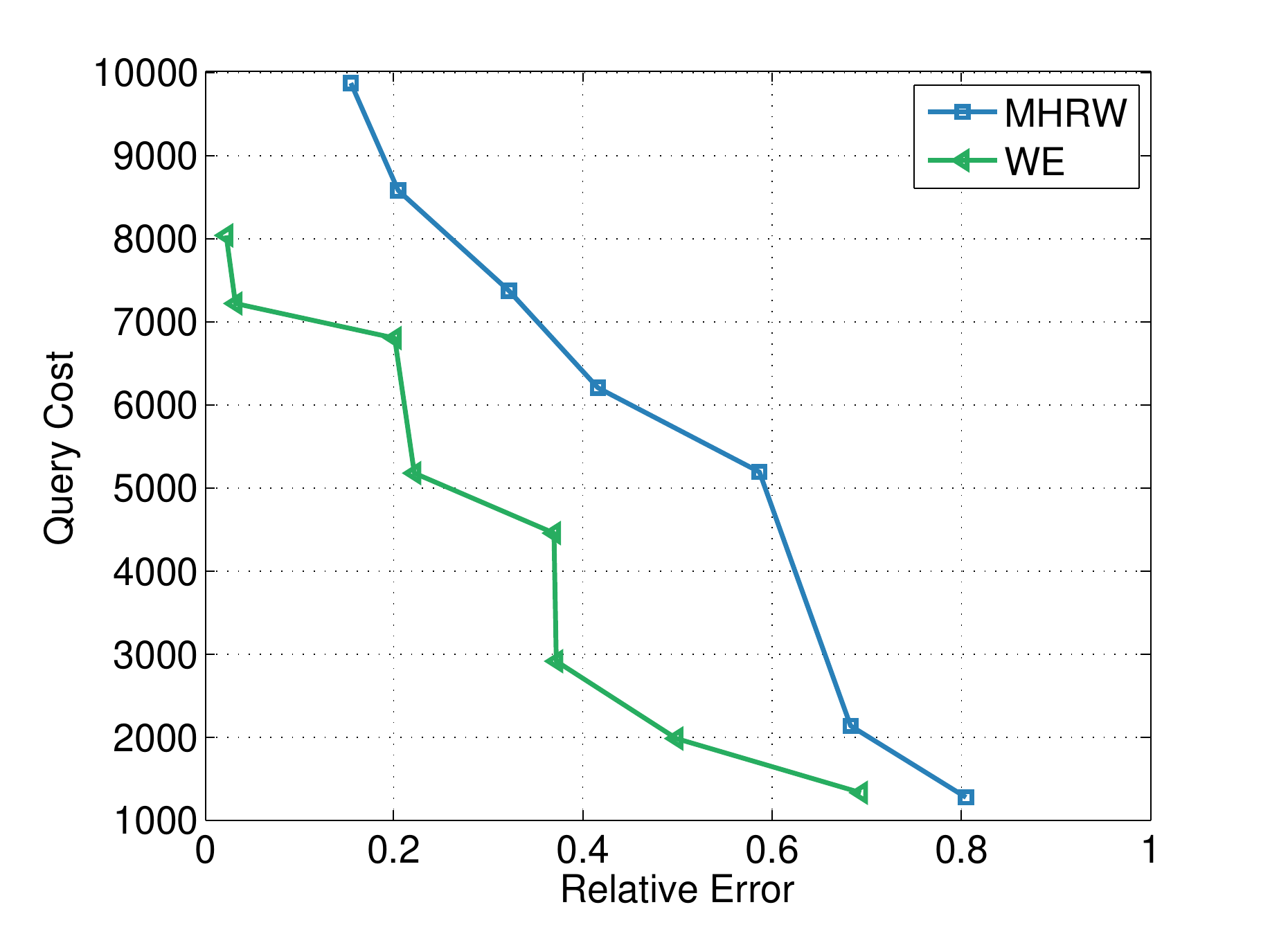}
		\vspace{-0.2in}
		\hspace{-0.4in}
	}
	\quad
	\subcaptionbox{Average Self-description Length (MHRW)\label{fig:gplus_mhrw_relE_vs_QueryCost_DL}}{
	\vspace{-0.05in}
		\includegraphics[scale=0.26]{pic/matlab_pic/Gplus-MHRW-RelativeError-QueryCost-Single}
		\vspace{-0.2in}
		\hspace{-0.4in}
	}
	\vspace{-0.1in}
	\caption{Relative Error of the Average Estimations vs Query Cost in Google Plus.}\label{fig:RelErrVSQC}
\end{figure*}
	%\vspace{-0.1in}

\begin{figure*}[ht]
        \quad
        \subcaptionbox{Yelp: Average Degree \label{fig:rev11YelpAvgDeg}}{
                \vspace{-0.05in}
                \includegraphics[scale=0.26]{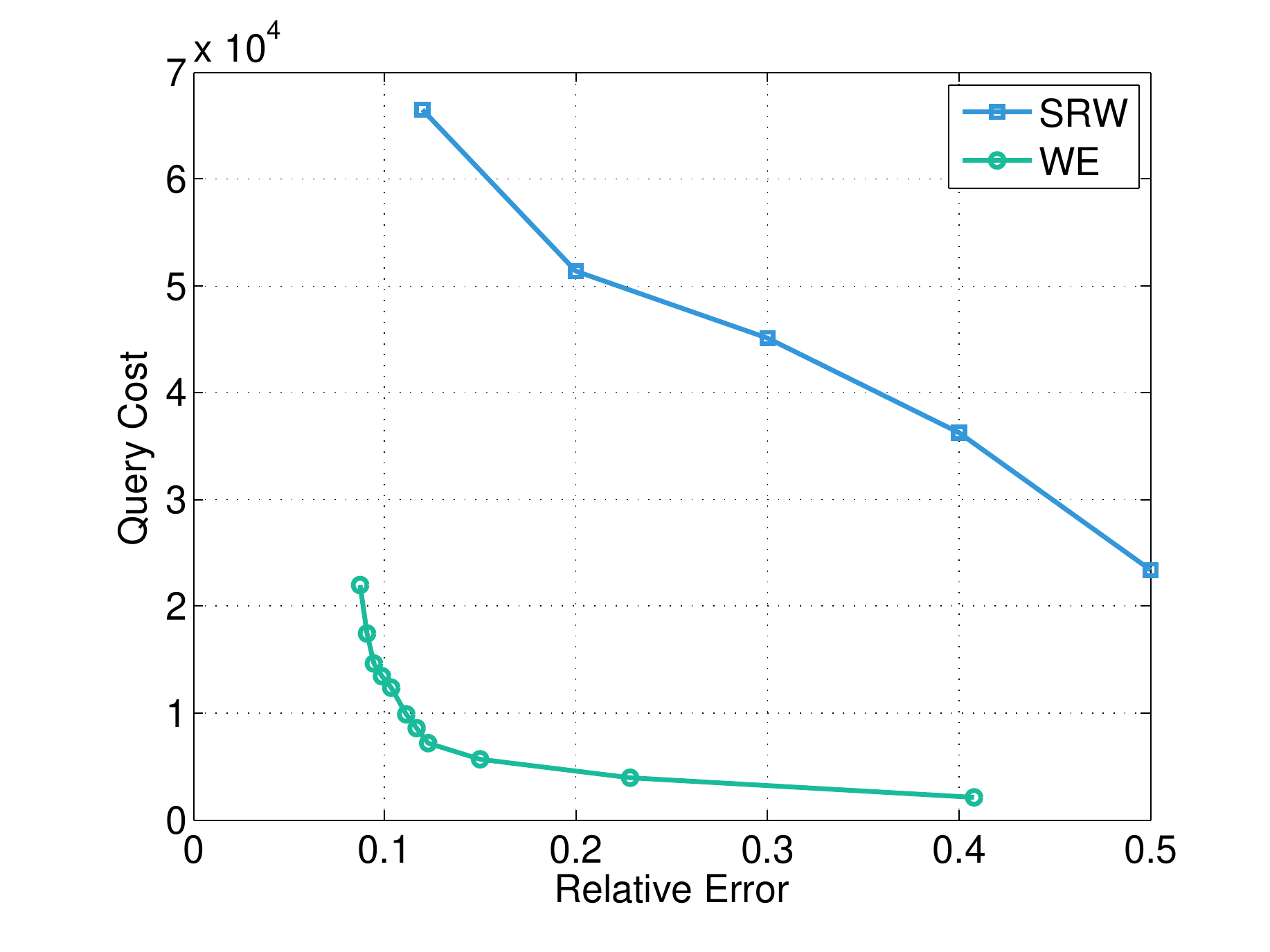}
                \vspace{-0.2in}
                \hspace{-0.4in}
        }
        \quad 
        \subcaptionbox{Yelp: Average Stars\label{fig:rev11YelpAvgStars}}{
                \vspace{-0.05in}
                \includegraphics[scale=0.26]{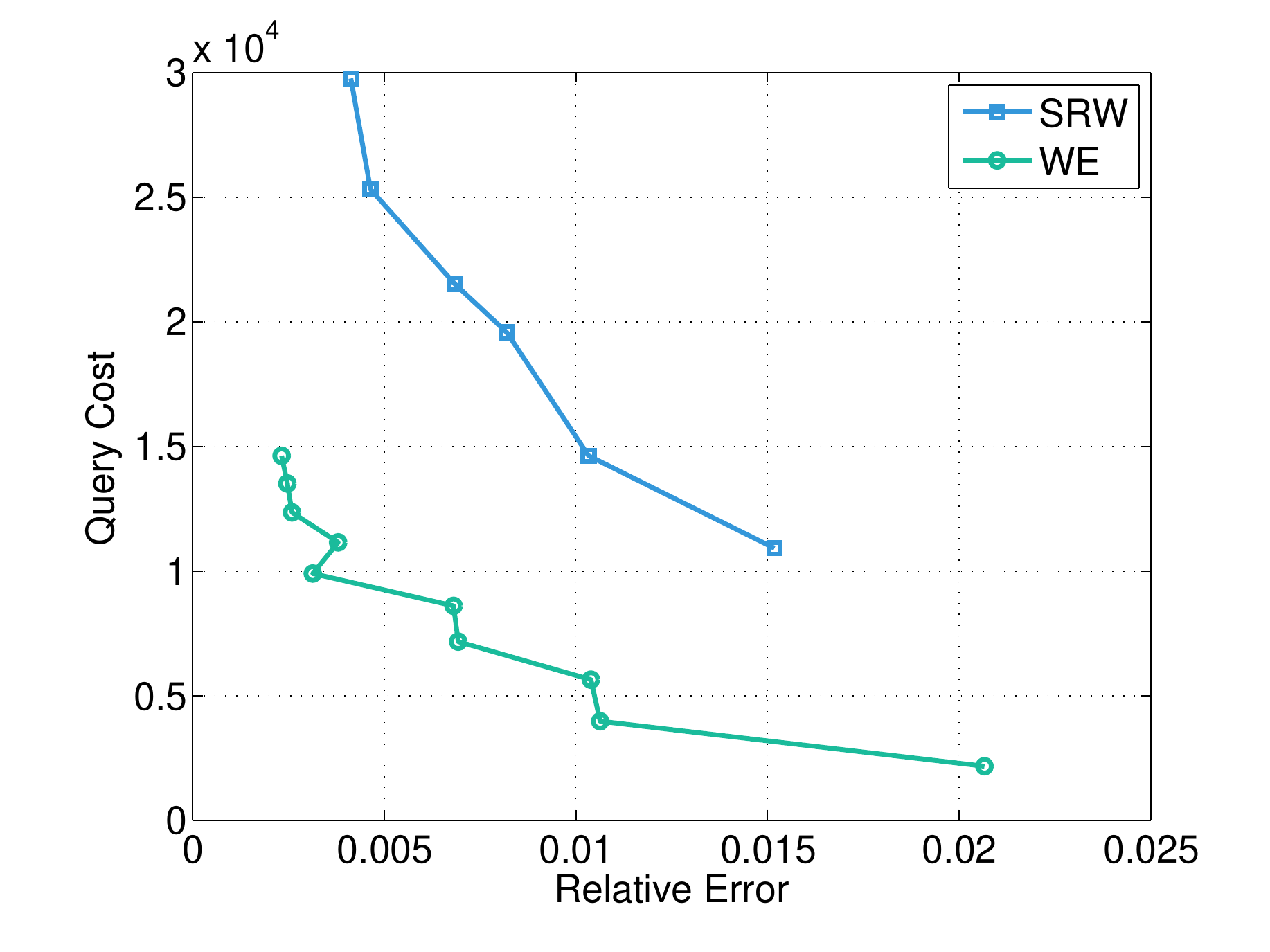}
                \vspace{-0.2in}
                \hspace{-0.4in}
        }
        %\caption{Relative Error of the Average Estimations vs Query Cost in Yelp.}
        %\label{fig:rev1Yelp1}
        \quad
        \subcaptionbox{Yelp: Average Shortest Path\label{fig:rev11YelpAvgShortestPath}}{
                \vspace{-0.05in}
                \includegraphics[scale=0.26]{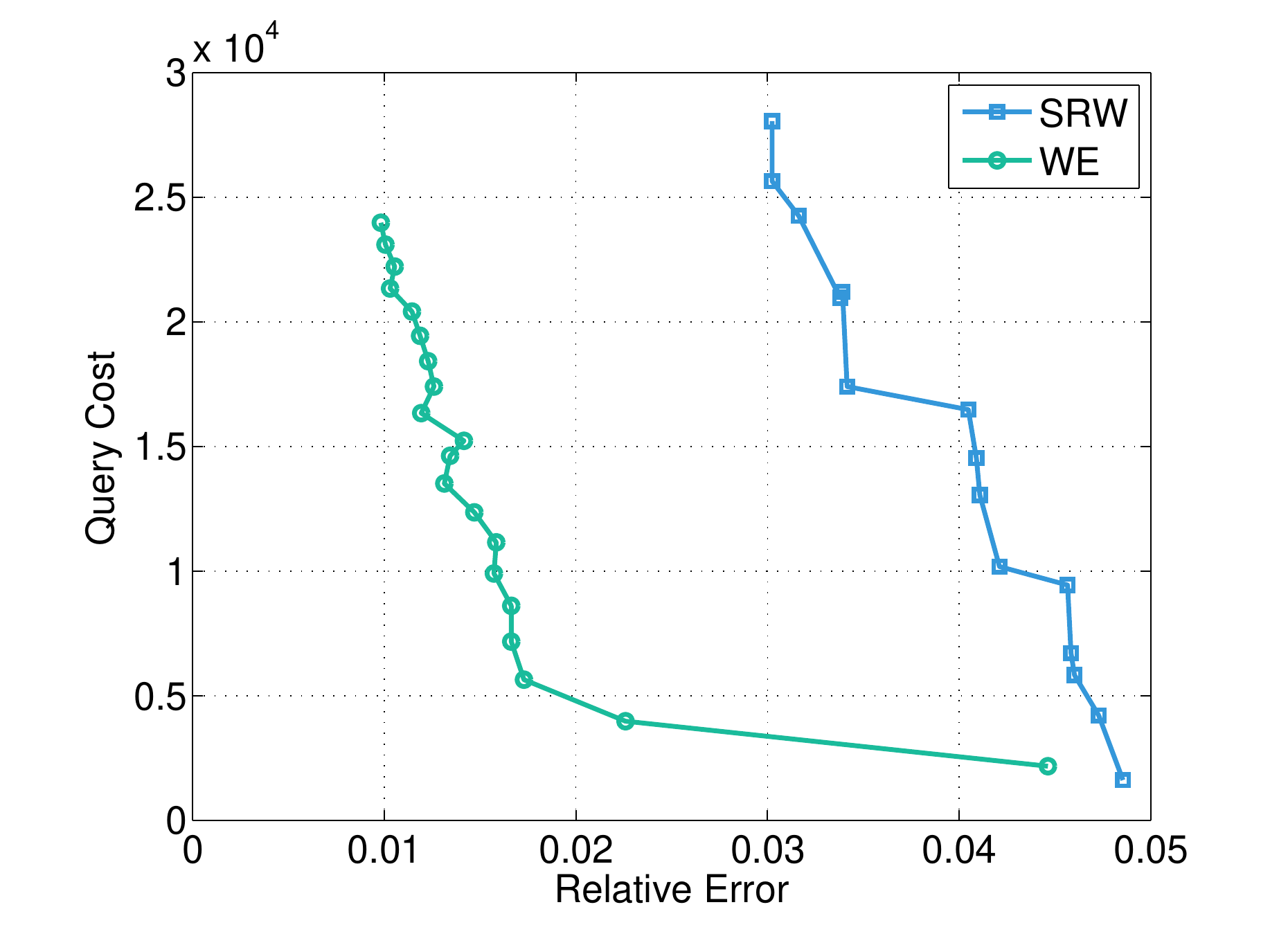}
                \vspace{-0.2in}
                \hspace{-0.4in}
        }
        \quad 
        \subcaptionbox{Yelp: Average Local Clustering Coefficient\label{fig:rev11YelpAvgLCC}}{
                \vspace{-0.05in}
                \includegraphics[scale=0.26]{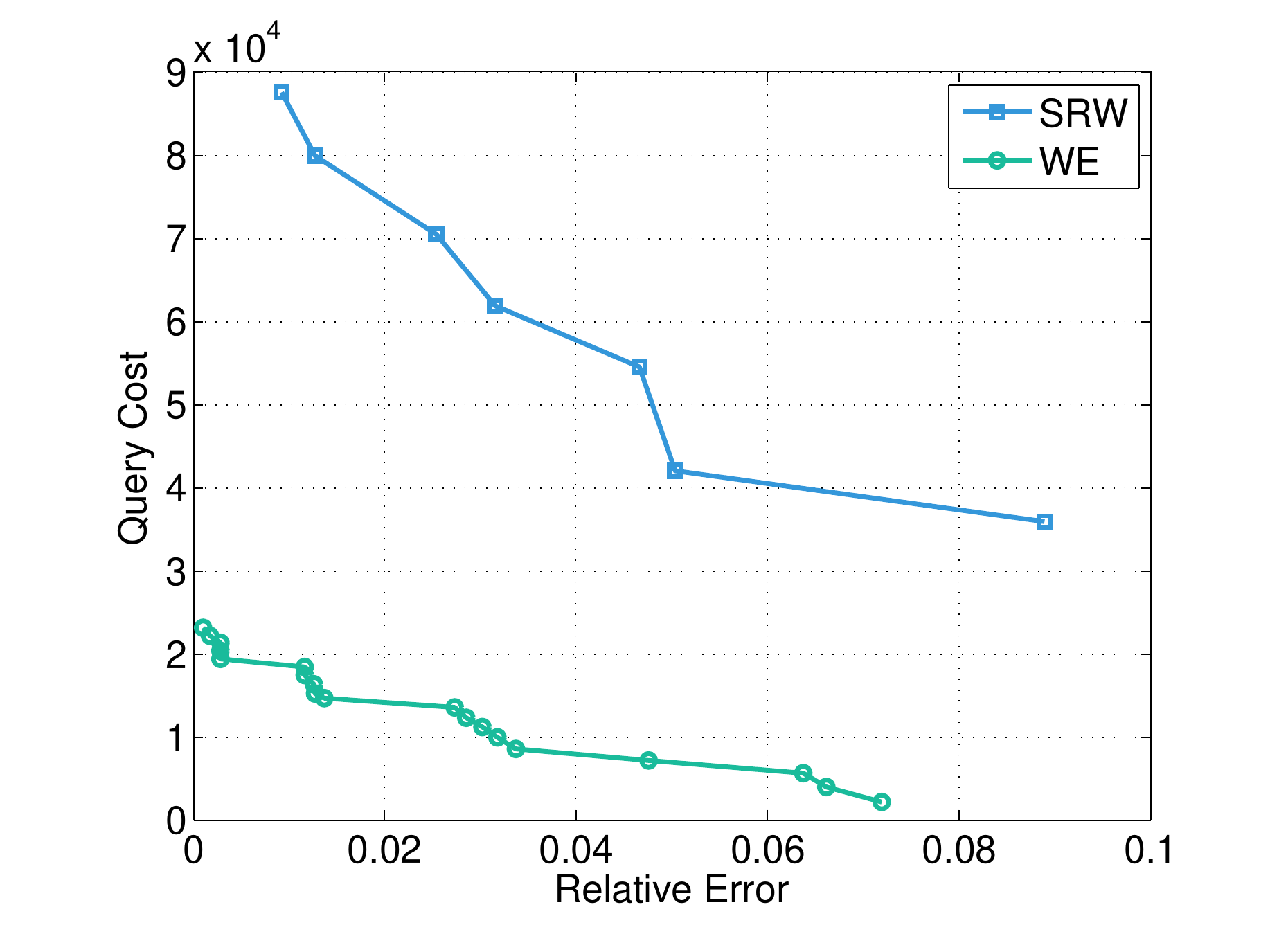}
                \vspace{-0.2in}
                \hspace{-0.4in}
        }
        \vspace{-0.1in}
        \caption{Relative Error of the Average Estimations vs Query Cost in Yelp.}
        \label{fig:Yelp}
\end{figure*}

\begin{figure*}[ht]
        \quad
        \subcaptionbox{Twitter: Average In-Degree \label{fig:rev12TwitterAvgInDeg}}{
                \vspace{-0.05in}
                \includegraphics[scale=0.26]{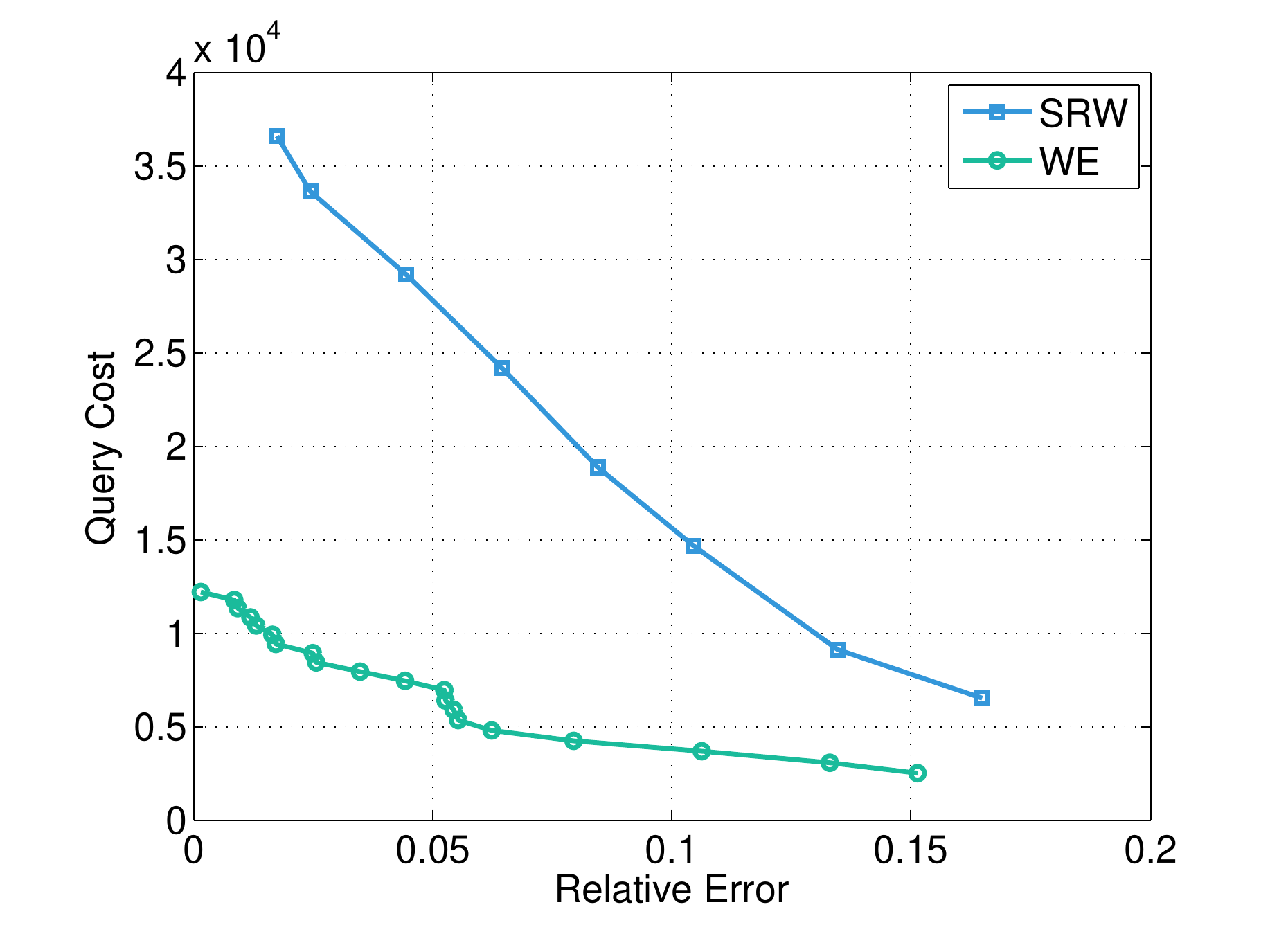}
                \vspace{-0.2in}
                \hspace{-0.4in}
        }
        \quad 
        \subcaptionbox{Twitter: Average Out-Degree\label{fig:rev12TwitterAvgOutDeg}}{
                \vspace{-0.05in}
                \includegraphics[scale=0.26]{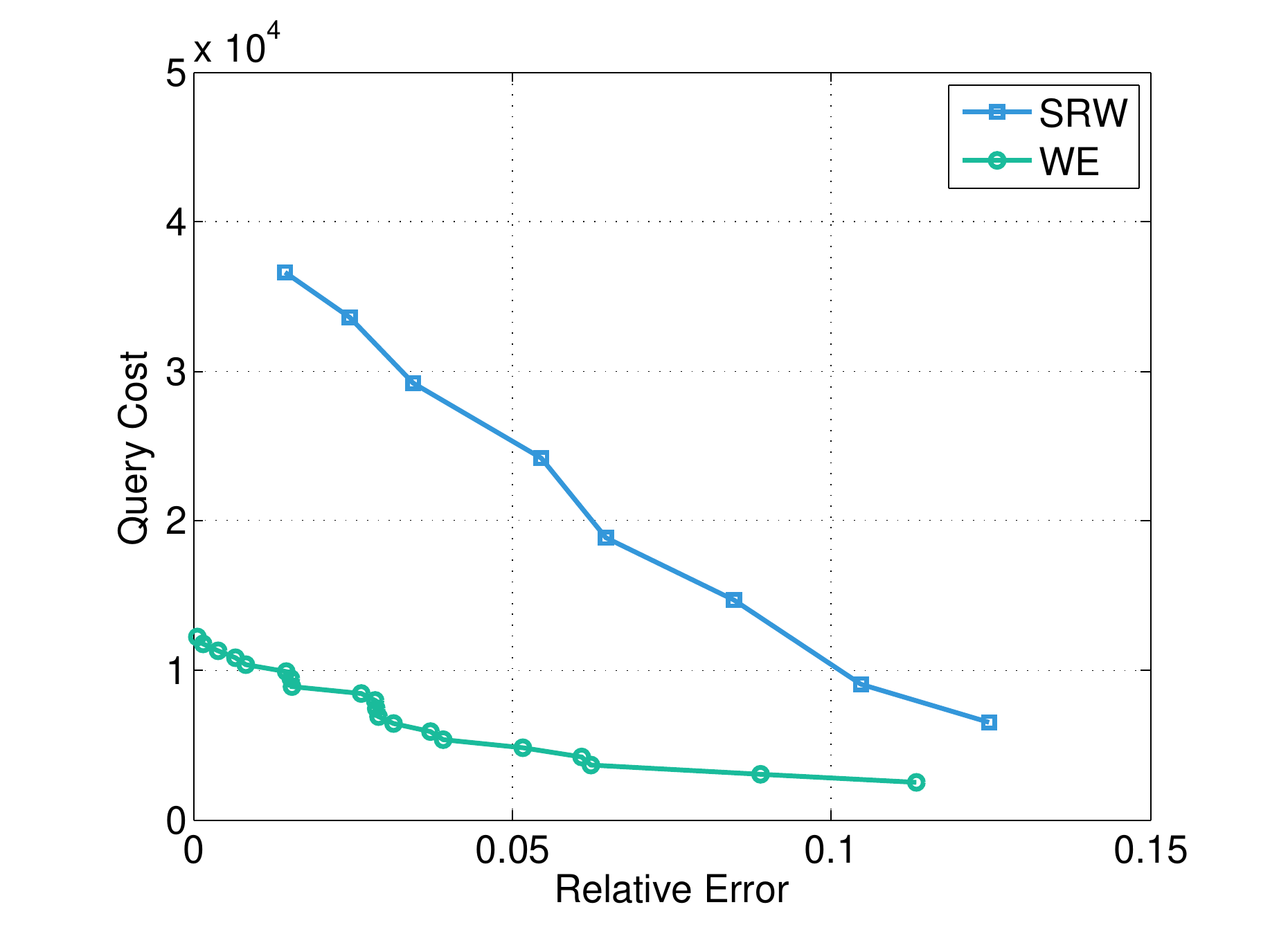}
                \vspace{-0.2in}
                \hspace{-0.4in}
        }
        \quad 
        \subcaptionbox{Twitter: Average Local Clustering Coefficient\label{fig:rev12TwitterAvgShortestPath}}{
                \vspace{-0.05in}
                \includegraphics[scale=0.26]{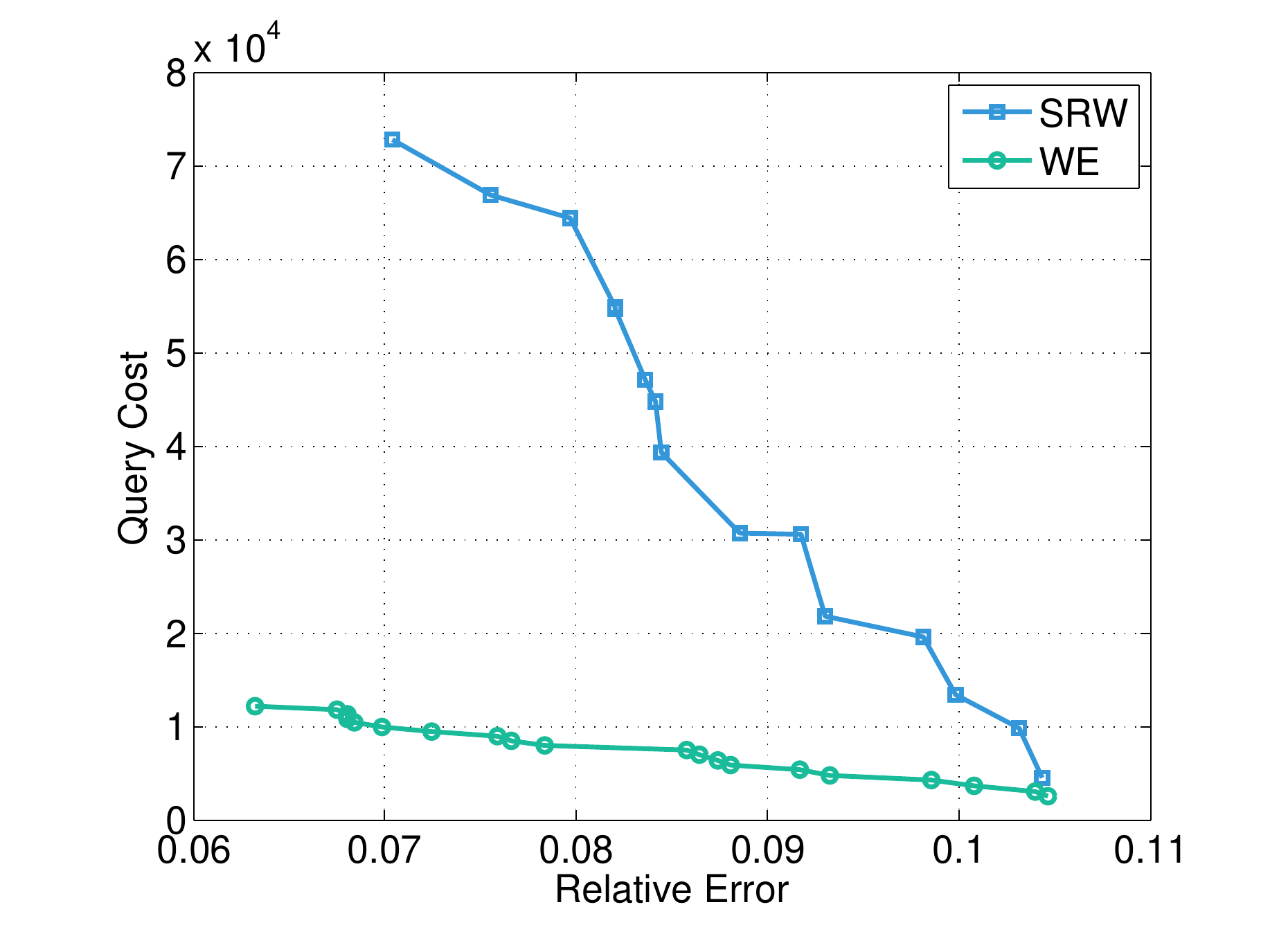}
                \vspace{-0.2in}
                \hspace{-0.4in}
                }
        \quad 
        \subcaptionbox{Twitter: Average Local Clustering Coefficient\label{fig:rev12TwitterAvgCC}}{
                \vspace{-0.05in}
                \includegraphics[scale=0.26]{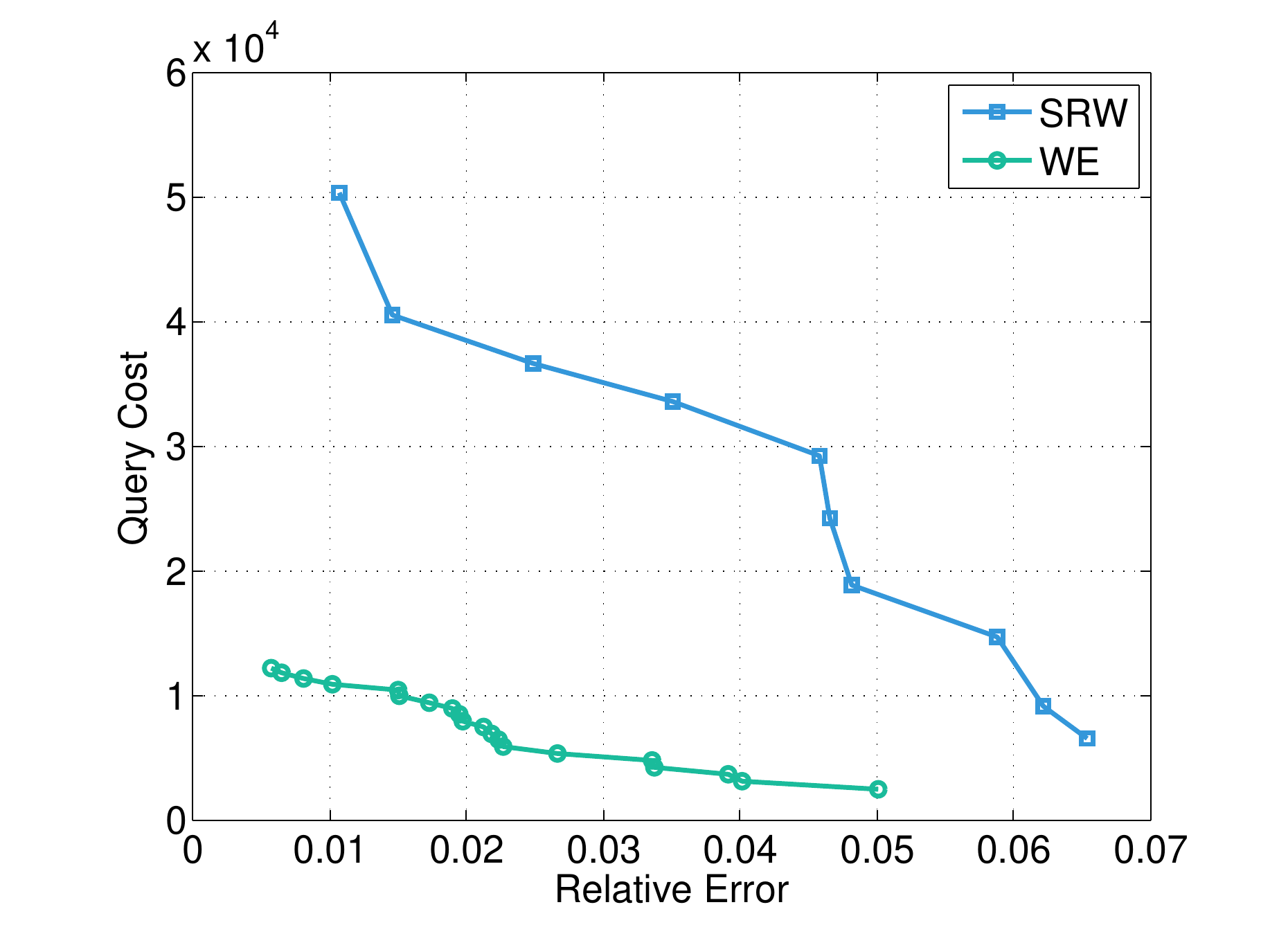}
                \vspace{-0.2in}
                \hspace{-0.4in}      
        }
        \caption{Relative Error of the Average Estimations vs Query Cost in Twitter (from SNAP repository).}
        \label{fig:Twitter1}
\end{figure*}

\begin{figure*}[ht]
	\centering
	\subcaptionbox{Average Degree (SRW)\label{fig:gplus_srw_relE_vs_QueryCost_trend}}{
		\vspace{-0.05in}
		\includegraphics[scale=0.26]{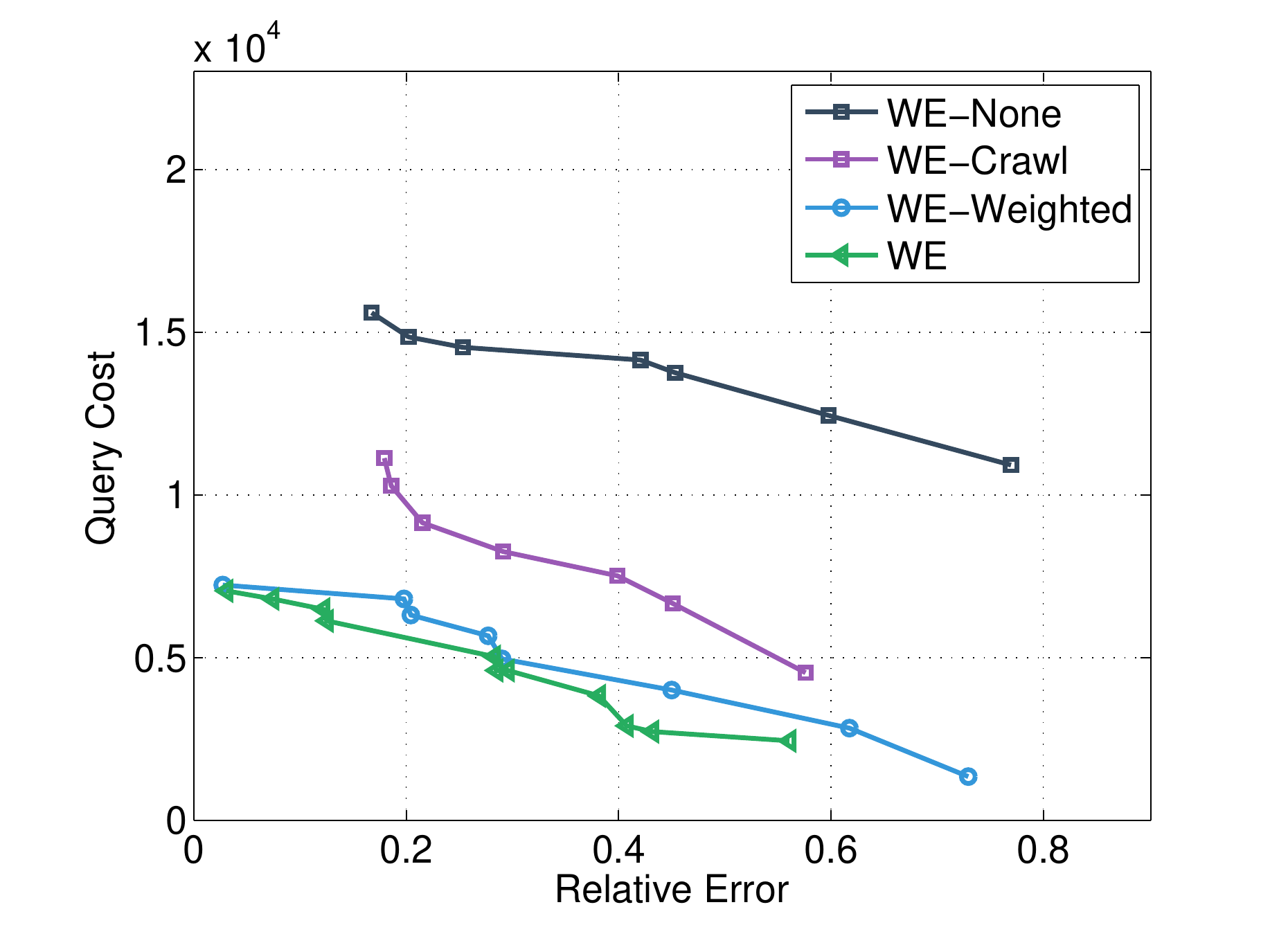}
		\vspace{-0.2in}
		\hspace{-0.4in}
	}
	\quad 
	\subcaptionbox{Average Self-description Length (SRW)\label{fig:gplus_srw_relE_vs_QueryCost_DL_trend}}{
	\vspace{-0.05in}
		\includegraphics[scale=0.26]{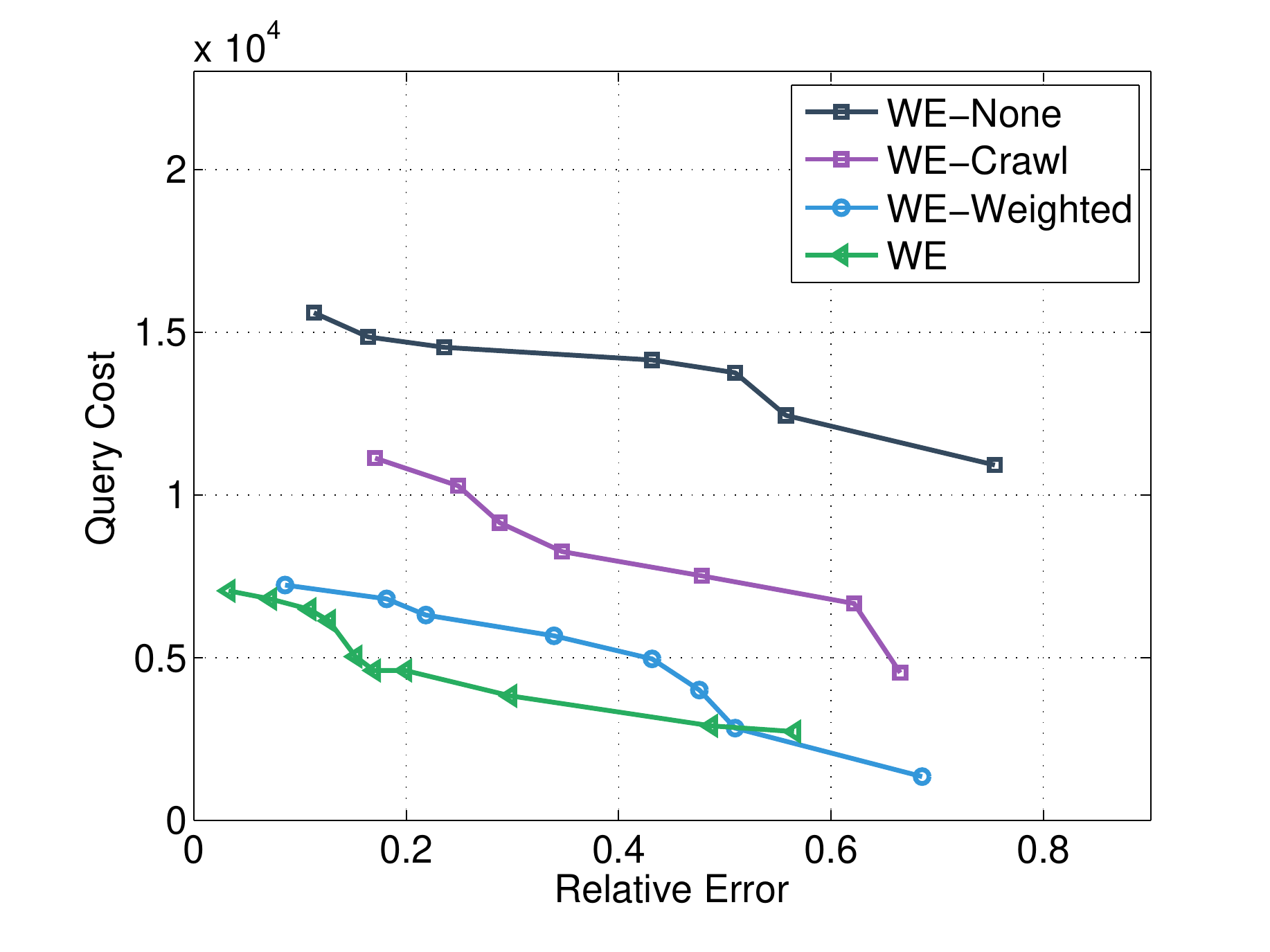}
		\vspace{-0.2in}
		\hspace{-0.4in}
	}
	\quad
	\subcaptionbox{Average Degree (MHRW)\label{fig:gplus_mhrw_relE_vs_QueryCost_trend}}{
	\vspace{-0.05in}
		\includegraphics[scale=0.26]{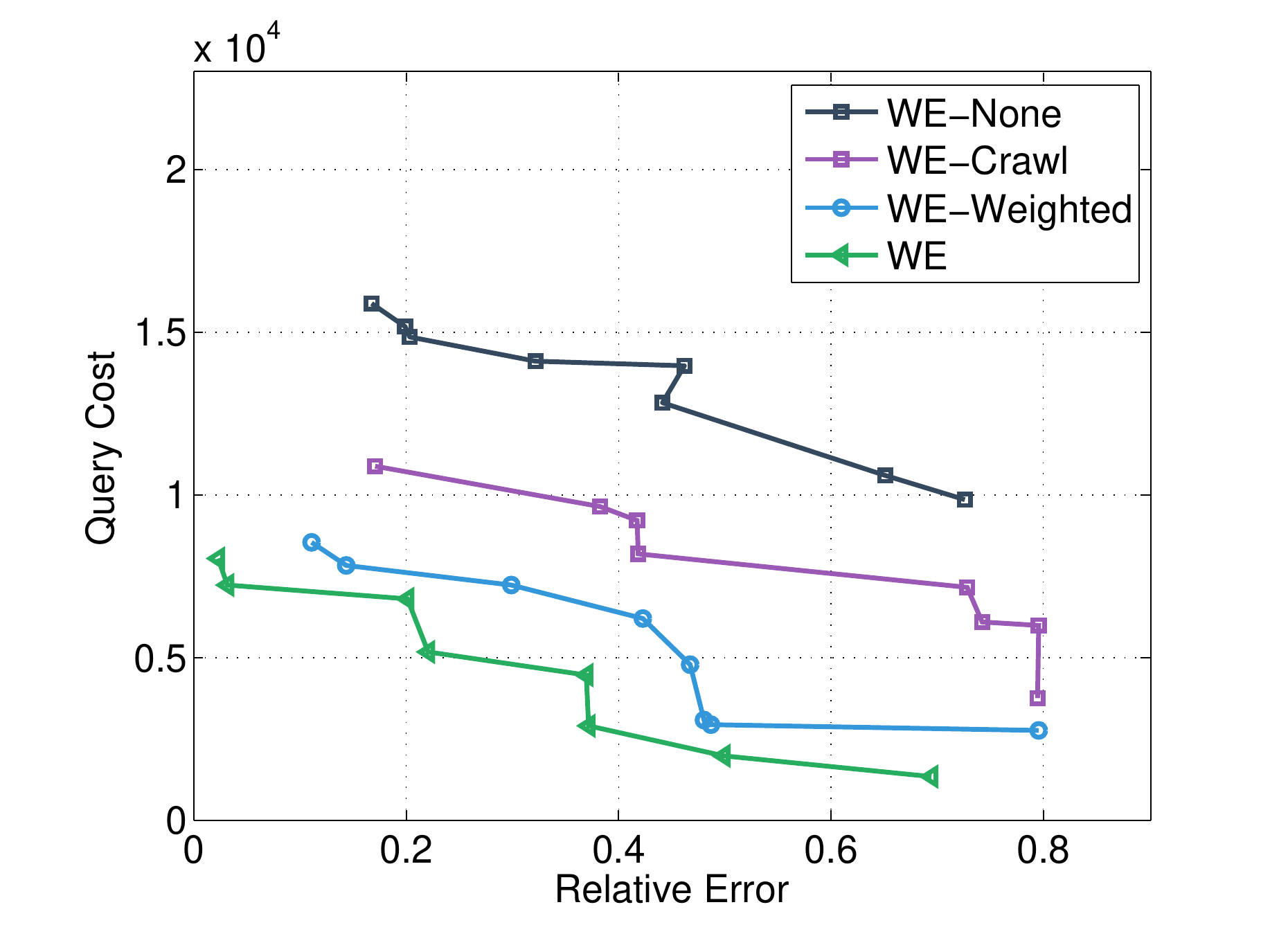}
		\vspace{-0.2in}
		\hspace{-0.4in}
	}
	\quad
	\subcaptionbox{Average Self-description Length (MHRW)\label{fig:gplus_mhrw_relE_vs_QueryCost_DL_trend}}{
	\vspace{-0.05in}
		\includegraphics[scale=0.26]{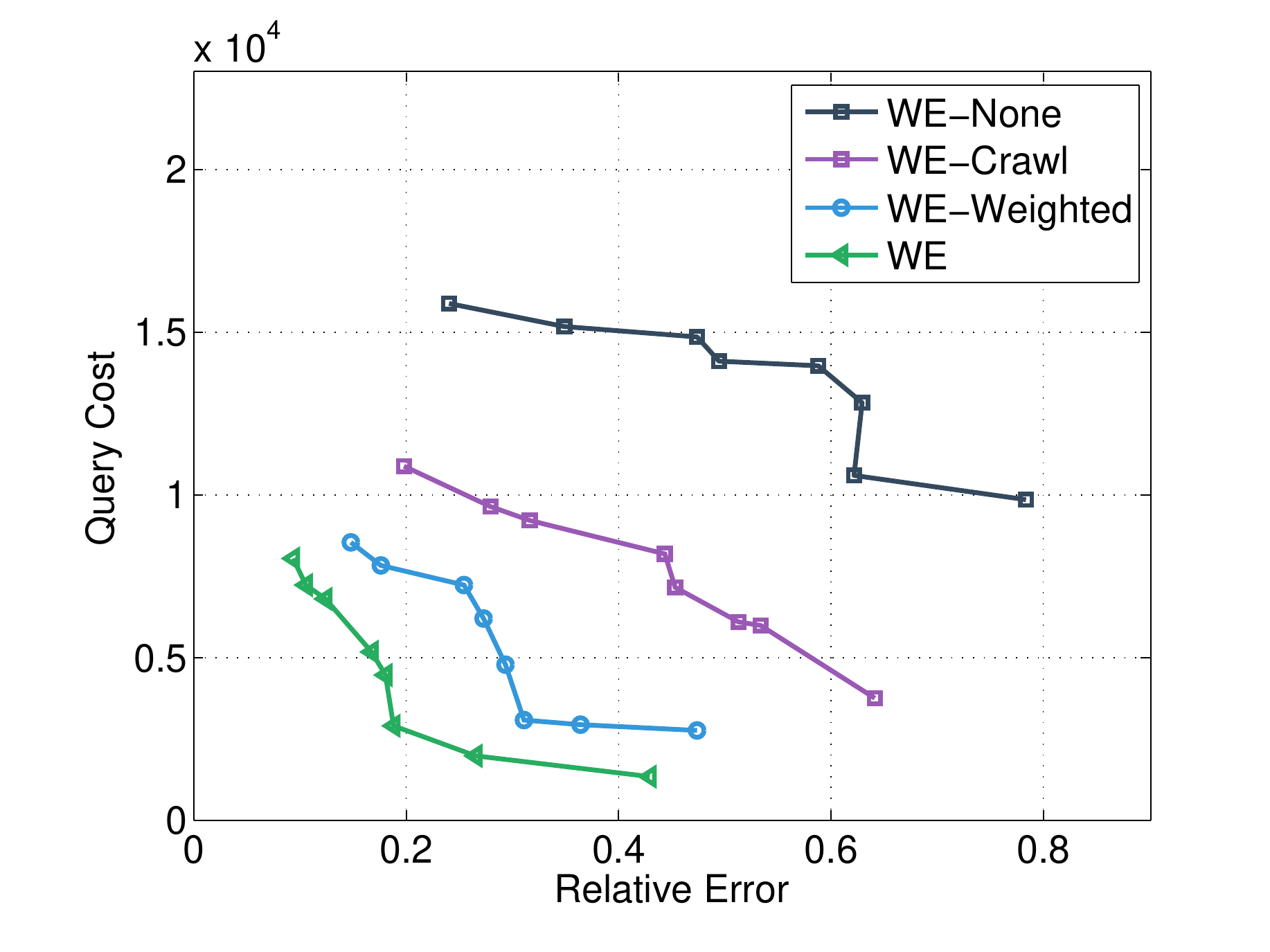}
		\vspace{-0.2in}
		\hspace{-0.4in}
	}
	\vspace{-0.1in}
	\caption{Improvement Trend: Relative Error of the Average Estimations vs Query Cost in Google Plus.}\label{fig:RelErrVSQC_trend}
	%\vspace{-0.2in}
\end{figure*}

\begin{figure*}[ht]
	\centering
	\subcaptionbox{Average Degree (SRW)\label{fig:gplus_srw_relE_vs_NumSamples}}{
		\vspace{-0.05in}
		\includegraphics[scale=0.26]{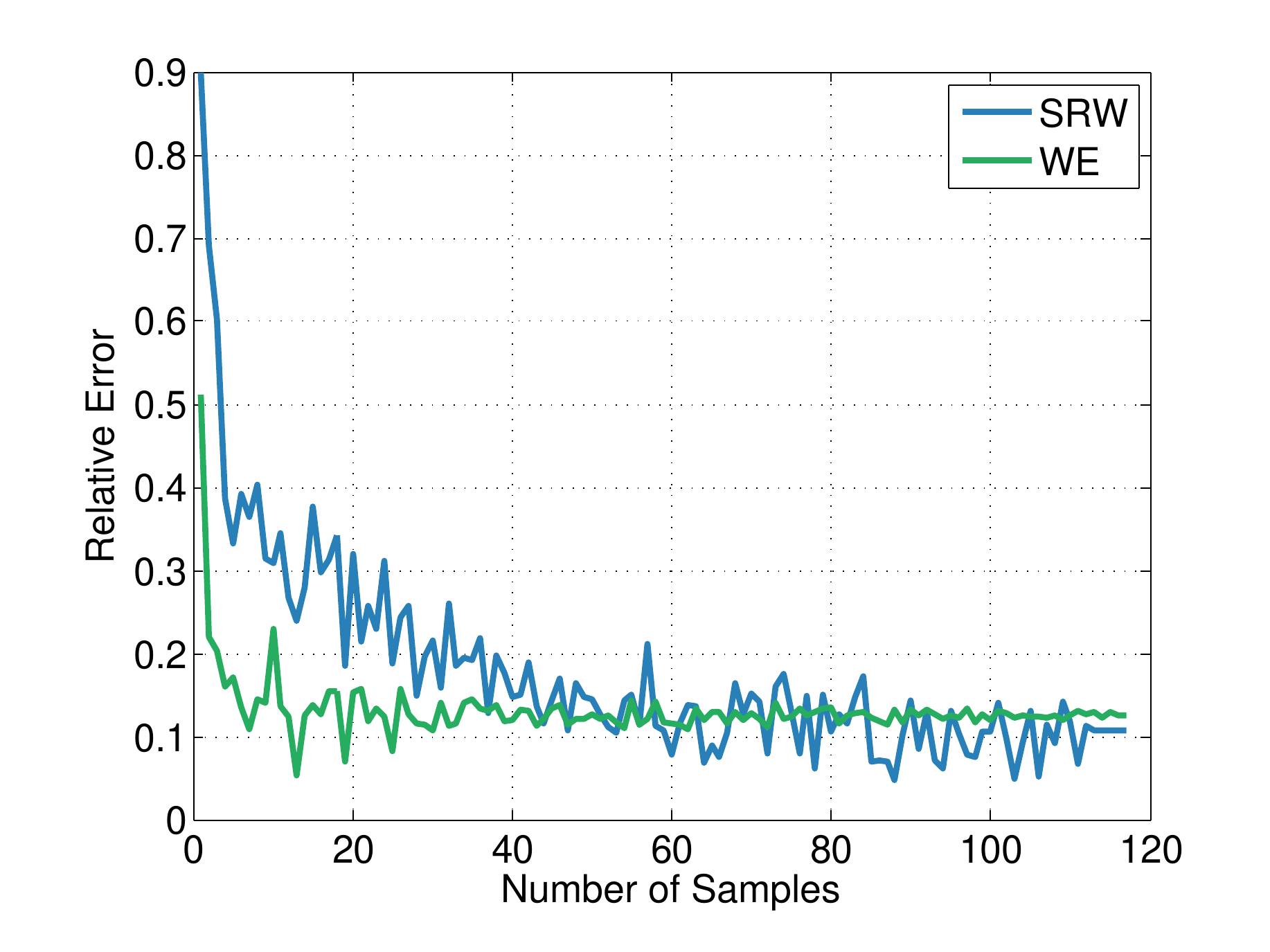}
		\vspace{-0.2in}
		\hspace{-0.4in}
	}
	\quad 
	\subcaptionbox{Average Self-description Length (SRW)\label{fig:gplus_srw_relE_vs_NumSamples_DL}}{
	\vspace{-0.05in}
		\includegraphics[scale=0.26]{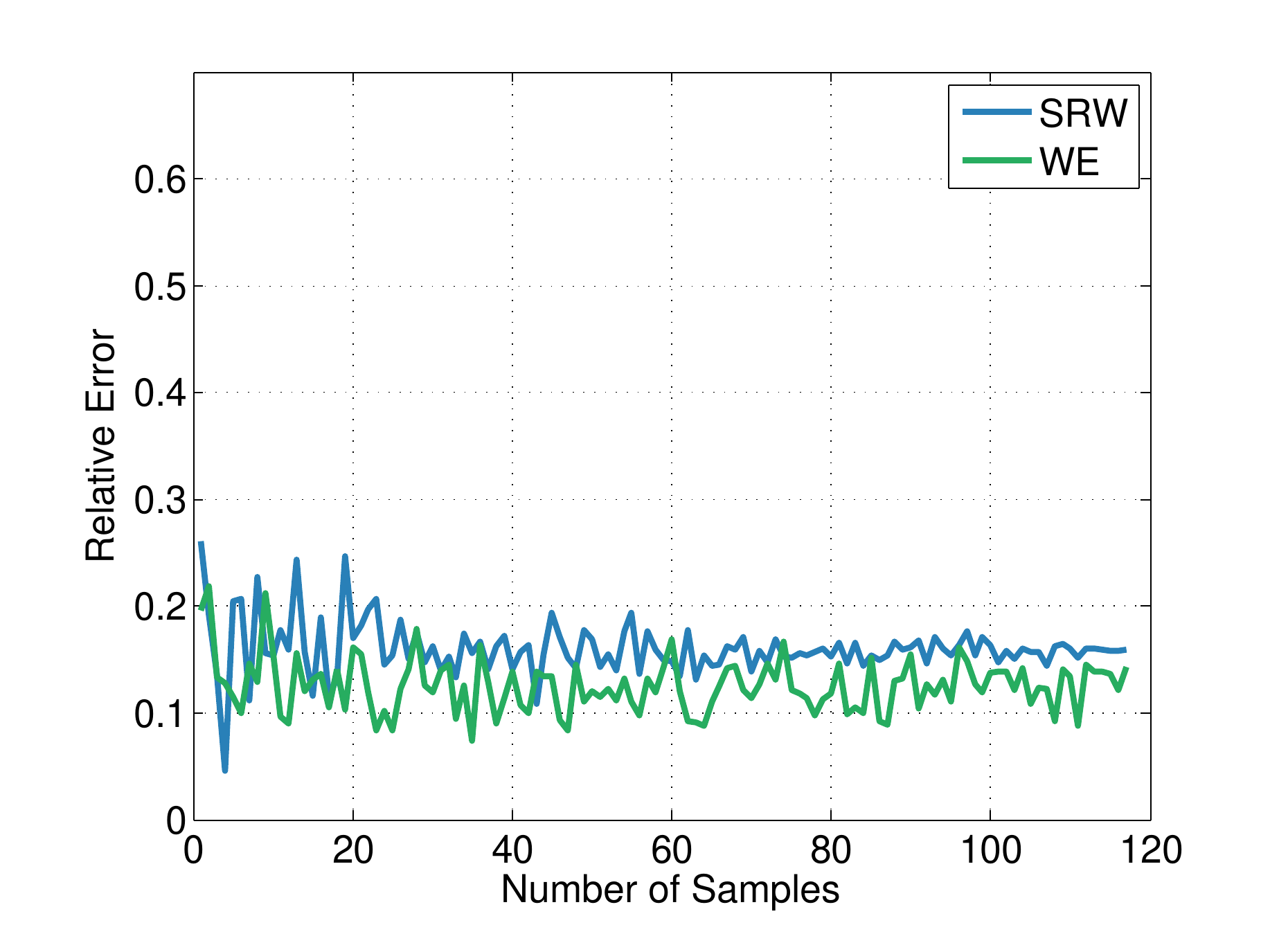}
		\vspace{-0.2in}
		\hspace{-0.4in}
	}
	\quad
	\subcaptionbox{Average Degree (MHRW)\label{fig:gplus_mhrw_relE_vs_NumSamples}}{
	\vspace{-0.05in}
		\includegraphics[scale=0.26]{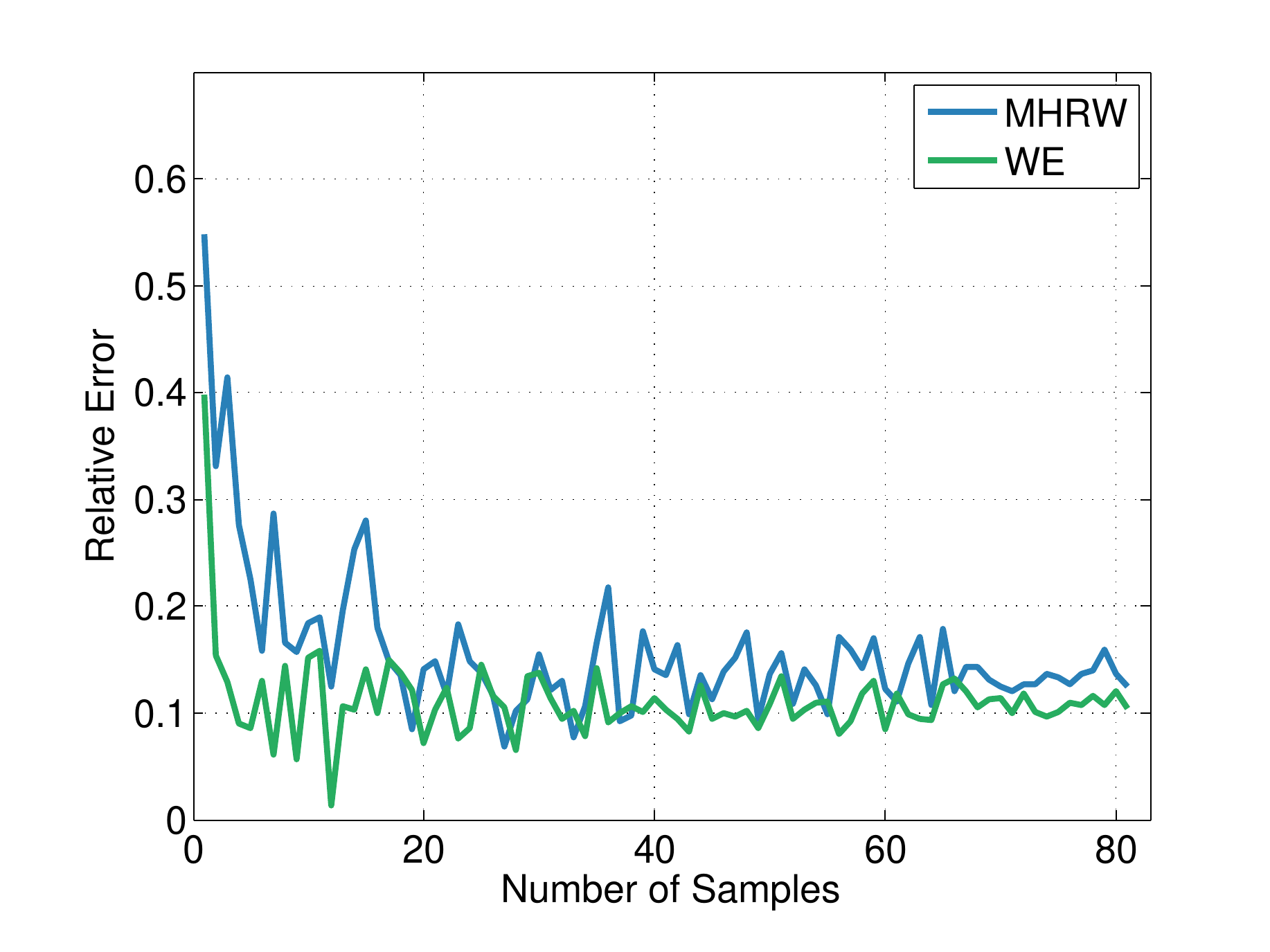}
		\vspace{-0.2in}
		\hspace{-0.4in}
	}
	\quad
	\subcaptionbox{Average Self-description Length (MHRW)\label{fig:gplus_mhrw_relE_vs_NumSamples_DL}}{
	\vspace{-0.05in}
		\includegraphics[scale=0.26]{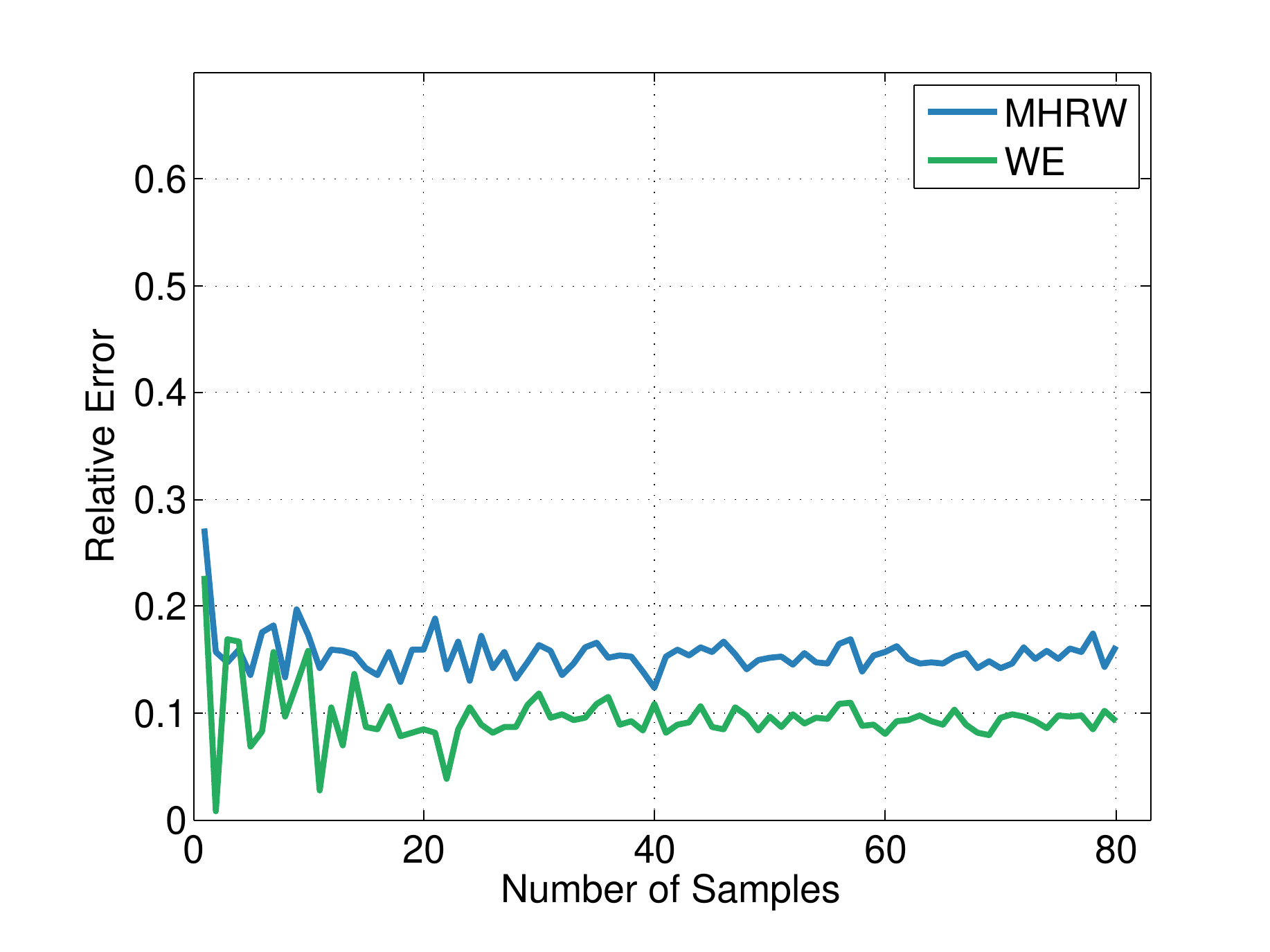}
		\vspace{-0.2in}
		\hspace{-0.4in}
	}
	\vspace{-0.1in}
	\caption{Relative Error of the Average Estimations vs Number of Samples in Google Plus.}\label{fig:RelErrVSNumSample}
	%\vspace{-0.2in}
\end{figure*}

\noindent {\bf Hardware and Platform:}
All our experiments were performed on a quad-core 2 GHz AMD Phenom machine running Ubuntu 14.04 with 8 GB of RAM. 
The algorithms were implemented in Python.

\vspace{1mm}
\noindent {\bf Datasets:} In this section, we test both synthetic and real-world data crawled from online social networks \madd{and also those which are publicly available}. Specifically, for synthetic data, we use the Barabasi-Albert model to generate scale-free networks with various sizes. For real-world data, \madd{we use three different popular social graphs, i.e. Google Plus, Yelp, and Twitter. The detail of each dataset is described bellow.} \del{we mainly use the Google Plus social graph described below. While we also conducted tests over the (largest connected component of) {\em Yelp Social Graph} published in Yelp's academic dataset (with 120,000 users and 954,000 edges), the results are similar to those of Google Plus and, due to space limitations, are not included in the paper.}

{\em Synthetic Graphs:} We generated scale-free networks with size ranging from 10,000 to 20,000 using the Barabasi-Albert model~\cite{Barabasi:1999} (implemented in Networkx~\cite{networkx:2008}), with the number of edges to attach from a new node (to existing nodes) set by default to 5.
%We also generated a scale-free network of size 1000 nodes and 6951 edges in order to compute an accurate sampling distribution.

{\em Google Plus Social Graph:} 
Google Plus\footnote{http://plus.google.com} is the second largest social networking site with more than 500 million active users.
For our experiments, we crawled a subset of the graph by starting from a number of popular users 
and recursively collecting information about their followers.
We model this dataset as an undirected graph where the users correspond to nodes and 
an edge exist between two users if at least one of them has the other in their circles.
We collected 16,405 users with more than 4.5 million connections between them.
The average degree of the graph is 560.44. We also collected each user's self description and used it in our tests.

\madd{
{\em Yelp Social Graph:} 
Yelp is a crowd-sourced local business review and social networking site with 132 million monthly visitors and 57 million reviews.
Yelp Academic dataset\footnote{\url{www.yelp.com/academic_dataset}} provides all the data and reviews of the 250 local businesses. For our experiments, we considered the largest connected component of user-user graph where nodes are the users and an edge exists between two users if they review atleast one similar business. Moreover, for each user there exists different information such as, review text, star rating, count of useful votes, count of funny votes, and count of cool votes. This graph has approximately 120,000 nodes and more than 954,000 edges.}

\madd{
{\em Twitter Social Graph:} 
Twitter is an online social network which is popular among millions of users who generate huge numbers of tweets, posts, and reviews every day. We used the Twitter dataset from Stanford's SNAP dataset repository\footnote{\url{snap.stanford.edu/data/egonets-Twitter.html}}
which is crawled from public sources and has close to 80,000 nodes and more than 1.7 million edges.}

%\madd{
%{\em Facebook Social Graph:} 
%We used a small benchmark of the Facebook dataset (facebook1684) from Stanford's SNAP dataset repository\footnote{\url{snap.stanford.edu/data/egonets-Facebook.html}}. It consists of $775$ nodes and $14006$ edges.}

\vspace{1mm}
\noindent {\bf Algorithms Evaluated:} 
We evaluated two traditional random walks - simple random walk (SRW) and Metropolis Hastings random walk (MHRW) - and the application of our WALK-ESTIMATE (WE) algorithm over each of them. Additionally, in order to evaluate the effect of the variance reduction heuristics, initial crawling and weighted sampling, proposed in Section\ref{sec:estimate}, we compared the performance of our main algorithm (WE) with three variations WE-None, WE-Crawl, and WE-Weighted. WE-None uses neither heuristics, WE-Crawl uses initial crawling only, while WE-Weighted uses weighted sampling only.

\vspace{1mm}
\noindent {\bf Parameter Settings:}
For SRW and MHRW, we used the Geweke convergence monitor~\cite{Cowles96markovchain} with threshold $Z \leq 0.1$. For our WALK component, we set the walk length to $2d+1$ where $d$ is the (estimated) graph diameter (set to $d = 7$ for Google Plus). For initial crawling, we set $h = 1$ for Google Plus and $h=2$ for the synthetic graphs\madd{, Yelp, and Twitter}. For weighted sampling, we set $\epsilon = 0.1$. \madd{We also considered 10th percentile of the estimation of sampling probabilities as the scale factor, $\min_{v \in V} (p(v)/q(v))$, for the acceptance rejection.}

\vspace{1mm}
\noindent {\bf Performance Measures:}
Given the large sizes of the graph being tested, it is impractical to precisely measure the bias of obtained samples. Thus, \madd{for the large graphs} we indirectly measured the sample bias by the relative error of AVG aggregate estimations generated from the samples (i.e., $|\tilde{x} - x|/x$ where $x$ and $\tilde{x}$ are the precise and estimated values of the aggregate, respectively). \madd{We used arithmetic and harmonic mean for the uniform and non-uniform  samples  respectively. We evaluate AVG aggregate of the measures related to the 
topological properties (such as degree, shortest path length, local clustering coefficient) as well as measures associated with a node (such as number of stars in Yelp, and user's self-description in Google Plus).} Specifically, for Google Plus, we considered two aggregates: the AVG degree and the AVG number of words in a user's self-description. \madd{ For Yelp, we estimated average number of stars and the topological properties average degree, average shortest path length and average local clustering coefficient.  For Twitter we estimated average in and out degrees (i.e. number of followers and followees), average shortest path length, and local clustering coefficient.} For the synthetic graphs, we only show the results for the the AVG degree. \del{was used because of the lack of other attributes. }
\madd {
Note that for each obtained datapoint in the results we reported average value of the 100 experiments.
Moreover, we computed exact bias of our algorithms. In order to compute an accurate sampling distribution, each node has to be sampled multiple times. This process is extremely time consuming and requires substantial query budget. Thus, computing the exact sampling distribution (and hence the bias) could only be done over small graphs. We used small scale-free network of size 1000 nodes and 6951 edges for this purpose. }
\subsection{Experimental Results}
\label{subsec:expResult}
\madd{\vspace{1mm}
\noindent {\bf Aggregate estimation:}}
We started by testing how WE performs against the baseline SRW and MHRW on the fundamental tradeoff in social network sampling - i.e., sample bias vs.~query cost. The results over Google Plus are shown in Figure 6. Specifically, subgraphs (a) and (b) depict SRW and WE with SRW as input random walk, while (c) and (d) are corresponding to MHRW. The AVG aggregate used to measure sample bias is AVG degree for (a) and (c) and AVG self-description length for (b) and (d). As one can see from the figure, our algorithm significantly outperforms SRW and MHRW - i.e., offers substantially smaller relative error for the same query cost - on both aggregates tested. \madd{Figure~\ref{fig:Yelp} shows the results over the Yelp dataset. Specifically, subgraph (a) shows the AVG aggregate of the node attributes, i.e. star rating while subgraphs (b), (c), and (d) is the results of  AVG aggregate of the topological properties, i.e,  degree, shortest path length, and clustering coefficient. The results confirm the fact that WE provides smaller relative error with the same query cost. We also test our algorithm in Twitter dataset and the results in Figure~\ref{fig:Twitter1} shows that AVG of the in-degree, out-degree, shortest path length, and clustering coefficient of the samples retrieved by the WE has smaller relative error than SRW for the same query cost.} 

We also study how our proposed variance reduction techniques improves the efficiency of our algorithm by comparing the performance of WE, WE-None, WE-Crawl and WE-Weighted, again according to how the relative error of aggregate estimation changes with the query cost. Figure~\ref{fig:RelErrVSQC_trend} depicts the result for Google Plus, according to the same subgraph setup (i.e., random-walk/aggregate combination) as Figure 6. One can see that, as expected in all cases, WE outperforms the single-heuristics variants, which in turn outperform the theoretical variant of the algorithm.

Next, we tested the quality of samples obtained by WE, in order to verify that the above-tested performance enhancements are not merely from walks being shorter, but from an equal or higher quality sample as well. To this end, Figure 8 depicts how the relative errors on AVG estimations change with the number of samples produced by SRW, MHRW, and the corresponding WEs, respectively on Google Plus - again according to the same subgraph setup as Figure 6. One can see that in all cases, the samples produced by WE achieves smaller relative error than the corresponding input random walks (with Geweke convergence monitor), indicating the smaller sample bias achieved by WE.

Finally, we performed the same tests, i.e., relative error vs.~query cost and vs.~sample size over the synthetic graphs with size varying from 10,000 to 20,000 - the results are depicted in Figure\ref{fig:syntheticRes}. Here we used SRW as the input random walk and AVG degree as the aggregate to be estimated. One can see from the figure that, while both WE and SRW requires a higher query cost for sampling over a larger graph, WE consistently outperforms SRW in all tested cases.

\madd{
\vspace{1mm}
\noindent {\bf Exact bias:}
 We computed exact bias of our algorithm over two small networks. Recall that sample bias is defined as the distance between actual sampling distribution (i.e., the probability distribution according to which each node is drawn as a sample)
and a predetermined target distribution such as uniform distribution. Multiple distance measures such as {\em variation distance}, K-L divergence can be used to quantify the bias. We measure the actual sampling probability for every node as follows.
We run the sampling algorithm with an extremely large query budget so that each node is sampled multiple times (eg 1000 times).
The sampling distribution is computed by counting the number of times each node is visited 
and is then compared with the target distribution to derive the bias. 
In order to compute an accurate sampling distribution, each node has to be sampled multiple times.
This process is extremely time consuming and requires substantial query budget.
Thus, computing the exact sampling distribution (and hence the bias) could only be done over small graphs. 

We used small scale-free network of size 1000 nodes and 6951 edges for this purpose. 
We compare three different sampling distributions:
(1) theoretical target distribution denoted as {\em theo}
(2) WE sampling distribution and 
(3) SRW sampling distribution. 
Table~\ref{tbl:rev2VariationalDistSynthGraph} provides the distance details and Figure~\ref{fig:rev2CompareBiasBA} shows the result of the experiment. The results confirm that the sampling distribution of WE is much closer than that of SRW.
\begin{table}[ht]
\small
    \centering
    \begin{tabular}{|c|c|c|}
        \hline
        {\bf Distance Measure} & {\bf Dist(Theoretical, SRW)}         & {\bf Dist(Theoretical, WE)} \\ \hline
        $\ell_{\infty}$     &  0.0081  & 0.00549 \\ \hline
        K-L Divergence      & 0.47529   & 0.01834 \\ \hline
    \end{tabular}
    \caption{\small{Distance between Theoretical Sampling Distribution and that of SRW/WE (Synthetic)}} 
    \label{tbl:rev2VariationalDistSynthGraph}
\end{table}
}

\section{Related Work}

\noindent {\bf Random Walks:} As discussed in Section 2, random walk is an MCMC based sampling method extensively studied in statistics (e.g, \cite{Gilks99}). Besides the traditional random walk designs described in Section 2, two key related concepts used in this paper are {\em burn-in period}, which captures the number of steps a random walk takes before converging to its stationary distribution \cite{Meyn:2009:MCS:1550713}; and convergence monitors, heuristic techniques for measuring on-the-fly how long the burn-in period should be (i.e., determining when a random walk should be stopped and a sample taken). Examples here include Geweke, Raftery and Lewis, Gelman and Rubin convergence monitors (see \cite{Cowles96markovchain} for a comprehensive review).

\vspace{1mm}
\noindent {\bf Random Walks on Social Networks:} There have been extensive studies (e.g., \cite{Leskovec2006a, Airoldi:2005:SAP:1117454.1117457, Kurant}) on the sampling of online social networks which feature graph browsing interfaces \cite{nan2011} that enforce the aforementioned local-neighborhood-only access limitation. \cite{Leskovec2006a} introduces a taxonomy of sampling techniques - specifically, node sampling, edge sampling and subgraph sampling. For the problem studied in this paper - i.e., sampling nodes from online social networks - the usage of multiple parallel random walks is studied in \cite{Alon2008}, while several studies (e.g., \cite{Leskovec2006a}) demonstrates the superiority of random walk techniques such as Simple Random Walk (SRW), Metropolis-Hastings Random Walk (MHRW) over baseline solutions such as Breadth First Search (BFS) and Depth First Search (DFS).  An interesting issue studied in the literature is the comparison between SRW and MHRW over real-world social networks - the finding in \cite{Gjoka2010} is that MHRW is less efficient than SRW because MHRW mixes more slowly. While our technique discussed in the paper is transparent to the input random walk, a similar comparison result can be observed from our experimental results as well.

\vspace{1mm}
\noindent {\bf Improving the Efficiency of Random Walks:} Most related to this paper are the previous studies on improving the efficiency of random walks over online social networks. To this end, \cite{Jin} combines random jump and MHRW to efficiently retrieve uniform random node samples from an online social networks. Nonetheless, in order to enable random jumps, this study assumes access to an {\em ID generator} which can sample a node uniformly at random with a high hit rate - an assumption that is not satisfied by many online social networks and not assumed in this paper. Another study \cite{Ribeiro:2010:ESG:1879141.1879192} considers frontier sampling which converts input samples with uniform distribution to output samples with arbitrary target distribution. Our study in the paper is transparent to this work - as we address the problem of generating sample nodes rather than assuming access to samples with pre-determined distributions. Also related to efficiency enhancements are \cite{Lee:2012:BRW:2254756.2254795} which introduces a non-backtracking random walk that converges faster with less asymptotic variance than SRW and \cite{Das:2013:FRW:2510649.2511334} which modifies the topology of the underlying graph on-the-fly in order to get a faster random walk on the modified graph. A key difference between WALK-ESTIMATE and all these existing studies is that while all existing techniques still wait for convergence to the target distribution, we do not wait for convergence, but rather proactively estimate the sampling distribution and then use rejection sampling to achieve the target distribution.

\section{Final Remarks}

In this paper, we developed WALK-ESTIMATE, a general purpose technique for faster sampling of nodes over an online social network with any target (sampling) distribution. Our key idea is to conduct a random walk for a small number of steps, and follow it with a proactive estimation of the sampling distribution of the node encountered before applying acceptance-rejection sampling to achieve the target distribution. Specifically, we presented two main components of WALK-ESTIMATE, WALK which determines the number of steps to walk, and ESTIMATE which enables an unbiased estimation of the sampling distribution. Theoretical analysis and extensive experimental evaluations over synthetic graphs and real-world online social networks demonstrated the superiority of our technique over the existing random walks.

\begin{figure}[ht]
	\centering
	\hspace{-0.4in}
	\subcaptionbox{Relative Error vs Query Cost\label{fig:gplus_srw_relE_vs_QueryCost_Synthetic}}{
		\vspace{-0.2in}
		\includegraphics[scale=0.25]{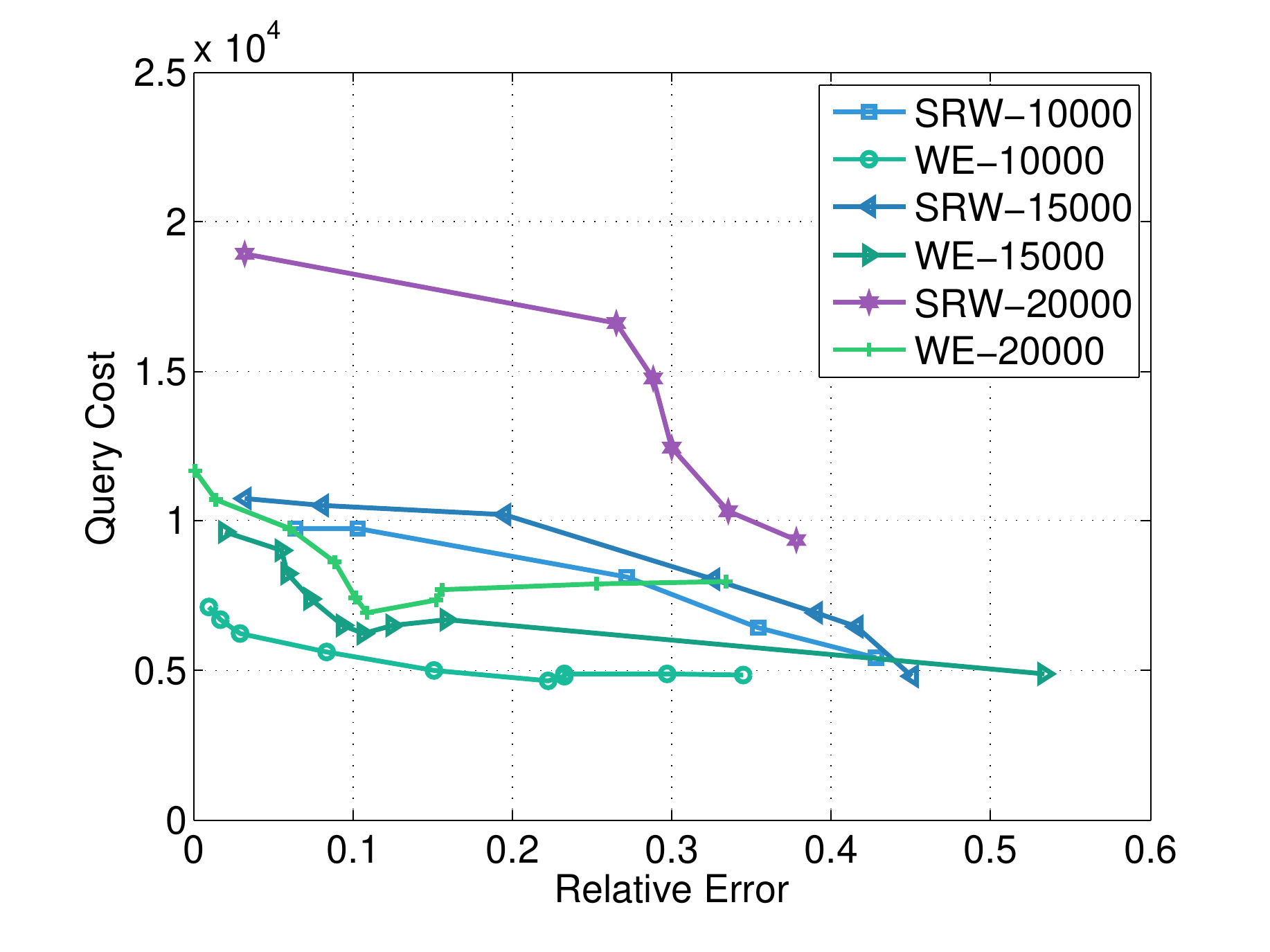}
		\hspace{-0.4in}
	}
	\quad 
	\subcaptionbox{Relative Error vs Number of Samples \label{fig:gplus_srw_relE_vs_NumSamples_Synthetic}}{
		\vspace{-0.2in}
		\includegraphics[scale=0.25]{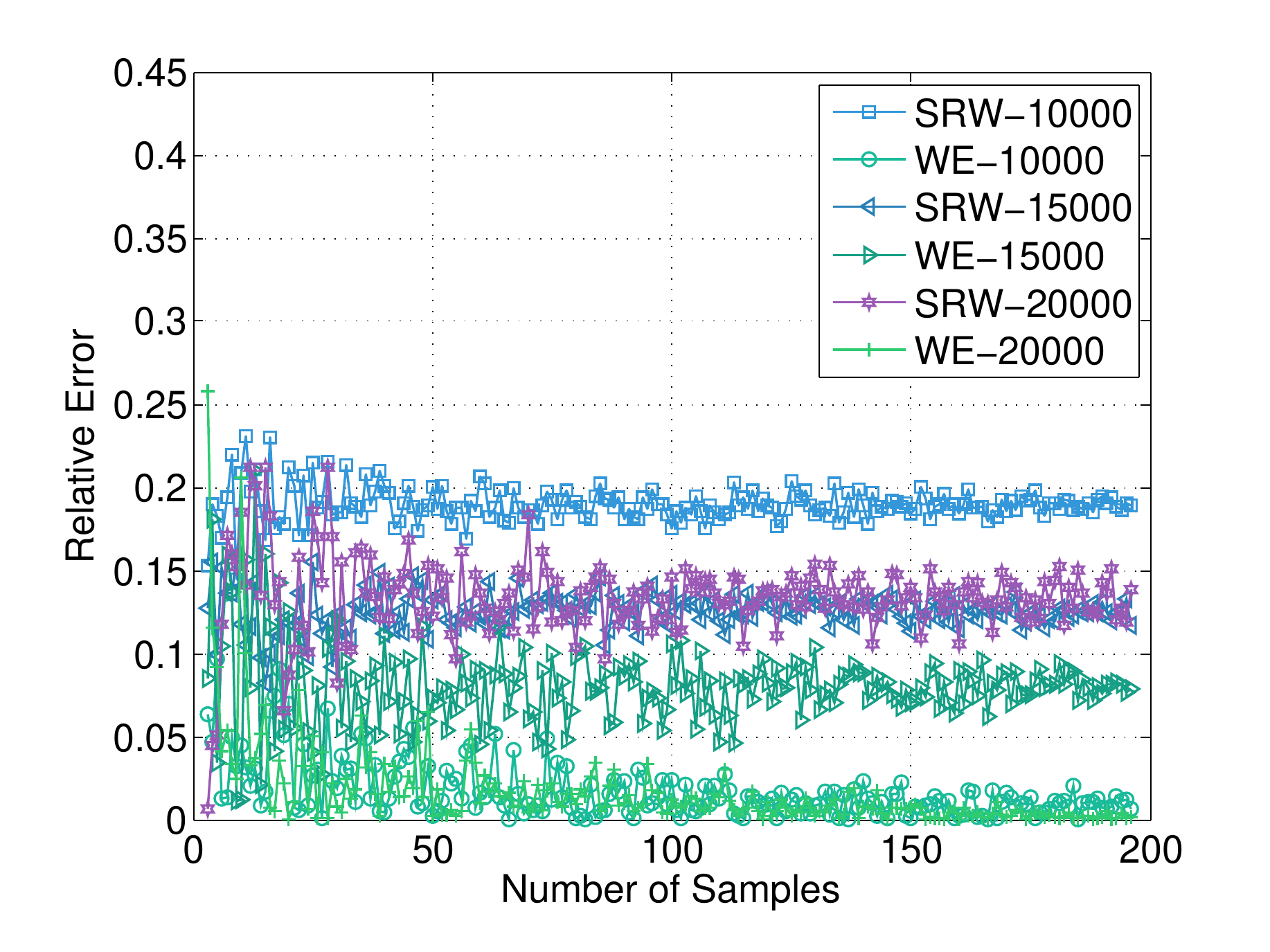}
		\hspace{-0.5in}
	}
	\vspace{-0.1in}
	\caption{Average Degree Estimation in Synthetic graphs (SRW).}\label{fig:syntheticRes}
	%\vspace{-0.2in}
\end{figure}

\begin{figure}[ht]
 \centering
	\hspace{-0.4in}
        \subcaptionbox{Scale-Free Graph: Comparison of PDF \label{fig:rev2BAPDF}}{
                \vspace{-0.2in}
                \includegraphics[scale=0.26]{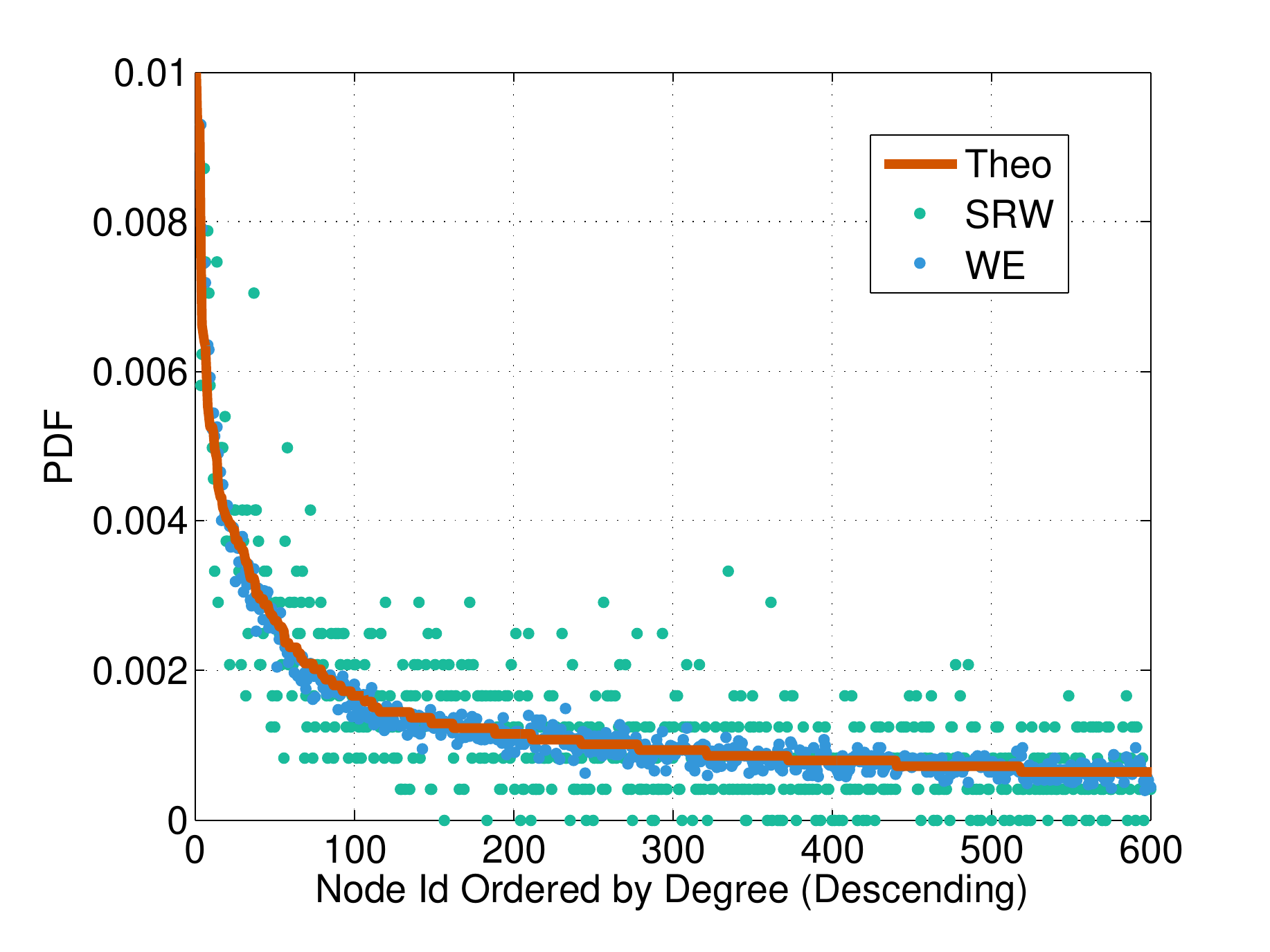}
                \hspace{-0.4in}
        }
        \quad 
        \subcaptionbox{Scale-Free Graph : Comparison of CDF \label{fig:rev2BACDF}}{
                \vspace{-0.2in}
                \includegraphics[scale=0.26]{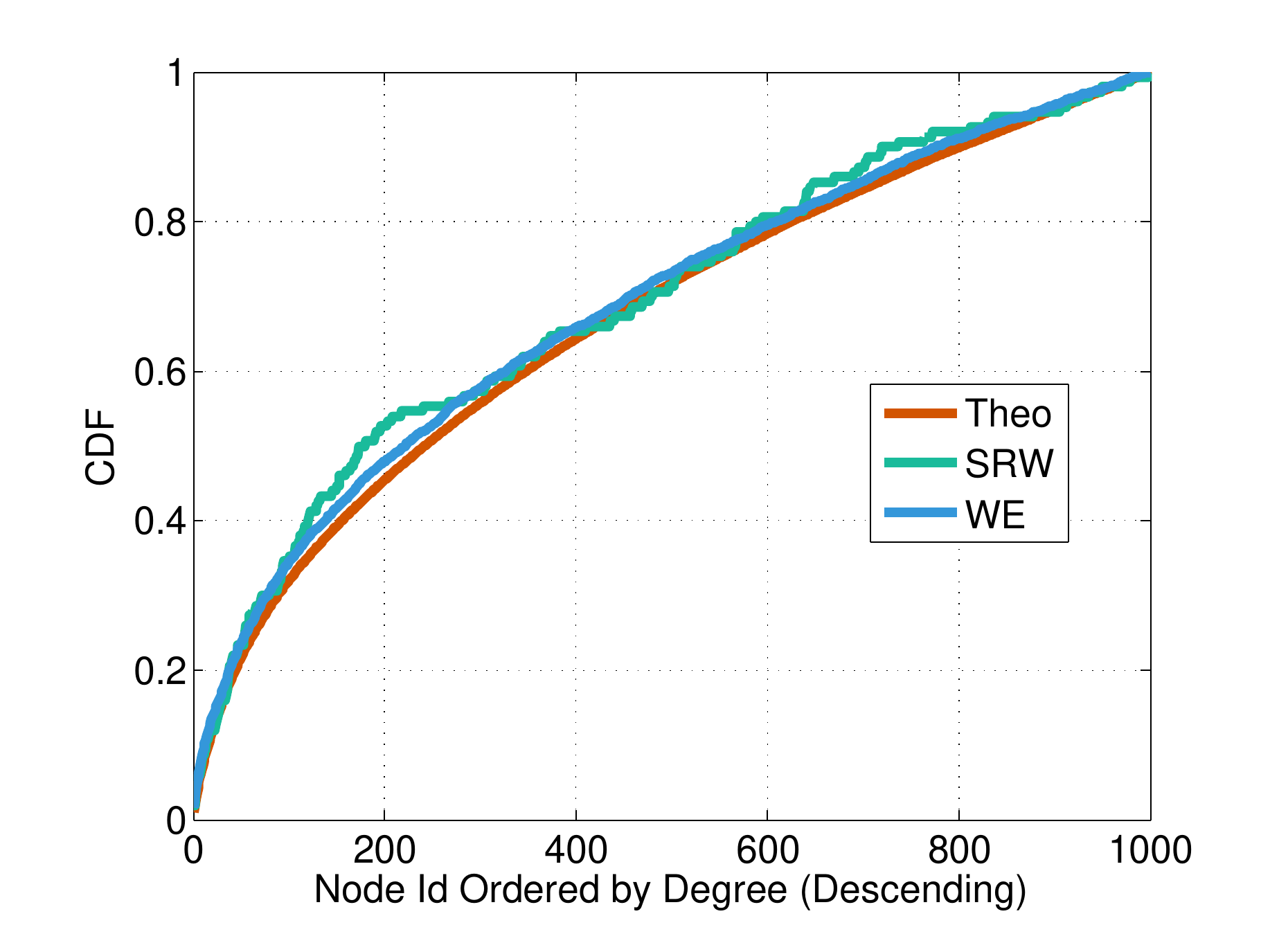}
                \hspace{-0.4in}
        }
        %\vspace{-0.1in}
        \caption{Comparing Sampling Distribution for Scale-Free Graph}
        \label{fig:rev2CompareBiasBA}
\end{figure}

\bibliographystyle{abbrv}
\bibliography{rw}  

\begin{thebibliography}{10}

\bibitem{Airoldi:2005:SAP:1117454.1117457}
E.~M. Airoldi.
\newblock Sampling algorithms for pure network topologies.
\newblock {\em SIGKDD Explorations}, 7:13--22, 2005.

\bibitem{almendral2007dynamical}
J.~A. Almendral and A.~D{\'\i}az-Guilera.
\newblock Dynamical and spectral properties of complex networks.
\newblock {\em New Journal of Physics}, 9(6):187, 2007.

\bibitem{Alon2008}
N.~Alon, C.~Avin, M.~Koucky, G.~Kozma, Z.~Lotker, and M.~R. Tuttle.
\newblock Many random walks are faster than one.
\newblock In {\em SPAA}, 2008.

\bibitem{backstrom2012four}
L.~Backstrom, P.~Boldi, M.~Rosa, J.~Ugander, and S.~Vigna.
\newblock Four degrees of separation.
\newblock In {\em Proc. of the 3rd Annual ACM Web Science Conf.}, pages 33--42.
  ACM, 2012.

\bibitem{Barabasi:1999}
A.-L. Barabasi and R.~Albert.
\newblock Emergence of scaling in random networks.
\newblock {\em Science}, 286(5439):509--512, 1999.

\bibitem{bollobas2004diameter}
B.~Bollob{\'a}s and O.~Riordan.
\newblock The diameter of a scale-free random graph.
\newblock {\em Combinatorica}, 24(1):5--34, 2004.

\bibitem{bonneau2009prying}
J.~Bonneau, J.~Anderson, and G.~Danezis.
\newblock Prying data out of a social network.
\newblock In {\em Social Network Analysis and Mining, 2009. ASONAM'09.
  International Conference on Advances in}, pages 249--254. IEEE, 2009.

\bibitem{catanese2011crawling}
S.~A. Catanese, P.~De~Meo, E.~Ferrara, G.~Fiumara, and A.~Provetti.
\newblock Crawling facebook for social network analysis purposes.
\newblock In {\em Proceedings of the international conference on web
  intelligence, mining and semantics}, page~52. ACM, 2011.

\bibitem{chau2007parallel}
D.~H. Chau, S.~Pandit, S.~Wang, and C.~Faloutsos.
\newblock Parallel crawling for online social networks.
\newblock In {\em Proceedings of the 16th international conference on World
  Wide Web}, pages 1283--1284. ACM, 2007.

\bibitem{Cohen:2003}
R.~{Cohen} and S.~{Havlin}.
\newblock {Scale-Free Networks Are Ultrasmall}.
\newblock {\em Phys. Rev. Lett.}, 90, 2003.

\bibitem{Cowles96markovchain}
M.~K. Cowles and B.~P. Carlin.
\newblock Markov chain monte carlo convergence diagnostics: A comparative
  review.
\newblock {\em Journal of the American Statistical Association}, 91:883--904,
  1996.

\bibitem{Dasgupta:2007}
A.~Dasgupta, G.~Das, and H.~Mannila.
\newblock A random walk approach to sampling hidden databases.
\newblock In {\em SIGMOD}, 2007.

\bibitem{Geyer1992}
C.~J. Geyer.
\newblock Practical markov chain monte carlo.
\newblock {\em Statistical Science}, 1992.

\bibitem{Gilks99}
W.~R. Gilks.
\newblock {\em Markov Chain Monte Carlo In Practice}.
\newblock Chapman and Hall/CRC, 1999.

\bibitem{Gjoka2010}
M.~Gjoka, M.~Kurant, C.~T. Butts, and A.~Markopoulou.
\newblock Walking in facebook: A case study of unbiased sampling of osns.
\newblock In {\em INFOCOM}, 2010.

\bibitem{networkx:2008}
A.~A. Hagberg, D.~A. Schult, and P.~J. Swart.
\newblock Exploring network structure, dynamics, and function using {NetworkX}.
\newblock In {\em Proc. of the 7th Python in Science Conf.}, Aug. 2008.

\bibitem{Jerrum:1988:CRM:62212.62234}
M.~Jerrum and A.~Sinclair.
\newblock Conductance and the rapid mixing property for markov chains: The
  approximation of permanent resolved.
\newblock In {\em STOC}, 1988.

\bibitem{Jin}
L.~Jin, Y.~Chen, P.~Hui, C.~Ding, T.~Wang, A.~V. Vasilakos, B.~Deng, and X.~Li.
\newblock Albatross sampling: robust and effective hybrid vertex sampling for
  social graphs.
\newblock In {\em MobiArch}, 2011.

\bibitem{301511}
R.~E. Kass, B.~P. Carlin, A.~Gelman, and R.~M. Neal.
\newblock {Markov Chain Monte Carlo in Practice: A Roundtable Discussion}.
\newblock {\em American Statistician}, 52:93--100, 1998.

\bibitem{Katzir2011}
L.~Katzir, E.~Liberty, and O.~Somekh.
\newblock {Estimating sizes of social networks via biased sampling}.
\newblock In {\em WWW}, 2011.

\bibitem{Kleinberg:2000}
J.~Kleinberg.
\newblock The small-world phenomenon: An algorithmic perspective.
\newblock In {\em Proc. of the Thirty-second Annual ACM Symposium on Theory of
  Computing}. ACM, 2000.

\bibitem{Kurant}
M.~Kurant, M.~Gjoka, C.~T. Butts, and A.~Markopoulou.
\newblock Walking on a graph with a magnifying glass: stratified sampling via
  weighted random walks.
\newblock In {\em SIGMETRICS}, 2011.

\bibitem{kwak2010twitter}
H.~Kwak, C.~Lee, H.~Park, and S.~Moon.
\newblock What is twitter, a social network or a news media?
\newblock In {\em Proc. of the 19th int. conf. on World wide web}, pages
  591--600. ACM, 2010.

\bibitem{Lee:2012:BRW:2254756.2254795}
C.-H. Lee, X.~Xu, and D.~Y. Eun.
\newblock Beyond random walk and metropolis-hastings samplers: Why you should
  not backtrack for unbiased graph sampling.
\newblock SIGMETRICS, 2012.

\bibitem{Leskovec2006a}
J.~Leskovec and C.~Faloutsos.
\newblock Sampling from large graphs.
\newblock In {\em SIGKDD}, 2006.

\bibitem{Levin:2008}
D.~A. Levin, Y.~Peres, and E.~L. Wilmer.
\newblock {\em {Markov Chains and Mixing Times}}.
\newblock American Mathematical Society, 2008.

\bibitem{Liben-nowell05analgorithmic}
D.~Liben-nowell and E.~D. Demaine.
\newblock An algorithmic approach to social networks.
\newblock Technical report, PhD thesis at MIT References 118 Science and
  Artificial Intelligence Laboratory, 2005.

\bibitem{Lovasz1993}
L.~Lov\'{a}sz.
\newblock {Random walks on graphs: A survey}.
\newblock {\em Combinatorics, Paul Erdos is Eighty}, 2(1):1--46, 1993.

\bibitem{Meyn:2009:MCS:1550713}
S.~Meyn and R.~L. Tweedie.
\newblock {\em Markov Chains and Stochastic Stability}.
\newblock Cambridge University Press, 2nd edition, 2009.

\bibitem{mislove2007measurement}
A.~Mislove, M.~Marcon, K.~P. Gummadi, P.~Druschel, and B.~Bhattacharjee.
\newblock Measurement and analysis of online social networks.
\newblock In {\em SIGCOMM}, 2007.

\bibitem{Mohaisena}
A.~Mohaisen, A.~Yun, and Y.~Kim.
\newblock {Measuring the mixing time of social graphs}.
\newblock In {\em SIGCOMM}, 2010.

\bibitem{nazir2008unveiling}
A.~Nazir, S.~Raza, and C.-N. Chuah.
\newblock Unveiling facebook: a measurement study of social network based
  applications.
\newblock In {\em Proceedings of the 8th ACM SIGCOMM conference on Internet
  measurement}, pages 43--56. ACM, 2008.

\bibitem{Ribeiro:2010:ESG:1879141.1879192}
B.~Ribeiro and D.~Towsley.
\newblock Estimating and sampling graphs with multidimensional random walks.
\newblock In {\em SIGCOMM}, 2010.

\bibitem{robson1964sample}
D.~Robson and H.~Regier.
\newblock Sample size in petersen mark--recapture experiments.
\newblock {\em Transactions of the American Fisheries Society}, 93(3):215--226,
  1964.

\bibitem{seary2003spectral}
A.~Seary and W.~Richards.
\newblock Spectral methods for analyzing and visualizing networks: an
  introduction.
\newblock {\em Dynamic Social Network Modeling and Analysis}, pages 209--228,
  2003.

\bibitem{Shyu:2013}
E.~Shyu.
\newblock Diameter bounds and eigenvalues.
\newblock Technical report, MIT, 2013.

\bibitem{tsiatas2013spectral}
A.~Tsiatas, I.~Saniee, O.~Narayan, and M.~Andrews.
\newblock Spectral analysis of communication networks using dirichlet
  eigenvalues.
\newblock In {\em Proceedings of the 22nd international conference on World
  Wide Web}, pages 1297--1306. International World Wide Web Conferences
  Steering Committee, 2013.

\bibitem{vilnis2013markov}
L.~Vilnis.
\newblock Markov chain monte carlo, mixing, and the spectral gap.
\newblock 2013.

\bibitem{wondracek2010practical}
G.~Wondracek, T.~Holz, E.~Kirda, and C.~Kruegel.
\newblock A practical attack to de-anonymize social network users.
\newblock In {\em Security and Privacy (SP), 2010 IEEE Symposium on}, pages
  223--238. IEEE, 2010.

\bibitem{ye2010crawling}
S.~Ye, J.~Lang, and F.~Wu.
\newblock Crawling online social graphs.
\newblock In {\em Web Conference (APWEB), 2010 12th International
  Asia-Pacific}, pages 236--242. IEEE, 2010.

\bibitem{nan2011}
N.~Zhang and G.~Das.
\newblock Exploration of deep web repositories.
\newblock In {\em Proc. of the VLDB Endowment (VLDB), Tutorial}, 2011.

\bibitem{Das:2013:FRW:2510649.2511334}
Z.~Zhou, N.~Zhang, Z.~Gong, and G.~Das.
\newblock Faster random walks by rewiring online social networks on-the-fly.
\newblock ICDE, 2013.

\bibitem{Zhu:2013}
Z.~A. Zhu, S.~Lattanzi, and V.~S. Mirrokni.
\newblock A local algorithm for finding well-connected clusters.
\newblock {\em CoRR}, 2013.

\end{thebibliography}
\end{document}